\newcommand{\blind}{1}
\theoremstyle{plain}
\newtheorem{thm}{Theorem}
\newtheorem{lem}[thm]{Lemma}
\def\scD{\mathcal{D}}
\def\scP{\mathcal{P}}
\def\scL{\mathcal{L}}
\newcommand{\expct}{\mathbb{E}}
\newcommand{\cov}{\mathbb{C}\mbox{ov}}
\newcommand{\kl}{\mathrm{KL}}
\newcommand{\gam}{\textit{Ga}}
\newcommand{\scB}{\mathcal{B}}
\newcommand{\scF}{\mathcal{F}}
\newcommand{\scT}{\mathcal{T}}
\newcommand{\scU}{\mathcal{U}}
\newcommand{\iid}{\stackrel{\mbox{\tiny IID}}{\sim}}
\newcommand{\bbR}{\mathbb{R}}
\def\bbP{\mathbb{P}}
\newcommand{\del}[1]{\frac{\partial}{\partial#1}}
\newcommand{\deltwo}[1]{\frac{\partial^2}{\partial#1^2}}
\newcommand{\deltwom}[2]{\frac{\partial^2}{\partial#1\partial#2}}
\newcommand{\bbB}{\mathbb{B}}
\newcommand{\bbH}{\mathbb{H}}
\newcommand{\bbC}{\mathbb{C}}
\newcommand{\Holder}{\mathcal{H}}
\def\lgp{\mathrm{LGP}}
\def\alphalo{\underline\alpha}
\def\alphahi{\overline\alpha}
\def\sigmalo{\underline\sigma}
\def\sigmahi{\overline\sigma}
\def\xilo{\underline\xi}
\def\norm{\mathrm{norm}}
\DeclareMathOperator*{\plim}{plim}
\def\diam{\mathrm{diam}}
\def\ti{\alpha_+}
\def\si{\zeta}
\def\Thetao{\mathring \Theta}
\newcommand{\mrow}[2]{\multirow{#1}{*}{#2}}
\newcommand{\nab}[1]{\nabla_{\kern -.2em#1}}
\def\taulo{\underline \tau}
\newcounter{lem5counter}
\begin{document}

\def\spacingset#1{\renewcommand{\baselinestretch}%
{#1}\small\normalsize} \spacingset{1}


\if1\blind
{
  \title{\bf Heavy-Tailed Density Estimation}
  \author{Surya T Tokdar\thanks{
    This research was partially supported by grants DMS1613173 and DMS2014861 from the National Science Foundation}\hspace{.2cm}\\
    Department of Statistical Science, Duke University\\
    and \\
    Sheng Jiang \\
    Department of Statistics, University of California, Santa Cruz\\
    and\\
    Erika L Cunningham \\
    Department of Statistical Science, Duke University}
    \date{}
  \maketitle
} \fi

\if0\blind
{
  \bigskip
  \bigskip
  \bigskip
  \begin{center}
    {\LARGE\bf Heavy-Tailed Density Estimation}
\end{center}
  \medskip
} \fi

\begin{abstract}
A novel statistical method is proposed and investigated for estimating a heavy tailed density under mild smoothness assumptions. Statistical analyses of heavy-tailed distributions are susceptible to the problem of sparse information in the tail of the distribution getting washed away by unrelated features of a hefty bulk. The proposed Bayesian method avoids this problem by incorporating smoothness and tail regularization through a carefully specified semiparametric prior distribution, and is able to consistently estimate both the density function and its tail index at near minimax optimal rates of contraction. A joint, likelihood driven estimation of the bulk and the tail is shown to help improve uncertainty assessment in estimating the tail index parameter and offer more accurate and reliable estimates of the high tail quantiles compared to thresholding methods.
\end{abstract}

\noindent%
{\it Keywords:}  Semiparametric estimation, logistic Gaussian processes, posterior contraction, tail index estimation, regular variation. 

\spacingset{1} 

\section{Introduction}
\label{chap:intro}
For a heavy-tailed density with subexponential tail decay, the exceedance probabilities of a sample sum and a sample maximum are of the same order. A random sample drawn from such a density is likely to contain a small fraction of extreme observations whose magnitudes overshadow the sum total of the remaining magnitudes. This property is expressive of many naturally occurring phenomena, e.g., precipitation \citep{katz2002statistics}, financial returns or insurance loss \citep{embrechts2013modelling}, and material or fatigue strength \citep{castillo2012extreme}
. 
However, statistically estimating a heavy tailed density 
from a random sample 
could be challenging if estimation was sought under only smoothness conditions. 
Two densities can be arbitrarily close in total variation distance while displaying entirely different tail decay rates. 
Estimation methods with rich shape flexibility and guaranteed $L^1$ estimation consistency may provide no meaningful inference on the tails of the distribution; see \cite{markovich2007nonparametric, li2019posterior} for detailed discussions and cautionary results on kernel mixture models.

When interest focuses on estimating only tail features, e.g., extrapolating to high quantiles from limited data, it is common to exclude all but the most extreme observations so that the tail speaks for itself.
The Pickands-Balkema-de Haan Theorem \citep{balkema1974residual, pickands1975statistical} justifies the so-called {\it peaks-over-threshold} estimation methods, where a generalized Pareto distribution (GPD) is fitted to the subsample of observations exceeding a high threshold; see \cite{de2010parameter} for a review. It also motivates nonparametric methods \citep{hill1975simple, pickands1975statistical, dekkers1989moment, alves2001location}  based on only high sample quantiles for estimating the asymptotic tail decay rate of {densities $f(y)$ whose survival function $\bar F(y) = \int_y^\infty f(t) dt$ is regularly varying, i.e., 
\begin{equation}
\bar F(y) = y^{-\alpha}L(y),~y > 0,
\label{eq:power law}
\end{equation}
for some $\alpha > 0$ where $L(y)$ is a slowly varying function, i.e., $\lim_{y \to \infty} L(ay)/L(y) = 1$ for every $a > 0$. We shall call such an $f(y)$ a regularly varying density with {\it tail index} $\alpha$, which may be recovered from $f$ as $\alpha = \ti(f):=-\lim_{y\to \infty}\frac{\log \bar F(y)}{\log y}$.
} 

A data driven threshold selection is critical to the analysis, but an optimal choice proves a steep challenge in practice. Diagnostic plots may point to multiple regimes of transition to the tail. Automatic threshold estimation methods gloss over such ambiguity with unverifiable tail assumptions and fail to account for the associated uncertainty in subsequent analyses \citep{scarrot2012review}. 
Several methods have been proposed to estimate the entire density function by splicing together a mixture model for the bulk with a GPD tail attachment \citep{tancredi2006accounting, macdonald2011flexible, do2012semiparametric}. Although, in theory, these methods partially account for threshold uncertainty, they employ heuristic estimation methods supported by little mathematical analysis.

Toward a more formal statistical methodology we consider the semiparametric model
\begin{equation}
f(y) = p_{\theta,\psi}(y) := g_\theta(y)\psi(G_\theta(y)), \quad y > 0,
\label{eq:model}
\end{equation}
where $g_\theta(y) = \sigma^{-1}\{1 + y/(\alpha\sigma)\}^{-(\alpha + 1)}$, $G_\theta(y) = \int_0^y g_\theta(z)dz$, $y > 0$, are the density and distribution functions of a GPD with location 0, scale $\sigma$ and shape $1/\alpha$; here $\theta = (\alpha, \sigma)\in (0,\infty)^2$ is an unknown vector, and, $\psi$ is an unknown density function on $(0,1)$. Under this model, $Y \sim f(y)$ if and only if $U := G_\theta(Y) \sim \psi$, and, $\ti(f) = \ti(g_\theta) = \alpha$ under a regularity condition on $\psi(u)$ as $u\to 1$ (Lemma \ref{lem:tail}). \cite{markovich2007nonparametric} offers a thorough analysis of an estimation approach where one first obtains an estimate $\hat\theta$ of $\theta$ by thresholding data $Y_1, \ldots, Y_n$ at a high quantile, and then a nonparametric estimate $\hat\psi$ of $\psi$ is obtained based on the transformed data $\hat U_i = G_{\hat\theta}(Y_i)$, $i = 1, \ldots, n$. With $\hat\psi$ estimated by a variable kernel mixture, the back-transformed density $\hat f = p_{\hat\theta, \hat\psi}$ offers optimal estimation of $f$ under the mean integrated square error loss. Such a two-stage approach does not account for threshold choice uncertainty in the estimation of $f$ or any subsequent analyses. 
It also fails to take advantage of the estimate of the bulk to improve tail estimation. 

We consider a likelihood-based alternative approach where $\theta$ and $\psi$ are jointly estimated under a Bayesian extension of \eqref{eq:model}. A Bayesian formulation immediately facilitates information sharing between the bulk and the tail and offers a joint assessment of uncertainty of the extreme and non-extreme features. But important new questions arise on both Bayesian and frequentist sides. What is a principled way to choose a prior distribution on the nonparametric density $\psi$? What are the statistical properties of the resulting estimates? These questions could be partially addressed by examining asymptotic concentration properties of the posterior distribution resulting from a specific prior allocation. 
%
%
We show that with a logistic Gaussian process (LGP) prior on $\psi$ \citep{leonard1978density, lenk1988logistic, lenk1991towards, tokdar2007towards}, the posterior distribution on $f$ given a random sample $Y_1,\ldots, Y_n$ from an $f^*$ concentrates around $f^*$ whenever the latter is continuous and regularly varying. Moreover, the posterior distributions on $f$ and $\ti(f)$ simultaneously concentrate around $f^*$ and $\ti(f^*)$ 
%
{at polynomially fast contraction rates that are nearly minimax optimal}, whenever $f^*=p_{\theta^*,\psi^*}$ with a  sufficiently smooth $\psi^*$ . 
It is significant that the LGP prior enables the likelihood function to preserve relevant information on tail quantities; no other example has been worked out before \citep{li2019posterior}. Moreover, guaranteeing posterior contraction across a large model subspace is tantamount to adopting the principle of intersubjective prior allocation to facilitate asymptotic merger of beliefs \citep{diaconis1986consistency}.


Computational details are provided for an efficient and streamlined implementation 
making it feasible to analyze data sets consisting of several thousand records. Finite sample properties are examined with an extensive simulation study which corroborates the asymptotic analysis result of accurate tail index estimation under strong GPD tail match, and complements it by revealing that even {under deviations} from a GPD tail, estimates of high tail quantiles are much superior compared to those obtained from thresholding methods. An analysis of daily precipitation records is presented to highlight potential benefits of the joint semiparametric estimation in mitigating ambiguity regarding threshold choice and providing tight but robust estimates of high tail quantiles.

\section{Estimation model}
\subsection{Tail index expression}
We restrict to the case where the support of $f$ is $[a,\infty)$ for a known finite number $a$, which is set to be zero without any loss of generality. The primary goal of the analysis is taken to be estimating the entire density $f$ accurately in $L^1$ or comparable metrics, while also accurately estimating its heavy right tail. Toward this, we first show that the GPD-transformation model \eqref{eq:model} is expressive of an entire range of polynomial tail decay rates under a regularity assumption on $\psi$. 

Let $\scP$ denote the class of densities $\psi$ on $(0,1)$ satisfying $\bar \Psi(1 - u) = u\tilde L(1/u)$ for some slowly varying function $\tilde L$; here $\Psi$ denotes the distribution function of $\psi$ and $\bar \Psi = 1 - \Psi$. Note that if $L(y)$ is slowly varying then
\begin{equation}
\lim_{y \to \infty} \frac{L(a(y)y)}{L(y)} = 1
\label{eq:slow}
\end{equation}
for any function $a(y)$ with a limit $a_\infty := \lim_{y \to \infty} a(y) \in (0,\infty)$.

\begin{lem}
\label{lem:tail}
If $\theta = (\alpha, \sigma) \in (0,\infty)^2$ and $\psi \in \scP$ then $f = p_{\theta, \psi}$ is regularly varying with tail index $\alpha$. Conversely, if $f$ is a regularly varying density on $(0,\infty)$ with tail index $\alpha > 0$ then for every $\sigma > 0$, $f = p_{(\alpha,\sigma), \psi}$ for some $\psi \in \scP$. 
 \end{lem}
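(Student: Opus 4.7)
The key observation is that the GPD transformation $U = G_\theta(Y)$ links the survival functions by $\bar F(y) = \bar\Psi(G_\theta(y)) = \bar\Psi\bigl(1 - \bar G_\theta(y)\bigr)$, and that $\bar G_\theta(y) = \{1 + y/(\alpha\sigma)\}^{-\alpha}$ is itself regularly varying with index $-\alpha$, with explicit inverse $\bar G_\theta^{-1}(u) = \alpha\sigma(u^{-1/\alpha} - 1)$. Both directions of the lemma reduce to checking that a slowly varying function remains slowly varying after composition with these explicit maps, via the stability property \eqref{eq:slow}.

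For the forward direction, assume $\psi \in \scP$, so $\bar\Psi(1-u) = u\tilde L(1/u)$ with $\tilde L$ slowly varying. Plugging $u = \bar G_\theta(y)$ into the survival-function identity yields
\begin{equation*}
\bar F(y) = \bar G_\theta(y) \cdot \tilde L\bigl(1/\bar G_\theta(y)\bigr) = y^{-\alpha} L(y),
\end{equation*}
where I would define $L(y) := y^\alpha \bar G_\theta(y) \cdot \tilde L(1/\bar G_\theta(y))$. The first factor $y^\alpha\bar G_\theta(y) = (\alpha\sigma)^\alpha(1 + \alpha\sigma/y)^{-\alpha}$ tends to the positive constant $(\alpha\sigma)^\alpha$. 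For the ratio $L(ay)/L(y)$, the $\tilde L$ term is $\tilde L(\{1+ay/(\alpha\sigma)\}^\alpha)/\tilde L(\{1+y/(\alpha\sigma)\}^\alpha)$; writing the larger argument as $c(y)$ times the smaller with $c(y) \to a^\alpha \in (0,\infty)$ and invoking \eqref{eq:slow} gives convergence to $1$, so $L$ is slowly varying and $\ti(f) = \alpha$.

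For the converse, fix $\sigma > 0$ and define $\psi$ to be the density of $U = G_\theta(Y)$ under $Y\sim f$; the change-of-variable formula immediately gives $f = p_{(\alpha,\sigma),\psi}$. To show $\psi \in \scP$, use the explicit quantile $\bar G_\theta^{-1}(u) = \alpha\sigma(u^{-1/\alpha}-1)$ to write $\bar\Psi(1-u) = \bar F(\alpha\sigma(u^{-1/\alpha}-1))$, and then apply the regular variation of $\bar F$ to peel off the $u$ factor:
\begin{equation*}
\bar\Psi(1-u) = (\alpha\sigma)^{-\alpha}(u^{-1/\alpha}-1)^{-\alpha}\, L\bigl(\alpha\sigma(u^{-1/\alpha}-1)\bigr) = u \cdot \tilde L(1/u),
\end{equation*}
where $\tilde L(v) := (\alpha\sigma)^{-\alpha}(1 - v^{-1/\alpha})^{-\alpha} L(\alpha\sigma(v^{1/\alpha}-1))$. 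Slow variation of $\tilde L$ follows because $(1-v^{-1/\alpha})^{-\alpha}\to 1$ and because $\alpha\sigma((av)^{1/\alpha}-1)$ is asymptotically $a^{1/\alpha}$ times $\alpha\sigma(v^{1/\alpha}-1)$, so \eqref{eq:slow} again gives the required ratio limit of $1$.

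There is no deep obstacle here; the entire argument is careful bookkeeping around the map $u \mapsto \bar G_\theta^{-1}(u)$ and its inverse, combined with one invocation of \eqref{eq:slow} in each direction. The only mildly delicate point is isolating the factor of $u$ in $\bar\Psi(1-u)$ in the converse so that what remains is genuinely slowly varying in $1/u$; this is handled by the algebraic identity $(u^{-1/\alpha}-1)^{-\alpha} = u(1 - u^{1/\alpha})^{-\alpha}$, which makes the $u$ factor explicit and leaves a bounded correction that tends to $1$.
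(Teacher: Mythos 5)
Your proposal is correct and follows essentially the same route as the paper's proof: both directions go through the survival-function identity $\bar F(y)=\bar\Psi(1-\bar G_\theta(y))$, the explicit forms of $\bar G_\theta$ and $\bar G_\theta^{-1}(u)=\alpha\sigma(u^{-1/\alpha}-1)$, and a single invocation of \eqref{eq:slow} in each direction, with the same factorization isolating $y^{-\alpha}$ (forward) and $u$ (converse). Your $L$ and $\tilde L$ coincide with the paper's up to notation, so there is nothing further to add.
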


\begin{proof}
If $f = p_{\theta, \psi}$ then $\bar F(y) = \bar \Psi(1 - \bar G_\theta(y)) = \bar G_\theta(y)\tilde L(1/\bar G_\theta(y))$, with $\bar G_\theta(y) = 1 - G_\theta(y) = y^{-\alpha}L_\theta(y)$, $L_\theta(y) = \{1/y + 1/(\alpha\sigma)\}^{-\alpha} \to c_\theta:=(\alpha\sigma)^\alpha$ as $y \to \infty$. Therefore, $\bar F(y) = y^{-\alpha}L(y)$ where $L(y) = L_\theta(y)\tilde L(y^\alpha/L_\theta(y))$ is slowly varying 
by \eqref{eq:slow}.
Conversely, if $f$ is a regularly varying density on $(0,\infty)$ with tail index $\alpha > 0$ and $\theta = (\alpha, \sigma)$ for some $\sigma > 0$, then $f = p_{\theta, \psi}$ where
\begin{equation}
\psi(u) = \frac{f(G_\theta^{-1}(u))}{g_\theta(G_\theta^{-1}(u))},~~u \in (0,1).
\label{eq:psi}
\end{equation}
It is trivial to check that $\psi$ is a density on $(0,1)$ with $\textstyle \bar \Psi(1 - u) = \bar F(\bar G_\theta^{-1}(u)) = \bar F(\alpha\sigma(u^{-\frac1\alpha} - 1)) = u\tilde L(\frac1u)$
where $\tilde L(y) = \alpha\sigma\{1 - 1/y^{1/\alpha}\}^{-\alpha}L(\alpha\sigma\{y^{1/\alpha} - 1\})$, with $L$ denoting the slowly varying component of $\bar F$. By \eqref{eq:slow}, $\tilde L$ itself is a slowly varying function.
\end{proof}





Lemma \ref{lem:tail} says, with $\psi \in \scP$ the semiparametric model \eqref{eq:model} is fully expressive of all regularly varying densities on $(0,\infty)$ with tail index uniquely identified by the model parameter $\alpha$. It also says that the pair $(\sigma, \psi)$ is not uniquely identifiable. 
Although one could fix $\sigma$ and have both $\alpha$ and $\psi$ uniquely identified under \eqref{eq:model}, no obvious choice presents itself. 
Instead, we find it more useful to retain the scale expressiveness of the model to adjust for implicit shape preferences of any nonparametric prior on $\psi$. The LGP prior introduced below concentrates around $\psi$ functions such that the derivatives of $\log \psi$ are small in magnitude; a bias toward smooth functions being critical to statistical regularization. A flexible choice of the pairing $\sigma$ creates an important counterbalance. It offers an entire arc of equivalent $(\sigma, \psi)$ pairs for a given $f$, increasing the possibility that at least some of these pairs will be favorable to LGP shape bias and hence will enjoy high posterior concentration. For example, when $f = g_{(2,1)}$ the pair $(\sigma=1, \psi\equiv1)$ presents a favorable representation. But if one now adds a little contamination a different pair with a $\sigma \ne 1$ could be more suitable, even if the contamination does not alter the tail behavior. Figure \ref{fig:scale} shows a concrete example with a gamma contamination. 

\begin{figure*}[t]
\centering
\includegraphics[width=\textwidth]{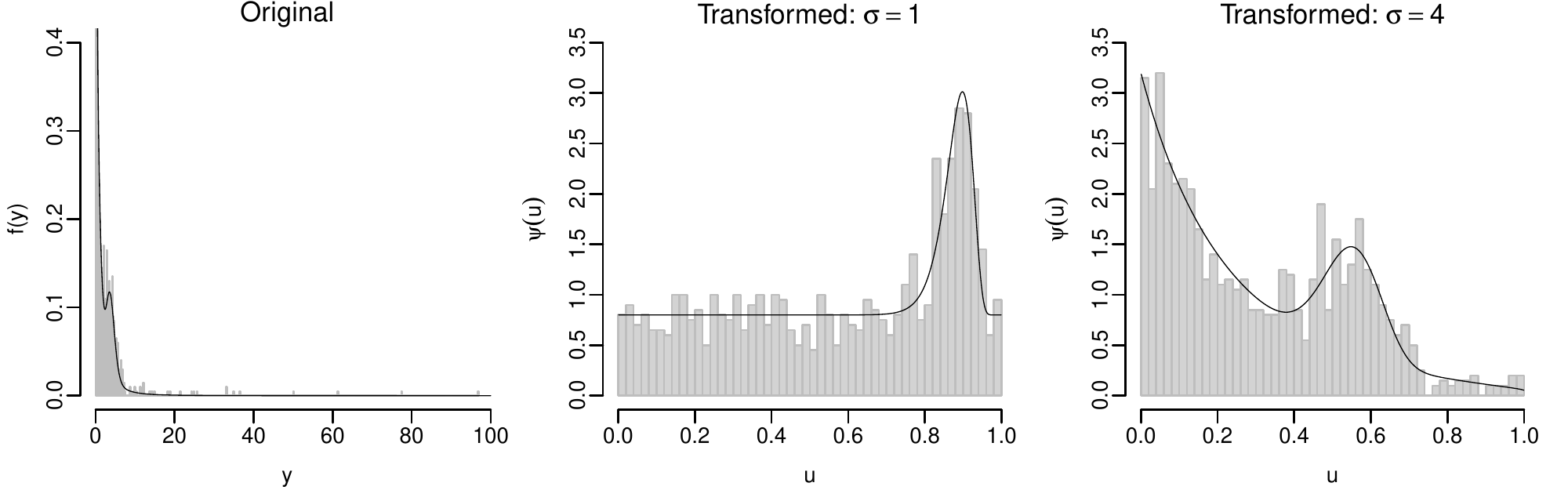}
\caption{Lack of identifiability of $(\sigma, \psi)$ and the importance of scale expressiveness. Left: graph of $f(y) = 0.8 \times g_{(2,1)}(y) + 0.2\times\tilde g(y)$ where $\tilde g(y)$ is the density of a gamma distribution with mean 4 and variance 1; overlaid on the histogram of a sample of size $n=1000$ drawn from the same. The tail of $f(y)$ is completely dominated by that of $g_{(2,1)}$. Remaining panels show graphs of $\psi(u)$ in \eqref{eq:psi} overlaid on the histogram of transformed data with $\alpha=2$ and two choices of the scale: $\sigma = 1$ (middle) and $\sigma = 4$ (right). The larger scale value produces a flatter $\psi$ which is more favorable to the LGP prior. For the data displayed here, the posterior concentrates around $(\alpha, \sigma) = (2.1, 4)$ with $[2.2,7.2]$ giving a 95\% interval for $\sigma$. 
}
\label{fig:scale}
\end{figure*}

Toward a Bayesian analysis, we choose a product prior distribution $\pi_\theta = \pi_\alpha\times \pi_\sigma$ on $\theta = (\alpha, \sigma)$, where $\pi_\sigma$ is the half-Cauchy distribution on $(0,\infty)$ and $\pi_\alpha$ is the distribution of $\alpha = \alphalo + (2 - \alphalo) \cdot e^{\zeta/1.5}$ with $\zeta$ distributed according to the standard logistic distribution on the real line. The choice of $\pi_\alpha$ restricts $\alpha > \alphalo$ with probability one, where $\alphalo > 0$ is treated as a hyperparameter to be fixed by the modeler. The numerical analyses presented in Section \ref{sec:finite-samp} were carried out with $\alphalo = 0.5$, for which the extreme value index $\xi = \frac1\alpha$ has unimodal density on $(0,2)$ with a gentle peak at $\xi = 0.5$ (i.e., $\alpha = 2$). We also experimented with $\alphalo = 0.1$ and all posterior estimates were found to be essentially the same as with $\alphalo=0.5$.


\subsection{LGP prior for $\psi$}\label{sec:lgpnonparprior}
Let $C[0,1]$ denote the space of real, continuous functions on $[0,1]$. For any $\omega \in C[0,1]$, its logistic transform $\scL(\omega)$, defined as
$(\scL\omega)(u) = \frac{e^{\omega(u)}}{\int_0^1 e^{\omega(t)}dt}, u \in (0,1),$
is a well defined probability density function on $(0,1)$. When $\omega$ is a Gaussian process with $\expct[\omega(u)] = \mu(u)$ and $\cov[\omega(u), \omega(v)] = c(u, v)$ such that $\omega \in C[0,1]$ with probability one, the probability law of the random density function $\scL(\omega)$ is called a logistic Gaussian process distribution, denoted $\lgp(\mu, c)$. 

We adopt a hierarchical LGP prior for $\psi$ in \eqref{eq:model}. Let $c_\lambda(u, v) = \exp\{-\lambda^2(u - v)^2\}$ denote the unit variance Gaussian covariance kernel with inverse length-scale parameter $\lambda > 0$. It is well known that if $\omega$ is a mean zero Gaussian process with covariance $\kappa^2 c_\lambda$ for some $\kappa > 0$, then $\omega \in C[0,1]$ with probability one, and hence, the probability distribution $\lgp(0, \kappa^2 c_\lambda)$ is well defined for every $\kappa > 0, \lambda > 0$. The prior on $\psi$ is implicitly defined by the hierarchy
\begin{equation}
\psi \sim \lgp(0, \kappa^2 c_\lambda),\quad (\kappa^2, \lambda ) \sim \pi_{\kappa^2} \times \pi_\lambda,
\label{eq:prior}
\end{equation}
with the distributions $\pi_{\kappa^2}$ and $\pi_\lambda$ on $(0,\infty)$ described below. 

It is clear that if $\omega \in C[0,1]$ and $\psi = \scL(\omega)$, then $\psi(1) := \lim_{u \to 1} \psi(u)$ exists and $\psi(1) \in (0,\infty)$. By mean value theorem $\bar \Psi(1 - u) = u\psi(t(u))$ for some $t(u) \in [1-u,1]$. Consequently, $\tilde L(y) = \psi(t(1/y))$, is slowly varying because  $\lim_{y \to \infty} \tilde L(y) = \psi(1) \in (0,\infty)$. Therefore, under the hierarchical LGP prior adopted here, $\Pr(\psi \in \scP) = 1$. Of course, the prior support of $\psi$ is actually smaller than $\scP$, because $\lim_{u \downarrow 0} u^{-1} \bar \Psi(1-u)  \in (0,\infty)$ almost surely under the prior, whereas $\scP$ contains densities $\psi$ where this limit may be zero, infinity or undefined. This may suggest that the induced prior distribution on $p_{\theta, \psi}$ may not have full support within the class of regularly varying densities. The theorem below reassures that no loss is incurred in a probabilistic sense. Below we assume $0 \le \alphalo < \alphahi \le \infty$ are such that $\alpha \in (\alphalo, \alphahi)$ with probability one under the prior $\pi_\theta$. 

\begin{thm}
\label{thm:kl}
Let $f^*$ be any bounded, continuous, regularly varying density on $(0,\infty)$ with tail index $\alpha^* \in (\alphalo, \alphahi)$. If $(\theta,\psi) \sim \pi_\theta\times \lgp(0, \kappa^2 c_\lambda)$ for some $\kappa > 0, \lambda > 0$, then for every $\epsilon > 0$,
$\Pr(d_{\kl}(f^*, p_{\theta,\psi}) < \epsilon) > 0,$
where $d_{\kl}(f,g) = \int f(y)\log\{{f(y)}/{g(y)}\}dy$ denotes the Kullback-Leibler divergence of $f$ from $g$. 
\end{thm}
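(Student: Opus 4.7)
My plan is to reduce the KL divergence to one between densities on $(0,1)$ via the change of variable $u = G_\theta(y)$, then approximate the resulting target by a smooth positive density and invoke standard LGP support results together with a continuity-in-$\theta$ argument.

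Fix any $\sigma^*$ in the support of $\pi_\sigma$ and set $\theta^* = (\alpha^*, \sigma^*)$; by Lemma \ref{lem:tail}, $f^* = p_{\theta^*, \psi^*}$ for $\psi^*$ as in \eqref{eq:psi}. The substitution $u = G_\theta(y)$ gives, for every $\theta$ and density $\psi$ on $(0,1)$,
\begin{equation*}
d_{\kl}(f^*, p_{\theta, \psi}) = d_{\kl}(\phi_\theta, \psi), \quad \phi_\theta(u) := \frac{f^*(G_\theta^{-1}(u))}{g_\theta(G_\theta^{-1}(u))},
\end{equation*}
and $\phi_{\theta^*} = \psi^*$. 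Since $f^*$ is bounded and the maps $(\theta,y)\mapsto g_\theta(y)$ and $(\theta,u)\mapsto G_\theta^{-1}(u)$ are continuous, a dominated convergence argument shows $\theta \mapsto \phi_\theta$ is $L^1$-continuous at $\theta^*$, so it suffices to exhibit a set of $(\theta,\psi)$ of positive prior probability on which $d_{\kl}(\phi_\theta, \psi) < \epsilon$.

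I would next approximate $\psi^*$ in KL by a density $\tilde\psi$ on $[0,1]$ with $\log\tilde\psi \in C[0,1]$: truncate $\psi^*$ to $[\eta, 1-\eta]$ where it is bounded and continuous, mollify to extend continuously to the full interval, and take a convex combination with the uniform density to enforce strict positivity. Integrability of $\psi^*\log\psi^*$, inherited from $f^*$ being bounded and $g_{\theta^*}$ sharing the same regular-variation index as $\bar F^*$, yields $d_{\kl}(\psi^*, \tilde\psi) < \epsilon/3$ via dominated convergence as the truncation and mollification widths shrink. Because $\log\tilde\psi$ is continuous on $[0,1]$, standard concentration-function arguments for the squared-exponential Gaussian process show
\begin{equation*}
\lgp(0, \kappa^2 c_\lambda)\bigl(\{\psi : \|\log\psi - \log\tilde\psi\|_\infty < \eta'\}\bigr) > 0
\end{equation*}
for any $\eta' > 0$ and all $(\kappa^2, \lambda)$ in an open subset of $(0,\infty)^2$ of positive $\pi_{\kappa^2}\times\pi_\lambda$-mass; integrating over the hyperprior gives positive marginal prior mass on this sup-log neighborhood of $\tilde\psi$.

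Finally, on the intersection of a $\pi_\theta$-neighborhood $V$ of $\theta^*$ with the LGP event,
\begin{equation*}
d_{\kl}(\phi_\theta, \psi) \le d_{\kl}(\phi_\theta, \tilde\psi) + \|\log\tilde\psi - \log\psi\|_\infty,
\end{equation*}
and continuity of $\theta \mapsto d_{\kl}(\phi_\theta, \tilde\psi)$ at $\theta^*$, which holds because $\log\tilde\psi$ is bounded, lets me choose $V$ small enough that the first term is below $\epsilon/2$, while $\eta' < \epsilon/2$ controls the second. Independence of the prior factors on $\theta$ and $\psi$ then yields positive prior probability on $\{d_{\kl}(f^*, p_{\theta,\psi}) < \epsilon\}$. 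The main obstacle is the approximation step: $\psi^*$ need not be bounded near $u = 1$, so controlling $d_{\kl}(\psi^*, \tilde\psi)$ requires exploiting the slowly-varying behaviour of $\bar\Psi^*$ granted by Lemma \ref{lem:tail} to ensure the truncation tails contribute only a small amount to the integral of $\psi^* \log \psi^*$.
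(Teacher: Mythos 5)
Your overall strategy---transform to densities on $(0,1)$, approximate the target by a density with continuous log, invoke sup-norm support of the Gaussian process, and handle $\theta$ through a neighbourhood of positive prior mass---can be made to work, but it differs from the paper's proof precisely at the place you yourself flag as the main obstacle, and as written that obstacle plus one continuity claim remain genuine gaps. The paper dissolves the difficulty at $u=1$ by \emph{not} centering at $\theta^*=(\alpha^*,\sigma^*)$: it picks $\theta_0=(\alpha_0,\sigma_0)$ with $\alphalo<\alpha_0<\alpha^*$, so that $g_{\theta_0}$ has a strictly heavier tail than $f^*$ and the transformed density $\psi_0$ is bounded, continuous, with $\psi_0(1)=0$. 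The approximation step then becomes a one-liner ($\psi_1=(1-\delta)\psi_0+\delta$ gives $d_{\kl}(\psi_0,\psi_1)\le-\log(1-\delta)$ and $\log\psi_1\in C[0,1]$), and no continuity in $\theta$ is needed at all: for every $\theta_0$ in the open set $\{(\alpha_0,\sigma_0):\alphalo<\alpha_0<\alpha^*,\ \sigma_0>0\}$ the conditional probability given $\theta=\theta_0$ is positive (the centering $\omega_1$ is allowed to depend on $\theta_0$), and the law of total probability finishes the proof. Your route keeps the exact-index representation and must therefore confront the possible unboundedness of $\psi^*$ near $1$ and prove continuity of $\theta\mapsto d_{\kl}(\phi_\theta,\tilde\psi)$; both are extra work the paper's choice avoids.

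Both points can be repaired, but you have not done so. (i) For the approximation step, what you actually need is $\int_0^1\psi^*\log^+\psi^*\,du<\infty$, equivalently $\int f^*\log^+(f^*/g_{\theta^*})\,dy<\infty$; this follows from boundedness of $f^*$ together with the finite log-moment $\int f^*(y)\log(1+y)\,dy<\infty$ implied by $\alpha^*>0$ (the ``same regular-variation index'' is not the operative fact). With that, fix the uniform mixing weight $\delta$ first, so $\tilde\psi\ge\delta$ and the bulk contribution to $d_{\kl}(\psi^*,\tilde\psi)$ is at most $-\log(1-\delta)$, and only then shrink the truncation window, so the tail contribution is dominated by $\psi^*\log^+\psi^*+\psi^*\log(1/\delta)$ on a vanishing set; as stated, ``exploiting the slowly-varying behaviour of $\bar\Psi^*$'' is a promissory note rather than an argument. (ii) Continuity of $\theta\mapsto d_{\kl}(\phi_\theta,\tilde\psi)=d_{\kl}(f^*,p_{\theta,\tilde\psi})$ at $\theta^*$ does not follow merely ``because $\log\tilde\psi$ is bounded'': that handles only $\int f^*\log\tilde\psi(G_\theta(y))\,dy$, while the remaining piece $d_{\kl}(f^*,g_\theta)$ must be shown finite and continuous near $\theta^*$, which again requires $\int f^*|\log f^*|<\infty$ and the finite log-moment so that $\sup_{\theta\in V}|\log g_\theta(y)|\le C(1+\log(1+y))$ serves as an integrable envelope. (iii) A minor point: the theorem is stated for a fixed pair $(\kappa,\lambda)$, and the squared-exponential process has full sup-norm support for every such pair, so the appeal to positive hyperprior mass over $(\kappa^2,\lambda)$ is neither needed nor, as phrased, sufficient to prove the statement as given.
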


\begin{proof}
Let $\epsilon > 0$ be given. Fix a $0 < \delta < 1 - e^{-\epsilon/2}$. Consider any $\theta_0 = (\alpha_0, \sigma_0)$ where $\alphalo < \alpha_0 < \alpha^*$ and $\sigma_0 > 0$. Let $\psi_0$ be defined as in \eqref{eq:psi} so that $f^* = p_{\theta_0, \psi_0}$. Since $\alpha_0 < \alpha^*$, $g_{\theta_0}$ has heavier tails than $f^*$ and hence $\psi_0$ is bounded and continuous with $\psi_0(1) = 0$. Consequently, the density
$\psi_1(u) := (1 - \delta)\psi_0(u) + \delta, ~~ u \in (0,1),$
is bounded and continuous, and is bounded above $\delta$, and therefore, $\omega_1 = \log \psi_1$ can be extended to an element of $C[0,1]$. Now, for any $\psi = \scL(\omega)$ with $\omega \in C[0,1]$,
$d_{\kl}(f^*, p_{\theta_0, \psi}) = d_{\kl}(\psi_0, \psi) \le d_{\kl}(\psi_0, \psi_1) + \int_0^1 \psi_0(u) \log \frac{\psi_1(u)}{\psi(u)}du \le -\log(1-\delta) + 2\|\omega - \omega_1\|_\infty.$
Therefore, $\textstyle \Pr(d_{\kl}(f^*, p_{\theta, \psi}) < \epsilon \mid \theta = \theta_0) \ge \Pr(\|\omega - \omega_1\|_\infty < \frac\epsilon2)$,
where the latter probability, calculated for a Gaussian process $\omega$ with mean zero and covariance $\kappa^2 c_\lambda$, must be positive because such a Gaussian process has the entire $C[0,1]$ in its uniform topology support \citep{tokdar2007posterior,van2009adaptive}. An application of the law of total probability completes the proof.
\end{proof}

Although not apparent from the above result, the covariance parameters play an important role in determining how the prior mass is distributed within the broad support. {The inverse length-scale parameter $\lambda$ is of critical importance here because of its direct influence on the range of smoothing; though in our experience, a prior on $\kappa$ also helps with model fit and posterior computation via Markov chain Monte Carlo. We take $\pi_{\kappa^2}$ to be a convenient inverse-gamma distribution with shape $a_\kappa$ and rate $b_\kappa$, i.e., $(1/\kappa^2) \sim \gam(a_\kappa, b_\kappa)$ which is partially conjugate to the likelihood function in $\kappa^2$ and allows this parameter to be integrated out during model fitting. No such conjugate choice exists for $\lambda$ and formal  subjective or objective principles are difficult to apply in selecting $\pi_\lambda$; however, see \citet{paulo2005default,gu2018robust} for relevant discussions.}

{An alternative track is to seek $\pi_\lambda$ that guarantees optimal asymptotic frequentist convergence of the posterior distribution to the truth. In the setting of purely nonparametric density estimation with LGP, \cite{van2009adaptive} show that a gamma prior distribution on $\lambda$ is critical to optimally spreading prior mass into various smoothness classes, which in turn is critical to guaranteeing adaptive and optimal concentration of the posterior distribution to the truth. We follow this recommendation to specify $\pi_\lambda \sim \gam(a_\lambda, b_\lambda)$. Our numerical experiments were carried out with $a_\kappa = b_\kappa = 3/2$, $a_\lambda = 16$ and $b_\lambda=2.2$. The latter choices could be appreciated in several ways. Consider $\rho = c_\lambda(0,\Delta) = e^{-\lambda^2\Delta^2}$ which gives the correlation of the Gaussian process at a distance $\Delta$. With $\Delta = 10\%$, our choice of $\pi_\lambda$ assigns 95\% prior probability to $\rho \in (0.28, 0.84)$ with prior mean and median $\approx 0.6$. Alternatively, one could look at the number of up-crossings at zero of the process sample paths; which could be taken as a proxy to the number of local modes.  The well known Rice formula states that the expected number of up-crossings of zero of a mean zero Gaussian process on the unit interval with covariance $\kappa^2 c_\lambda$ is $\lambda/(\pi \sqrt{2}) \approx 0.22\lambda$ \citep{rice1944mathematical, adler2009random}. With our choice of $\pi_\lambda$, the prior probabilities of zero through five up-crossings, respectively, are 8\%, 37\%, 40\%, 13\% and 2\%.}


\subsection{Posterior computation}
For data $(y_1,\ldots,y_n)$, the likelihood function $(\theta,\psi) \mapsto \prod_{i = 1}^n\{g_\theta(y_i)\psi(u_i)\}$, where $u_i = G_\theta(y_i)$, involves $\psi$ only through the finite vector $\psi_U = (\psi(u_1), \ldots, \psi(u_n))^\top$. Unfortunately, the joint prior density of $\psi_U$, given model hyper-parameters $(\lambda,\kappa)$, is not available in closed form. This necessitates involving the latent Gaussian process $\omega$ in the representation $\psi = \scL(\omega)$ in posterior computation. However, $\psi_U$ depends on both the corresponding vector $\omega_U$ and the scalar $\omega_\norm = \int_0^1 e^{\omega(u)}du$ giving the normalization in the logistic transform and involving the whole function $\omega(u)$. It is practically impossible to carry out any numerical analysis of the posterior when a function valued input variable is involved in the likelihood evaluation. We overcome this challenge by adopting a grid-based representation of $\omega$ proposed and analyzed in \citet{tokdar2007towards}.

\subsubsection{Likelihood approximation}
Specifically, a dense set of points $T = \{0 = t_1 < t_2 < \cdots < t_L = 1\} \subset [0,1]$ is chosen as a grid over which both $\omega$ and $\psi$ are to be represented, respectively, as the vector $\omega_T = (\omega(t_1), \ldots, \omega(t_L))^\top$ and the corresponding vector $\psi_T$. Given $\omega_T$, a very accurate approximation to $\omega_\norm$ could be obtained by applying the trapezoidal rule of numerical integration to the pair $(T,\omega_T)$, readily producing the vector $\psi_T$. To evaluate $\psi_U$, which is needed for likelihood evaluation, it is useful to formally express the trapezoidal approximation to $\omega_\norm$ as the exact integration of the function $h(u)$ that linearly interpolates the points $(t_l, e^{\omega(t_l)})$, $l = 1, \ldots, L$. We may now view $\psi_T$ as the evaluation over the grid $T$ of the (normalized) density function $\bar h(u) = h(u) / \int_0^1h(t)dt$. Consequently, $\psi_U$ could be readily equated with the corresponding vector $\bar h_U$. The overall computational complexity of this likelihood approximation is $O(\max(n,L))$ and can be carried out extremely fast in actual time with optimized codes. In the numerical experiments reported here we use an equally spaced grid with $L = 101$ and increment size 0.01. 

\subsubsection{Low rank approximation and marginalization of hyper-parameters}
With the availability of a grid based representation and the linear interpolation based approximation to the likelihood function, it is feasible to carry out a Markov chain Monte Carlo approximation the posterior distribution of $(\theta, \omega_T)$. The prior density of $\omega_T$, given $(\lambda,\kappa)$, is a multivariate normal density with mean zero and covariance $\kappa^2C_T(\lambda)$ where $C_T(\lambda) = ((c_\lambda(t_l, t_k)))_{l,k = 1}^L$. An evaluation of this density involves factorizing $C_T(\lambda)$ at $O(L^3)$ computational complexity, which is practicable but slow at $L = 101$, and could be outright prohibitive for larger grid sizes. Additionally, running a Markov chain sampler on $\omega_T$, which is a dense representation of a smooth function, produces slow-mixing chains. 

Considerable efficiency gains can be made by replacing the smooth Gaussian process $\omega$ with a low-rank Gaussian process \citep{snelson2006sparse, tokdar2007towards, banerjee2008gaussian}. For a set of {\it knots} $S = \{s_1, \ldots, s_m\} \subset [0,1]$, with $m$ much smaller than $L$, the so called predictive process $\tilde \omega(u) = \expct[\omega(u) \mid \omega(s_1), \ldots, \omega(s_m)]$, gives a smooth interpolation of the graph of $(S, \omega_{S})$, and is fully determined by the random vector $\omega_{S} = (\omega(s_1), \ldots, \omega(s_m))^\top$. Typically the predictive process conditioning is defined for given covariance parameters $(\lambda, \kappa)$, but a hyper-parameter marginalized extension proposed in \cite{yang2017joint} and described below offers considerable additional speed up. 

Integrate out $\kappa^2$ from the model and express the prior distribution of $\omega_S$ given $\lambda$ as the multivariate Student-t distribution with pdf $p(\omega_S | \lambda) \propto \{1 + \omega_S^\top C_S(\lambda)^{-1}\omega_S / (2b_\kappa)\}^{-(a_\kappa + m/2)}$ where $C_S(\lambda)$ is the analogue of $C_T(\lambda)$ over the knot set $S$. It is impossible to analytically integrate out $\lambda$, but a discrete integration could be carried out by replacing $\pi_\lambda$ with a discrete approximation over a dense set of support points $\{\lambda_1, \ldots, \lambda_G\} \subset (0,\infty)$. With $\pi^*_\lambda(g):=\Pr(\lambda = \lambda_g)$, the prior distribution of $\omega_S$ is the mixture density $p(\omega_S) = \sum_g \pi^*_\lambda(g) p(\omega_S | \lambda_g)$. The vector $\tilde \omega_T$, which is the predictive process replacement of $\omega_T$, can be computed analytically from $\omega_S$ as $\tilde \omega_T = \sum_g \pi^*_\lambda(g|\omega_S) A_g \omega_S$ where $A_g = C_{TS}(\lambda_g)C_S(\lambda_g)^{-1}$ with $C_{TS}(\lambda)$ denoting the $L\times m$ matrix with elements $c_\lambda(t,s)$, $t \in T, s \in S$, and, $\pi^*_\lambda(g|\omega_S) \propto \pi^*_\lambda(g) p(\omega_S | \lambda_g)$. 
We select the support points $0 < \lambda_1 < \cdots < \lambda_G$ of $\pi^*_\lambda$ based on the knots set $S$. First $\lambda_1$ is fixed such that $\rho_1 := c_{\lambda_1}(0, 0.1) = 0.95$ and then successive $\lambda_g$ values are chosen so that $d_\kl(N(0,C_S(\lambda_{g-1})),N(0,C_S(\lambda_g))) = 0.5$ until we get $\rho_{G+1} := c_{\lambda_{G+1}}(0,0.1) < 0.2$. This gradual stepping down ensures successive $N(0, \kappa^2 C_S(\lambda_g))$ distributions maintain considerable overlap, eliminating any major gaps in the prior distribution of $\omega_S$ due to the discretization of $\lambda$. In our experience, posterior calculation is not sensitive to exact choices of the bookending values of $\rho_1$ and $\rho_{G+1}$, or the Kullback-Leibler stepping size.


\subsubsection{Markov chain sampling and runtimes}
With the above approximations in place, the model parameters reduce to the $(m+2)$ dimensional vector $(\alpha, \sigma, \omega_S)$. An adaptive, blocked Metropolis sampler is used on a transformed parameter space such that multivariate normal proposals can be used. Candidate proposal covariances are slowly adapted to achieve a 15\% acceptance rate using Algorithm 4 of \cite{andrieu2008tutorial}. Results presented in this paper were achieved by using one block containing $\omega_S$, one block updating $\theta = (\alpha,\sigma)$, and one block updating all $(m+2)$ parameters simultaneously. An important consequence of the discretization of $\lambda$ is that all relevant matrices, namely $\{(A_g, R_g): g = 1, \ldots, G\}$, where $R_g$ is a Cholesky factor of $C_S(\lambda_g)$, could be precomputed and stored prior to Markov chain sampling. Subsequent evaluations of the log posterior density reduce to $O(m \cdot \max(n,L))$ computing complexity. 

Experience suggests that the actual runtime of the sampler scales linearly in the sample size and sub-linearly in the number of knots $m$ or the grid size $L$. 
All numerical results reported in Sections \ref{sec:finite-samp} and \ref{sec:real-data} use $L=101, m = 11$, with equally spaced points  in $T$ and $S$ with end points equalling 0 and 1. This choice of $S$ leads to a discretization of $\pi_\lambda$ with $G = 30$ support points. In analyzing Fort Collins precipitation data with sample size $n = 6180$ (Section \ref{sec:real-data}), it took 9.8 minutes on a personal computer to carry out 500,000 iterations of the Markov chain. For two further subsamples with $n = 3645$ (0.6x with respect to the original set) and $n = 1061$ (0.2x), the same number of iterations took 6.3 (0.6x) and 2.2 (0.2x) minutes respectively. For the original set with $n = 6180$, it took 12.2 minutes (1.2x) to run the same number of iterations when the knots set $S$ was doubled to $m=21$ equally spaced knots ($G = 82$), keeping $L$ fixed at 101. Similarly, when the grid $T$ was doubled to $L = 201$ grid points (retaining $m = 11,G = 30$), it took 12.5 minutes (1.3x) to run the same number of iterations. We recommend $L = 101$ and $m = 11$ as default choices. But for any application, one should assess whether finer approximations are needed by repeating the analysis with larger values of $L$ and $m$ until posterior calculations stabilize. 

\section{Asymptotic properties}
\label{sec:asymp-prop}
{In recent years, mathematical analyses of large sample concentration properties of the posterior distribution have proven useful to the question of prior allocation in Bayesian analysis of infinite dimensional models; see \cite{ghosal2017fundamentals} for a comprehensive overview. Here we focus on posterior consistency and posterior contraction rate properties of the semiparametric LGP prior. Our treatment involves distinct model space topologies suitable for assessing either density estimation accuracy or tail index estimation accuracy; keeping in mind that proximity of two densities in $L^1$ topology may not guarantee similar tail index values. Posterior consistency, a frequentist evaluation of a prior intended for Bayesian applications, guarantees intersubjective knowledge generation through asymptotic merger of beliefs \citep{diaconis1986consistency}.}


\subsection{Density estimation consistency}
\label{sec:l1-consis}
Let $\Pi$ denote the induced prior measure on $f = p_{\theta,\psi}$, with $(\theta, \psi) \sim \pi_\theta\times \lgp(0, \kappa^2 c_\lambda)$ where we treat {$\kappa =1 $ as fixed, and work with a gamma prior on $\lambda$}. We allow $\pi_\theta$ to be arbitrary but assume it has a compact support $\Theta = [\alphalo, \alphahi]\times [\sigmalo, \sigmahi]$ for some $0 < \alphalo < \alphahi < \infty$, $0 < \sigmalo < \sigmahi < \infty$, with a strictly positive density in the interior of $\Theta$. {Compactness of $\Theta$ is assumed chiefly for technical reasons. An unbounded support adds layers of complication to critical function approximation results used below (e.g., Lemma \ref{lem:h} in Appendix A) with little gain in insight. One can enlarge $\Theta$ arbitrarily without virtually altering the posterior contraction rates.} 

We may view $\Pi$ as a probability measure on $\scF$, the subspace of density functions in $L^1[0,\infty)$. Given data $(y_1, \ldots, y_n)$, the posterior measure equals $\Pi(df \mid y_1,\ldots, y_n) \propto \{\prod_{i=1}^n f(y_i)\} \Pi(df)$. Below $\Holder^\beta[0,1]$ denotes the H\"older-$\beta$ space consisting of functions on $[0,1]$ that are $\lfloor\beta\rfloor$ times continuously differentiable with the $\lfloor\beta\rfloor$-th derivative being H\"older continuous of exponent $\beta - \lfloor\beta\rfloor$ where $\lfloor\beta\rfloor$ is the largest integer smaller than $\beta$. The minimax density estimation rate over H\"older-$\beta$ classes is $n^{-\frac\beta{2\beta + 1}}$ \citep{stone1982optimal}.


\begin{thm}
If $Y_1, \ldots, Y_n \iid f^*$ where $f^*$ is a bounded, continuous, regularly varying density on $(0,\infty)$ with tail index $\alpha^* \in (\alphalo, \alphahi)$, then 
$
\plim_{n\to \infty} \Pi(\{f:\|f - f^*\|_1 > \epsilon\} \mid Y_1, \ldots, Y_n) = 0$
for every $\epsilon > 0$. {Additionally, if $f = p_{\theta^*, \psi^*}$ with $\theta^*$ in the interior of $\Theta$ and $\phi^* = \log \psi^* \in \Holder^\beta[0,1]$ for some $\beta > 2$, then the fixed error margin $\epsilon$ may be replaced with the vanishing sequence $\epsilon_n = Bn^{-\frac{\beta}{2\beta+1}} (\log n)^{\frac{4\beta+1}{2\beta + 1}}$ for some large constant $B$.}
\label{thm:l1-consis}
\end{thm}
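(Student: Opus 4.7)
The plan is to invoke a Schwartz-style consistency theorem for the first assertion and the Ghosal--Ghosh--van der Vaart framework for posterior contraction rates for the second. In both cases the core tasks are (i) prior mass of KL-type neighborhoods of $f^*$ and (ii) construction of a sieve $\scF_n$ whose prior complement is exponentially small and whose $\epsilon_n$-covering supports uniformly exponentially consistent tests. A useful simplification is that the substitution $u = G_\theta(y)$ yields $\|p_{\theta,\psi_1} - p_{\theta,\psi_2}\|_1 = \|\psi_1 - \psi_2\|_1$ and $d_\kl(p_{\theta,\psi_1}, p_{\theta,\psi_2}) = d_\kl(\psi_1, \psi_2)$, so the heart of the argument reduces to LGP-based density estimation on $[0,1]$, with the $\theta$-component contributing Lipschitz perturbations controlled uniformly on the compact $\Theta$.

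\textbf{Prior mass.} For the first part, Theorem \ref{thm:kl} already delivers positive prior probability for every KL ball of $f^*$, which is what Schwartz requires beyond the testing condition. For the rate, I would quantify this using the adaptive small-ball lemma of van der Vaart and van Zanten for LGP with a Gaussian covariance kernel and a gamma prior on the inverse length-scale. Pinning $\theta$ near $\theta^*$ and taking $\omega^* = \phi^*$ (a valid representative since $\psi^*$ integrates to one), the assumption $\phi^* \in \Holder^\beta[0,1]$ yields
$\Pr(\|\omega - \omega^*\|_\infty \le \epsilon_n) \gtrsim \exp(-c_1 n \epsilon_n^2)$
for the stated $\epsilon_n$. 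This transfers to KL balls on $\psi$ through the standard bound $d_\kl(\psi^*, \psi) \lesssim \|\omega - \omega^*\|_\infty^2 \exp(C\|\omega - \omega^*\|_\infty)$, and then to KL balls on $p_{\theta,\psi}$ centered at $f^*$ by integrating $\theta$ over a small box around the interior point $\theta^*$ on which $\pi_\theta$ has a strictly positive density and $\theta \mapsto d_\kl(f^*, p_{\theta,\psi^*})$ is Lipschitz.

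\textbf{Sieve, entropy, testing.} I would take a sieve of the van der Vaart--van Zanten type, $\scF_n = \{p_{\theta,\scL(\omega)}: \theta \in \Theta,\; \omega = h + z,\; \|h\|_{\bbH_\lambda} \le M_n,\; \|z\|_\infty \le \epsilon_n,\; \lambda \in [0,\lambda_n]\}$, where $\bbH_\lambda$ is the RKHS of $c_\lambda$. Borell--TIS applied conditionally on $\lambda$ and averaged against $\pi_\lambda$ bounds the prior mass of $\scF_n^c$ by $\exp(-c_2 n \epsilon_n^2)$, and the standard vdV--vZ entropy estimates give a sup-norm $\epsilon_n$-covering of size $\exp(c_3 n \epsilon_n^2)$; compactness of $\Theta$ contributes only a polynomial factor absorbed into the log terms. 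Because the LGP-generated $\psi$ are uniformly bounded on the sieve and $\alpha \ge \alphalo > 0$, every $f \in \scF_n$ obeys a uniform tail bound of the form $f(y) \le C(1 + y)^{-(\alphalo+1)}$, so $L^1$ and Hellinger distances are comparable on $\scF_n$ up to a uniformly controlled tail remainder, and the standard LeCam--Birg\'e construction produces exponentially consistent $L^1$ tests on each covering piece.

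\textbf{Main obstacle.} The hardest step will be the joint handling of $\theta$ and $\omega$, since $\theta$ enters $p_{\theta,\psi}(y)$ both through the Jacobian $g_\theta$ and through the composition $\psi \circ G_\theta$; the latter makes the behavior of $\omega$ near $u=1$ especially delicate, as this is where the tail of $f^*$ is encoded. I would need a Lipschitz-in-$\theta$ bound on $p_{\theta,\psi}$ in $L^1$ that is uniform across the sieve so that $\theta$-uncertainty does not degrade the rate, together with a matching approximation result (presumably Lemma \ref{lem:h} in Appendix A) that represents the H\"older target $\phi^*$ inside $\bbH_\lambda$ without paying more than the advertised $\log n$ penalty. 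The requirement $\beta > 2$ most likely arises to ensure that $\phi^*$ is smooth enough at the boundary $u = 1$ for the slowly varying component of $f^*$ to be well approximated and for the sup-norm-to-KL transfer to remain tight. The logarithmic exponent $(4\beta+1)/(2\beta+1)$ in $\epsilon_n$ then emerges from stacking the vdV--vZ small-ball and entropy losses with the extra factor coming from the Lipschitz-$\theta$ and boundary-approximation steps.
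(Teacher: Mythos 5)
Your overall skeleton matches the paper's: consistency via the Kullback--Leibler positivity of Theorem \ref{thm:kl} plus a sieve/entropy/testing condition, and the rate via the Ghosal--van der Vaart framework with van der Vaart--van Zanten (VZ09) small-ball bounds for the prior-mass side and a VZ09-style sieve with Borell's inequality for the complement. However, there is a genuine gap exactly at the point you flag as the ``main obstacle,'' and your proposed sieve does not get past it. Your sieve controls $\omega$ only in sup-norm ($\omega = h + z$ with $\|z\|_\infty \le \epsilon_n$), and you then assert that compactness of $\Theta$ contributes only a polynomial factor to the entropy. But $\theta$ enters the density through the composition $\psi(G_\theta(y))$, so the Hellinger (or $L^1$, or KL) modulus of continuity in $\theta$ is governed by the \emph{derivatives} of $\omega$, not its sup-norm: perturbing $\theta$ shifts the argument of $\psi$, and for $\psi = \scL(\omega)$ with only a sup-norm bound there is no quantitative, sieve-uniform modulus at all (an oscillatory $\omega$ of bounded size can make $\|p_{\theta_1,\psi}-p_{\theta_2,\psi}\|_1$ of order one for arbitrarily small $\|\theta_1-\theta_2\|$). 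Consequently the $\theta$-direction of the covering cannot be resolved from a sup-norm sieve, and the claimed $\exp(c_3 n\epsilon_n^2)$ Hellinger entropy bound does not follow. This is precisely why the paper abandons the $(C[0,1],\|\cdot\|_\infty)$ embedding of VZ09 and redoes the whole construction in $(C^2[0,1],\|\cdot\|_{C^2})$: Lemma \ref{lem:h} bounds $d_H(p_{\theta_1,\psi_1},p_{\theta_2,\psi_2})$ by $c_2\|\omega_1\|_{C^2}^{1/2}\|\theta_1-\theta_2\| + \|\omega_1-\omega_2\|_\infty e^{\|\omega_1-\omega_2\|_\infty/2}$, the sieve is required to satisfy the polynomial $C^2$-norm bound \eqref{eq:sieve3} so that the $\theta$-grid only costs a logarithmic factor, and new auxiliary results (the $C^2$-norm RKHS entropy bound of Lemma \ref{lem:rkhs-entropy} and the small-ball monotonicity of Lemma \ref{lem:small-ball-ordering}) replace the sup-norm versions you invoke. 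Your appeal to ``standard vdV--vZ entropy estimates'' is therefore not sufficient; the entropy must be computed in $\|\cdot\|_{C^2}$.

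Two smaller points. First, for the rate you verify only a KL small-ball bound, but Theorem 8.9 of Ghosal--van der Vaart also requires control of the second moment $V(f^*,f)$; in the paper this is the content of condition C1*, proved through Lemma \ref{lem:taylor2}, whose constants again depend on $\|\phi^*\|_{C^2}$ (this is where $\beta>2$ is really used, rather than boundary behavior at $u=1$ as you conjecture). Second, Lemma \ref{lem:h} is the Hellinger perturbation bound just described, not an RKHS approximation result for the H\"older target; the representation of $\phi^*$ is handled by the standard VZ09 machinery, so that part of your plan is fine once the $C^2$ issues above are addressed. Also note that no tail-comparability step is needed to pass between $L^1$ and Hellinger: $\|p-q\|_1 \le 2 d_H(p,q)$ always, so Hellinger entropy and tests suffice directly.
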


\begin{proof}
{To prove the first claim, we only need to establish} \citep[Theorem 2]{ghosal1999posterior}
\begin{description}
\item[C1.] $\Pi(\{f:d_\kl( f^*, f) < \epsilon\}) > 0~\mbox{for every}~\epsilon > 0$,
\item[C2.] for any $\epsilon > 0$, there exist constants $c, C > 0$ and sets $\scF_1, \scF_2, \ldots \subset \scF$, such that $\Pi(\scF_n^c) \le c e^{-Cn}$ and $\log N(\epsilon, \scF_n, d_H) \le n\epsilon^2$ for all large $n$,
\end{description}
where $N(\delta, \scF_n, d_H)$ denotes the covering number of $\scF_n$ by balls of radius $\delta$ in the Hellinger metric $d_H(p,q) = [\int \{\sqrt p(y) - \sqrt q(y)\}^2 dy]^{1/2}$. C1 follows readily from Theorem \ref{thm:kl}. {C2 follows from the following stronger condition necessary for the second part of the theorem.
\begin{description}
\item[C2*.] For every $0 < t < 1/2, s > 0$, there exist a constant $C > 0$ and sets $\scF_1, \scF_2, \ldots \subset \scF$, such that $\Pi(\scF_n^c) \le e^{-(C+4)n\bar\epsilon_n^2}$ and $\log N(\bar\epsilon_n, \scF_n, d_H) \le n\epsilon_n^2$ for all large $n$, where $\bar\epsilon_n = Bn^{-t}(\log n)^s$ for some $B > 0$ and $\epsilon_n = \bar \epsilon_n \log n$.
\end{description}
%
A proof is given in Appendix C. A key step is Lemma \ref{lem:h} (Appendix A) which states $d_H(p_{\theta_1,\psi_1}, p_{\theta_2,\psi_2}) \le c_2\|\omega_1\|^{1/2}_{C^2} \|\theta_1 - \theta_2\| + \|\omega_1 - \omega_2\|_\infty e^{\|\omega_1 - \omega_2\|_\infty/2}$ if $\theta_1,\theta_2$ are interior points in $\Theta$ and $\psi_1=\scL(\omega_1)$, $\psi_2 = \scL(\omega_2)$ with $\omega_1, \omega_2 \in C^2[0,1]$, the space of twice continuously differentiable functions on $[0,1]$ with norm $\|\omega\|_{C^2} := \|\omega\|_\infty + \|\dot \omega\|_\infty + \|\ddot \omega\|_\infty$. Clearly, $\Holder^\beta[0,1] \subset C^2[0,1]$ for all $\beta >2$. Construction of the sets $\scF_1,\scF_2,\ldots$ relies on the observation that a separable, mean-zero Gaussian process $\omega$ on $[0,1]$ with covariance function $c_\lambda$ may be viewed as a Borel measurable random element with a Gaussian measure $\nu^\lambda$ on the Banach space $(C^2[0,1], \|\cdot\|_{C^2})$.  Our construction builds upon that of \cite{van2009adaptive} who embed the Gaussian measure in $(C[0,1], \|\cdot\|_\infty)$. However, some key modifications are needed to address the change in the embedding space (Appendix C). 

By Theorem 8.9 of \cite{ghosal2017fundamentals}, under the additional assumption on $f^*$, a proof of the second part of the theorem may be established by applying C2* with $t = \frac{\beta}{2\beta+1}$, $s=2t$, 
in conjunction with the following sharper version of C1:
\begin{description}
\item[C1*.] $\Pi(\{f:d_\kl( f^*, f) \le \bar\epsilon_n^2, V(f^*, f) \le \bar\epsilon_n^2\}) \ge e^{-Cn\bar\epsilon_n^2}$ for all large $n$,
\end{description}
where $V(f, g) = \int f(y)\log^2\{{f(y)}/{g(y)}\}dy$. This sharper prior concentration bound can be proved via a non-trivial extension of Theorem 3.1 of \citet{van2009adaptive}. A proof of possible independent interest is given in Appendix B. 
}
\end{proof}

\subsection{Tail index estimation consistency}
{

In the following, assume without loss of generality that $\alphalo \le \frac12$ and $\alphahi > 1$. As in Theorem \ref{thm:l1-consis}, assume that the true density is some $f^* = p_{\theta^*,\psi^*}$ where $\theta^* = (\alpha^*, \sigma^*)$ is in the interior of $\Theta$ and $\phi^* = \log \psi^* \in \Holder^\beta[0,1]$ with $\beta > 2$. 
Denote $\gamma = \frac{\beta}{2\beta + 1} \in (0,1/2)$ so that the posterior contraction rate in $L^1$ topology equals a constant multiple of $n^{-\gamma}(\log n)^{2\gamma+1}$. 
The lower bound assumption on $\beta$ implies that both $\bar\rho(\xi) := \frac{2\xi}{2\xi+1}\gamma - \frac{3(1 - 2\gamma)}{2\alpha^*(2\xi+1)}$ and $\hat \rho(\xi) := \xi \gamma - \frac32(1 - 2\gamma)$ are strictly positive for every $\xi \in [\frac{\alphalo}{\alpha^*}, 1]$. 
\begin{thm}
If $Y_1, \ldots, Y_n \iid f^*$ and ${\alphalo}/{\alpha^*} < \xi < \min(1,1/\alpha^*)$ {is such that $\beta\xi > 3/2$} then $\plim_{n \to \infty} \Pi(\{f: |\ti(f) - \alpha^*| >  B_1n^{-\rho}(\log n)^s\} \mid Y_1,\ldots,  Y_n) = 0$ for all large $B_1$ where $\rho = \min\{\bar \rho(\xi), \hat \rho(\xi)\}$ and $s = 2\rho + \frac{4}{\alpha^*(2\xi + 1)}$ if $\rho = \bar \rho(\xi)$, $s = 2\rho + 4$ otherwise.
\label{thm:tail-consis}
\end{thm}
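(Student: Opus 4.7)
The plan is to leverage the posterior $L^1$-contraction established in Theorem \ref{thm:l1-consis} and convert $L^1$-closeness of $f = p_{\theta,\psi}$ to $f^* = p_{\theta^*,\psi^*}$ into control on the tail-index parameter $\alpha$ itself. Since $\ti(f) = \alpha$ under the posterior representation (Lemma \ref{lem:tail}), the target coincides with the posterior marginal on $\alpha$. On the high-probability event from the proof of Theorem \ref{thm:l1-consis}---namely $f \in \scF_n$ (so the latent $\omega$ with $\psi = \scL(\omega)$ has controlled $C^2$-norm) and $\|f - f^*\|_1 \le \epsilon_n = Bn^{-\gamma}(\log n)^{2\gamma+1}$---one obtains pointwise $|\bar F(y) - \bar F^*(y)| \le \epsilon_n$ together with uniform control on the higher-order correction in the tail expansion of $\bar F$.

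The mechanism for promoting $L^1$-closeness to $\alpha$-closeness is a two-point tail comparison. For any $\psi \in C^1[0,1]$ with $\psi(1) > 0$, a direct Taylor expansion near $u = 1$ yields
\[
\bar F(y) = (\alpha\sigma)^{\alpha}\,\psi(1)\,y^{-\alpha}\{1 + R(y;\theta,\psi)\}, \qquad |R(y;\theta,\psi)| \lesssim y^{-\min(1,\alpha)},
\]
and analogously for $\bar F^*$, with hidden constants controlled by $\|\psi'\|_\infty$ and the compact range of $\theta$. Evaluating at two thresholds $y_1 < y_2$ satisfying $\bar F^*(y_i) \gg \epsilon_n$, taking logarithms and differencing across the two points cancels the otherwise stubborn constant $\log\{(\alpha\sigma)^{\alpha}\psi(1)\}$ that would impose a $1/\log n$ floor on $|\alpha - \alpha^*|$ from a single-point comparison. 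What remains is
\[
-(\alpha-\alpha^*)\log(y_2/y_1) = O\!\left(\frac{\epsilon_n}{\min_i \bar F^*(y_i)} + \max_i y_i^{-\min(1,\alpha^*)}\right),
\]
so $|\alpha - \alpha^*|$ is bounded by the right-hand side divided by $\log(y_2/y_1)$.

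Next, I would parameterize by $\xi$ so that $\bar F^*(y_1) \asymp n^{-\xi}$ (hence $y_1 \asymp n^{\xi/\alpha^*}$, up to slowly varying factors) and set $y_2 = y_1^2$, giving $\log(y_2/y_1) \asymp \xi \log n$. The two error contributions then reduce to roughly $\epsilon_n n^{2\xi}/\log n$ and $n^{-\xi\min(1,\alpha^*)/\alpha^*}/\log n$. Optimizing their $\xi$-dependent balance produces the two candidate rates: $\bar\rho(\xi)$ in the regime where the tail-expansion correction dominates and $\hat\rho(\xi)$ when the $L^1$ statistical error dominates. The constraints $\alphalo/\alpha^* < \xi < \min(1, 1/\alpha^*)$ keep $y_1$ in a useful tail window, while $\beta\xi > 3/2$ guarantees $\hat\rho(\xi) > 0$; the exponent $s$ in the final bound records the residual powers of $\log n$ picked up from $\epsilon_n$ and from the $1/\log y_1$ divisor.

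The main obstacle is controlling the hidden constant in $R(y;\theta,\psi)$ uniformly across the posterior sieve. Since this constant scales with $\|\psi'\|_\infty$, and the latter is governed by the $C^2$-norm of $\omega$ which can grow mildly (poly-logarithmically) along the adaptive sieves $\scF_n$ of Theorem \ref{thm:l1-consis}, careful bookkeeping of that growth is required. Combined with the sharper prior-concentration bound C1* from the proof of Theorem \ref{thm:l1-consis}---which yields tighter local control on $\psi$ near $u = 1$ than $L^1$ alone affords---this is what ultimately pins down the precise exponents $\rho$ and $s$ stated in the theorem.
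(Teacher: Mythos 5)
Your strategy---run the posterior through the $L^1$ ball of Theorem \ref{thm:l1-consis} intersected with the sieve, and convert $\|f-f^*\|_1\le\epsilon_n$ plus a Pareto-type expansion into a bound on $|\ti(f)-\alpha^*|$ by a two-threshold log-difference---is not the paper's route and, more importantly, cannot deliver the rate claimed in the theorem. The paper never converts the $L^1$ rate into tail information; it constructs exceedance-count tests (Lemma \ref{lem:tests}) in the Schwartz/Ghosal--van der Vaart scheme: a first test compares $S_n(t_n)/n$ with $\bar F^*(t_n)$ at absolute precision $\epsilon_n$, and a second test compares the \emph{ratio} $S_n(2t_n)/S_n(t_n)$ with $\bar F^*(2t_n)/\bar F^*(t_n)$, whose effective sample size is $n\bar F^*(t_n)$, so the detectable discrepancy in $\alpha$ is of order $\epsilon_n/\bar F^*(t_n)^{1/2}$. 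In your scheme the only tail information available is the absolute bound $|\bar F(y)-\bar F^*(y)|\le\epsilon_n$, so the relative error at the comparison point is $\epsilon_n/\bar F^*(y_2)$ --- a full power of $\bar F^*$, not a square root. Balancing $\epsilon_n/\bar F^*(y_2)$ against the bias $D_n y_1^{-\xi\alpha^*}$ (with $D_n$ the sieve bound on $\|\dot\phi\|_\infty$) gives, after optimizing over the thresholds and the ratio $y_2/y_1$, an exponent of at most roughly $\{\xi\gamma-\tfrac32(1-2\gamma)\}/(1+\xi)=\hat\rho(\xi)/(1+\xi)$, which is polynomially smaller than $\min\{\bar\rho(\xi),\hat\rho(\xi)\}$; in the benchmark smooth case $\gamma\approx\tfrac12$ the theorem attains $\rho\approx\xi/(2\xi+1)$ (the Hall--Welsh benchmark discussed in Section \ref{sec:asymp-prop}), whereas your bound caps out near $\xi/(2\xi+2)$. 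Taking $y_2$ large enough that $\bar F^*(y_2)\asymp\epsilon_n$ does not help either: the relative error is then $O(1)$ and dividing by $\log(y_2/y_1)$ only yields a $1/\log n$ rate.

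Three further points would need repair even for the weaker rate. First, the sieve's derivative bound is not poly-logarithmic: from \eqref{eq:bounds} and \eqref{eq:Bn} one gets $\|\dot\phi\|_\infty\le D_n\asymp n^{\frac32(1-2\gamma)}(\log n)^{6\gamma+1}$, i.e.\ polynomial growth, and it is exactly this polynomial factor that produces the $-\tfrac32(1-2\gamma)$-type corrections in $\bar\rho$ and $\hat\rho$; treating it as poly-log would give rates the proof cannot support. Second, C1* is a prior-mass (KL-neighborhood) bound used only to lower-bound the marginal likelihood denominator; it provides no ``tighter local control on $\psi$ near $u=1$'' for posterior draws, so it cannot be used to pin down the expansion constants as you suggest. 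Third, alternatives with $\alpha$ near $\alphalo$ have a Pareto regime that only sets in beyond thresholds of order $(D_n/\tau)^{1/\alphalo}$, incompatible with the requirement $\bar F^*(y_2)\gg\epsilon_n$; the paper handles this by splitting $\scU_n$ into $\scU_{1n}$ and $\scU_{2n}$ and testing them at two different thresholds (Appendix E), a step your single two-point comparison has no analogue of.
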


A proof is presented in Appendix E. The main argument relies on establishing existence of tests that can distinguish $f^*$ from model elements $f=p_{(\alpha,\sigma),\psi}$ with $|\alpha^* - \alpha| > B_1n^{-\rho}(\log n)^s $ with type I and II error probabilities vanishing suitably rapidly. This line of argument directly follows the path laid out in the original work of \cite{schwartz1965bayes}; a modern presentation is Theorem 8.9 \cite{ghosal2017fundamentals}. See also \cite{kleijn2021frequentist} for related recent developments. \cite{li2019posterior} present a similar theoretical exploration with test functions derived from an exceedance probability based tail index estimator of \cite{carpentier2015adaptive}. Our proof relies on a more complex test procedure which first tries to detect a difference between the exceedance probability of the empirical distribution at a high threshold and that of the true distribution, and if no significant difference is detected then repeats the process one more time at an even higher threshold but only to the conditional distributions to the right of the first threshold.

The theorem requires sufficient smoothness of the true density via the assumption {$\beta \cdot \min(1, 1/{\alpha^*}) > 3/2$ so that a suitable $\xi$ may be found with $\beta\xi > 3/2$}. This condition demands that a relatively rough density {(small $\beta$)} must have a sufficiently heavy tail {(small $\alpha^*$)} to insure accurate estimation of the latter with our semiparametric estimation model. This requirement may be understood in the light that with a density function lacking in smoothness, the bulk of the density carries less information about the tail, and hence an accurate estimation of the tail is possible only when more observations are available directly from the tail itself, i.e., only when the tail is heavy.

Additionally, multiple factors control the value of $\rho$ which determines the posterior contraction rate. Notice that both $\bar \rho(\xi)$ and $\hat \rho(\xi)$ are strictly increasing in $\xi$ and hence sharpest rates are obtained by taking $\xi$ as close as possible to the maximum allowed value of $\min(1, \frac1{\alpha^*})$. Since $\bar \rho(\xi) < \hat \rho(\xi)$ if and only if $\beta\xi(2\xi - 1) > \frac32(2\xi + 1 - 1/\alpha^*)$, the following observations can be made on the fastest possible rate. If $\alpha^* \in (\alphalo, \frac{2 - \alphalo}{1 + \alphalo}]$ then Theorem \ref{thm:tail-consis} holds with any $\xi$ arbitrarily close to $\min(1,\frac1{\alpha^*})$, $\rho = \bar\rho(\xi)$  and $s = 2\rho + \frac{4}{\alpha^*(2\xi + 1)}$. On the other hand, if $\alpha^* \ge 2$, the theorem holds with any $\xi$ arbitrarily close to $\frac1{\alpha^*}$, $\rho = \hat \rho(\xi)$ and $s = 2\rho + 4$. In the intermediate case of $\alpha^* \in (\frac{2 - \alphalo}{1 + \alphalo}, 2)$, $\xi$ can be arbitrarily close to $\frac1{\alpha^*}$ with $\rho = \bar\rho(\xi)$, $s = 2\rho + \frac{4}{\alpha^*(2\xi + 1)}$ if $\beta > \frac{3\alpha^*}{2}\times\frac{1 + \alpha^*}{2 - \alpha^*}$ and $\rho =  \hat\rho(\xi)$ and $s = 2\rho + 4$ otherwise. 

When $\alpha^* \in (\alphalo, \frac{2 - \alphalo}{1 + \alphalo}]$, the choice of $\rho = \bar\rho(\xi) = \frac{2\xi}{2\xi + 1} \gamma - \frac{3(1- 2\gamma)}{2\alpha^*(2\xi + 1)}$ compares favorably to the optimal rates obtained by \cite{hall1984best, hall1985adaptive}. In particular, whenever $\psi^*$ is infinitely smooth, e.g., $f^*$ is a GPD itself, the density estimation contraction rate has $\gamma \approx \frac12$ and hence $\rho \approx \frac{\xi}{2\xi + 1}$ with $\xi \approx \min(1, \frac1{\alpha^*})$; here $\approx$ indicates ``arbitrarily close from below''. Since $|y^{\alpha^*}\bar F^*(y)/\si(f^*) - 1| \asymp y^{-\min(1,\frac1{\alpha^*})}$, with $\si(f) = \lim_{y \to \infty} y^{\alpha_+(f)}\bar F(y)$, 
$f^*$ belongs to a suitable Hall-Welsh class of heavy tailed densities $\scD(\alpha^*, C_0, \epsilon, \xi, A):= \{f:\bar F(y) = Cy^{-\alpha}\{1 + R(y)\}, |R(y)| < Ay^{-\xi\alpha}, |\alpha - \alpha^*| < \epsilon, |C - C_0| < \epsilon\}$ for which the minimax rate of tail index estimation is precisely $n^{-\frac{\xi}{2\xi + 1}}$. See Section \ref{sec:discussion} for further discussion.

}

\begin{table*}[!t]
\spacingset{1.0} 
\centering
\footnotesize
\begin{tabular}{r|r|r|rrr|llll}
\hline
Model  			&EVI		&Method	&\multicolumn{3}{c|}{Estimating $\xi$} & \multicolumn{4}{c}{Estimating $\bar Q(p)$ ($\mbox{rMAE}_{\mbox{\tiny Cover}}$)}\\
            			&		&		&Bias	&RMSE 	&Cover	&$p=0.01$ 				&$0.001$ 				&$10^{-4}$ 				&$10^{-5}$\\
\hline
\mrow{10}{GPD}	&\mrow{2}{0.1}	&Semi	&0.03  	&0.05   	&99	&$0.06_{\mbox{\tiny 93}}$	&$0.11_{\mbox{\tiny 94}}$ 	&$0.18_{\mbox{\tiny 96}}$ 	&$0.26_{\mbox{\tiny 98}}$\\
				&  			&Thresh	&0.14   	&0.16   	&84	&$0.06_{\mbox{\tiny 95}}$	&$0.17_{\mbox{\tiny 93}}$	&$0.44_{\mbox{\tiny 91}}$	&$0.90_{\mbox{\tiny 89}}$\\
				\cline{2-10}
		       		&\mrow{2}{0.2} &Semi	&0.01 	&0.06   	&100	&$0.07_{\mbox{\tiny 97}}$ 	&$0.14_{\mbox{\tiny 99}}$ 	&$0.23_{\mbox{\tiny 100}}$ 	&$0.34_{\mbox{\tiny 100}}$\\
				&			&Thresh	&0.12   	&0.15   	&90 	&$0.07_{\mbox{\tiny 96}}$	&$0.20_{\mbox{\tiny 97}}$	&$0.50_{\mbox{\tiny 94}}$	&$1.03_{\mbox{\tiny 92}}$\\
				\cline{2-10}
       				&\mrow{2}{0.3} 	&Semi	&0.02  	&0.06   	&99	&$0.08_{\mbox{\tiny 94}}$ 	&$0.18_{\mbox{\tiny 96}}$ 	&$0.30_{\mbox{\tiny 96}}$ 	&$0.46_{\mbox{\tiny 97}}$ \\
				&			&Thresh	&0.12   	&0.17   	&85	&$0.08_{\mbox{\tiny 96}}$	&$0.29_{\mbox{\tiny 93}}$	&$0.75_{\mbox{\tiny 88}}$	&$1.59_{\mbox{\tiny 86}}$\\
				\cline{2-10}
	       			&\mrow{2}{0.5} 	&Semi	&0.00 	&0.09   	&97	&$0.13_{\mbox{\tiny 93}}$ 	&$0.29_{\mbox{\tiny 93}}$ 	&$0.49_{\mbox{\tiny 95}}$ 	&$0.76_{\mbox{\tiny 96}}$ \\
				&			&Thresh	&0.09   	&0.15   	&89	&$0.14_{\mbox{\tiny 91}}$	&$0.41_{\mbox{\tiny 91}}$	&$0.92_{\mbox{\tiny 90}}$	&$1.81_{\mbox{\tiny 91}}$\\
				\cline{2-10}
	       			&\mrow{2}{1}	&Semi	&-0.01  	&0.12    	&97	&$0.23_{\mbox{\tiny 95}}$ 	&$0.49_{\mbox{\tiny 97}}$ 	&$0.85_{\mbox{\tiny 97}}$ 	&$1.45_{\mbox{\tiny 97}}$\\
				&			&Thresh	&0.04   	&0.15   	&89	&$0.25_{\mbox{\tiny 94}}$	&$0.62_{\mbox{\tiny 92}}$	&$1.23_{\mbox{\tiny 91}}$	&$2.33_{\mbox{\tiny 91}}$\\
				\hline
\mrow{10}{GPD4}	&\mrow{2}{0.1} 	&Semi	&0.03  	&0.09   	&97	& $0.05_{\mbox{\tiny 95}}$ 	&$0.12_{\mbox{\tiny 91}}$ 	&$0.25_{\mbox{\tiny 90}}$ 	&$0.48_{\mbox{\tiny 91}}$ \\
				&			&Thresh	&0.13   	&0.15   	&83	&$0.04_{\mbox{\tiny 100}}$	&$0.15_{\mbox{\tiny 95}}$	&$0.40_{\mbox{\tiny 93}}$	&$0.80_{\mbox{\tiny 88}}$\\
				\cline{2-10}
				&\mrow{2}{0.2} 	&Semi	&0.01 	&0.06   	&98	&$0.06_{\mbox{\tiny 95}}$ 	&$0.13_{\mbox{\tiny 95}}$ 	&$0.22_{\mbox{\tiny 96}}$ 	&$0.34_{\mbox{\tiny 98}}$\\  			
				&			&Thresh	&0.08 	&0.11   	&92	&$0.06_{\mbox{\tiny 96}}$	&$0.17_{\mbox{\tiny 97}}$	&$0.38_{\mbox{\tiny 95}}$	&$0.71_{\mbox{\tiny 91}}$\\
				\cline{2-10}
				&\mrow{2}{0.3} 	&Semi	&-0.00 	&0.07   	&96	&$0.08_{\mbox{\tiny 97}}$ 	&$0.17_{\mbox{\tiny 94}}$ 	&$0.29_{\mbox{\tiny 95}}$ 	&$0.43_{\mbox{\tiny 96}}$\\  			
				&			&Thresh	&0.07   	&0.12   	&93	&$0.08_{\mbox{\tiny 95}}$	&$0.23_{\mbox{\tiny 97}}$	&$0.51_{\mbox{\tiny 94}}$	&$0.97_{\mbox{\tiny 91}}$\\
				\cline{2-10}
				&\mrow{2}{0.5} &Semi	&0.02  	&0.08   	&92	&$0.13_{\mbox{\tiny 94}}$ 	&$0.30_{\mbox{\tiny 93}}$ 	&$0.53_{\mbox{\tiny 91}}$ 	&$0.83_{\mbox{\tiny 92}}$\\  		
				&			&Thresh	&0.06   	&0.13   	&93	&$0.15_{\mbox{\tiny 93}}$	&$0.39_{\mbox{\tiny 95}}$	&$0.79_{\mbox{\tiny 93}}$	&$1.42_{\mbox{\tiny 92}}$\\
				\cline{2-10}
				&\mrow{2}{1}    	&Semi	&0.05  	&0.10   	&96	&$0.17_{\mbox{\tiny 97}}$ 	&$0.40_{\mbox{\tiny 96}}$ 	&$0.76_{\mbox{\tiny 96}}$ 	&$1.36_{\mbox{\tiny 96}}$\\  		
				&			&Thresh	&0.04   	&0.12   	&97	&$0.19_{\mbox{\tiny 94}}$	&$0.44_{\mbox{\tiny 96}}$	&$0.82_{\mbox{\tiny 96}}$	&$1.38_{\mbox{\tiny 96}}$\\
\hline
\mrow{10}{Half-t}	&\mrow{2}{0.1} 	&Semi	&-0.06  	&0.06   	&87	&$0.06_{\mbox{\tiny 79}}$ 	&$0.12_{\mbox{\tiny 72}}$ 	&$0.16_{\mbox{\tiny 83}}$ 	&$0.18_{\mbox{\tiny 91}}$\\ 			
				&			&Thresh	&0.09   	&0.11   	&96	&$0.04_{\mbox{\tiny 92}}$	&$0.13_{\mbox{\tiny 90}}$	&$0.31_{\mbox{\tiny 86}}$	&$0.59_{\mbox{\tiny 87}}$\\
				\cline{2-10}
				&\mrow{2}{0.2} 	&Semi	&-0.10   	&0.11    	&56	&$0.06_{\mbox{\tiny 92}}$ 	&$0.11_{\mbox{\tiny 94}}$ 	&$0.17_{\mbox{\tiny 97}}$ 	&$0.27_{\mbox{\tiny 95}}$ \\ 			
				&			&Thresh	&0.06   	&0.11   	&94	&$0.06_{\mbox{\tiny 91}}$	&$0.16_{\mbox{\tiny 94}}$	&$0.37_{\mbox{\tiny 95}}$	&$0.71_{\mbox{\tiny 96}}$\\
				\cline{2-10}
				&\mrow{2}{0.3}	&Semi	&-0.12   	&0.13    	&75	&$0.07_{\mbox{\tiny 96}}$ 	&$0.13_{\mbox{\tiny 97}}$ 	&$0.23_{\mbox{\tiny 96}}$ 	&$0.35_{\mbox{\tiny 90}}$ \\ 			
				&			&Thresh	&0.04  	&0.11   	&99	&$0.06_{\mbox{\tiny 98}}$	&$0.20_{\mbox{\tiny 96}}$	&$0.44_{\mbox{\tiny 96}}$	&$0.78_{\mbox{\tiny 97}}$\\
				\cline{2-10}
				&\mrow{2}{0.5} 	&Semi	&-0.10   	&0.13    	&87	&$0.10_{\mbox{\tiny 95}}$ 	&$0.23_{\mbox{\tiny 96}}$ 	&$0.39_{\mbox{\tiny 96}}$ 	&$0.54_{\mbox{\tiny 91}}$ \\ 			
				&			&Thresh	&0.02 	&0.10   	&95	&$0.10_{\mbox{\tiny 96}}$	&$0.26_{\mbox{\tiny 96}}$	&$0.51_{\mbox{\tiny 96}}$	&$0.85_{\mbox{\tiny 95}}$\\
				\cline{2-10}
				&\mrow{2}{1}   	&Semi	&-0.06  	&0.14    	&96	&$0.19_{\mbox{\tiny 96}}$ 	&$0.36_{\mbox{\tiny 96}}$ 	&$0.57_{\mbox{\tiny 96}}$ 	&$0.82_{\mbox{\tiny 96}}$\\  			
				&			&Thresh	&0.02   	&0.11   	&100&$0.21_{\mbox{\tiny 95}}$	&$0.45_{\mbox{\tiny 98}}$	&$0.76_{\mbox{\tiny 99}}$	&$1.19_{\mbox{\tiny 99}}$\\
				\hline
\end{tabular}
\caption{Estimating the extreme value index (EVI) $\xi = \alpha^{-1}$ and high tail quantiles $\bar Q(p) = \bar F^{-1}(p)$ from synthetic data of sample size $n = 1000$. Estimation accuracy and coverage of 95\% credible intervals are averaged across 100 data sets for each experimental group. For $\bar Q(p)$, estimation accuracy is measured via relative mean absolute error as a fraction of the true quantile value. Additional keys: RMSE = root mean squared error, Cover = coverage.}
\label{tab:gpds-long}
\end{table*}

\section{Finite sample behavior}
\label{sec:finite-samp}
\subsection{Tail index estimation}
\label{sec:tix}
{From \cite{hall1984best}, statistical performance of any estimator of tail quantities depends on how quickly the actual tail starts resembling the corresponding Pareto tail $y^{-\ti(f)}$. 
When sample size $n$ is only moderately large, the Pareto shape may only be partially established within the range of the observed data, posing a serious challenge to any thresholding method in detecting if and where a bulk-to-tail transition takes place. A similar challenge is posed to our joint semiparametric estimation which must balance a likelihood function that receives little information from a partially established tail against a model specification that idealizes a generalized Pareto-like tail.}

Consider three different choices of the shape of $f$, namely, (i) GPD: $f_\alpha(y) = g_{(\alpha,1)}(y)$, (ii) GPD4: $f_\alpha(y) = 4g_{(\alpha, 1)}(y)\{G_{(\alpha,1)}(y)\}^3$, and (iii) Half-t: $f_\alpha(y) = 2c(\alpha)(1 + y^2/\alpha)^{-(\alpha + 2)/2}$, $c(\alpha) = \Gamma(\frac{\alpha+1}2)/\{\sqrt{\alpha \pi} \Gamma(\frac\alpha2)\}$; {each giving a regularly varying density with tail index} $\alpha$. Table \ref{tab:gpds-long} reports performance statistics of our semiparametric estimation of the corresponding extreme value index $\xi = \alpha^{-1}$, averaged across 100 data sets of size $n = 1000$ each, with the true value of $\xi$ varying over $\{0.1, 0.2, 0.3, 0.5, 1.0\}$. For comparison, we include corresponding figures from a thresholding estimation of $\xi$, where the threshold is determined by the adaptive technique of \cite{durrieu2015nonparametric}, followed by a Bayesian fit of a GPD model to the excess data with the GPD location parameter set at the threshold, and the scale $\sigma$ and shape $1/\alpha$ estimated under the same prior as used in our semiparametric estimation. 

For the GPD sets, in addition to smaller bias and averaged error for the point estimates, the 95\% posterior credible intervals from the semiparametric method are much narrower with higher coverage than those from the thresholding method (figure included in supplementary material). This improvement is unsurprising; the true density $f_\alpha$ matches the {model} specification in a very strong way. A similar match between the {model} and the truth is absent in the GPD4 sets for which $f_\alpha$ may be expressed as $p_{(\alpha,\sigma), \psi}$ but only with a $\psi = \scL(\omega)$ for which $\lim_{u \to 0} \omega(u) = -\infty$. But this misspecification at the left does not appear to affect estimation of the right tail, where the semiparametric method performs as well or better than the thresholding method, especially when the tail is not too heavy. 

The Half-t sets pose a far more serious challenge to the semiparametric method. Although the averaged error of the estimates are comparable between the two methods, the semiparametric method incurs a strong negative bias for $\xi = \alpha^{-1}$ (i.e., underestimates the tail heaviness) with fairly tight posterior credible intervals resulting in poor coverage when true $\xi < 1$. For a half-t density, we may use \eqref{eq:psi} to write $f_\alpha = p_{(\alpha, \sigma),\psi}$ and verify that $\phi = \log \psi \in C[0,1]$ but 
$\dot \phi(1 - t) = \frac{(\alpha + 1)t^{\xi-1}}{\alpha} \times \frac{(1 + \alpha\sigma^2)t^\xi - \alpha\sigma^2}{t^{2\xi} + \alpha\sigma^2 (1 - t^\xi)^{2}},$
and hence $\lim_{u \to 1} \dot \phi(u)$ equals $-\infty, -2$ or 0, according to whether $\xi < 1$, $\xi=1$ or $\xi>1$. There is a strong mismatch between the {idealized shape} and the truth on the right tail when $\xi < 1$, causing the posterior distribution on $\xi$ to be biased downward.

\subsection{Estimation of tail quantiles} Although an accurate estimation of the tail index parameter is conceptually appealing, practical interest usually focuses on estimating tail quantiles of $f$. By extending the numerical analysis presented above, we find that the semiparametric joint estimation is substantially more effective at this task than the thresholding approach. 
Specifically, we look at the estimates and the 95\% posterior credible intervals of the tail quantiles $\bar Q(p) = \bar F^{-1}(p) = F^{-1}(1-p)$ associated with excess tail probabilities $p \in \{0.01, 0.001, 0.0001, 0.00001\}$ and compare these against the true values for the $3\times 5$ experimental sets reported above. Both methods produce credible intervals with coverage at or above the nominal 95\% level in most cases, but the semiparametric estimate is typically more accurate than the thresholding estimate, with up to 400\% improvement in some cases for very high quantiles (Table \ref{tab:gpds-long}). The semiparametric posterior credible interval is also much tighter than the threshold based interval (not shown). 

The only concern about coverage of the semiparametric credible interval arises in the Half-t sets with a  small $\xi$, for which the semiparametric model is strongly misspecified at the right tail. However, a closer inspection of these cases reveals that while the semiparametric method overestimates the high quantiles, it still gives a credible interval that is comparable in magnitude to the true quantile value. In contrast, the thresholding method may minimally contain the true value at the lower end of its interval but usually produces a very wide interval with the upper end of the interval being several orders of magnitudes larger than the truth (Figure \ref{fig:qtail}). In other words, in spite of the persistent bias in estimating asymptotic tail heaviness, the semiparametric method produces reasonably accurate and meaningful estimates of the tail itself.



\begin{figure*}[!t]
\centering
\includegraphics[width=\textwidth]{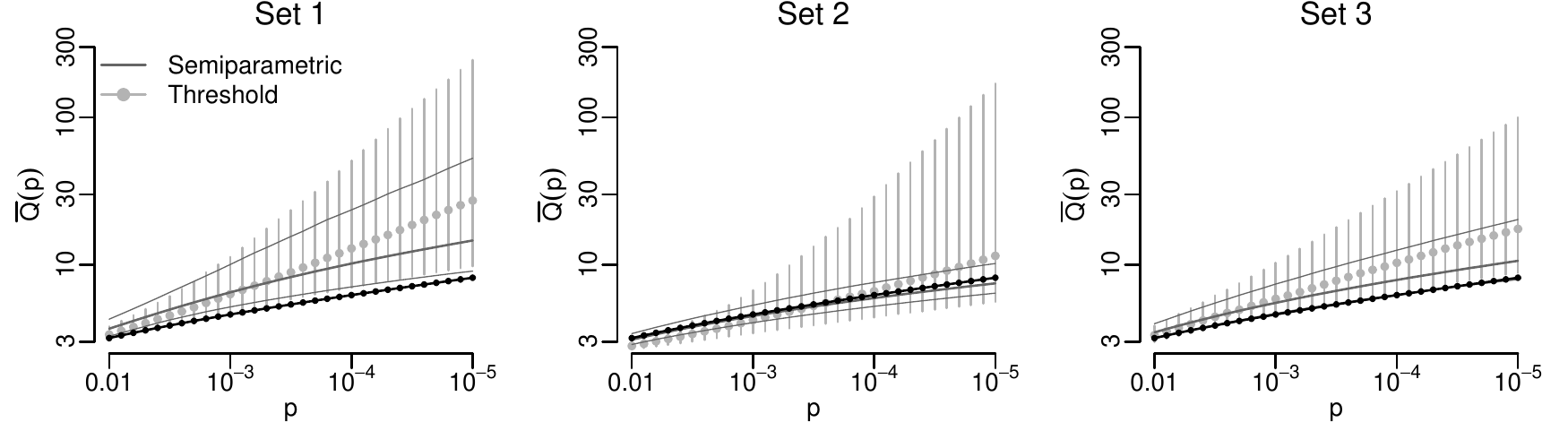}
\caption{A comparison of the 95\% posterior credible intervals for $\bar Q(p) = \bar F^{-1}(p)$ from the semiparametric and the thresholding methods, for three randomly chosen Half-t sets with $\xi = \alpha^{-1} = 0.1$, for which the semiparametric method serisouly underestimates $\xi$. True quantile values are shown as connected black beads. }
\label{fig:qtail}
\end{figure*}

\section{Fort Collins precipitation}
\label{sec:real-data}
\cite{katz2002statistics} present an analysis of total daily precipitation measurements (in inches) between 1900-1999 from a single rain gauge in Fort Collins, CO, estimating a heavy-tailed distribution with $\xi = \alpha^{-1} = 0.18$ at the threshold of 0.4 inches. \citet{scarrot2012review} estimate $\hat\xi = 0.21 \pm 0.04$ (standard error) at a similar threshold, and identify two additional candidates for the threshold value at which usual GPD diagnostics plots appear to stabilize, each leading to a different estimate of the tail index parameter: $\hat \xi = 0.13 \pm 0.07$ at threshold 0.85 and $\hat \xi = 0.003 \pm 0.09$ at threshold 1.2. This kind of ambiguity about the tail index is distinct from pure statistical uncertainty resulting from sampling variability. A Bayesian expression of joint uncertainty of the bulk and the tail could be particularly useful in mitigating between multiple distinct GPD tails offering partial match. 

The original data set\footnote{Taken from the \texttt{extRemes} package in \texttt{R} \citep{gilleland2011new}.} contains $N=36,524$ daily measurements with 78\% of the records being zero; the rest are recorded to the nearest hundredth of an inch. We remove all records with a precipitation measurement below 0.03 inches and jitter the remaining data ($n = 6180$, 17\% of all records) with a small uniform noise between $-0.005$ and $0.005$ to break ties while preserving original precision. With a smooth LGP prior at the core, the semiparametric method is sensitive to the presence of strong discontinuous features in the data histogram. A large number of ties in the records is one such feature, which necessitates the random jittering. The presence of excess zeros is another such feature, which cannot be overcome by  jittering alone, since the distribution of the jittered data still presents a big jump discontinuity near zero. In fact, we find that such an effect persists up to measurements of 0.02 inches, whose inclusion in the data analysis significantly distorts the posterior inference from what is obtained when analyzing all or some subset of records $\ge 0.03$ inches. We return to this point below after presenting our results.

Figure \ref{fig:fcl} (left panel) shows thresholding estimates of $\xi = \alpha^{-1}$ obtained from a Bayesian fit of a GPD tail to excess data over the threshold, as described in Section \ref{sec:tix}. These estimates of $\xi$ are different from those reported in \citet{scarrot2012review}, who employ maximum likelihood estimation without restricting $\xi > 0$ and without any regularization via a prior. However, the detailed analysis of \cite{katz2002statistics} offers strong evidence of a heavy tail (i.e., $\xi > 0$), and thus a Bayesian estimation with a relatively flat prior on $\xi \in [0,2]$ appears a better alternative. In spite of a weak prior specification, the posterior estimate and interval of $\xi$ are heavily influenced by the prior choice for large threshold values at which little excess data is left for parameter estimation. The adaptive threshold choice method of \cite{durrieu2015nonparametric} gives a threshold value of 0.93, for which $\xi$ is estimated to be 0.22 with a 95\% posterior credible interval $[0.08, 0.41]$. 

\begin{figure*}[!t]
\centering
\includegraphics[width=0.32\textwidth]{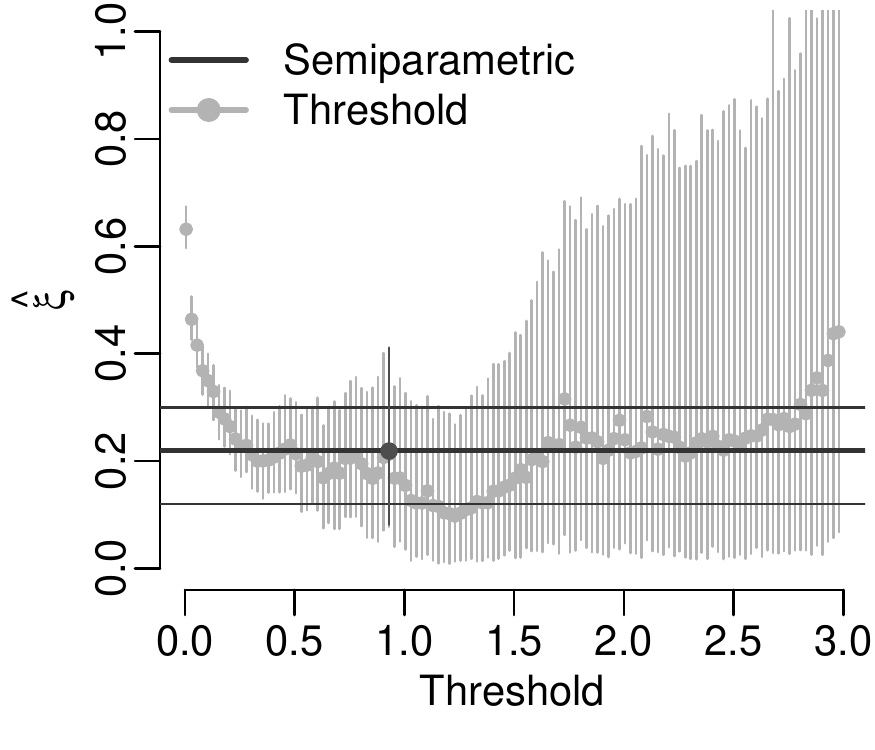}
\includegraphics[width=0.32\textwidth]{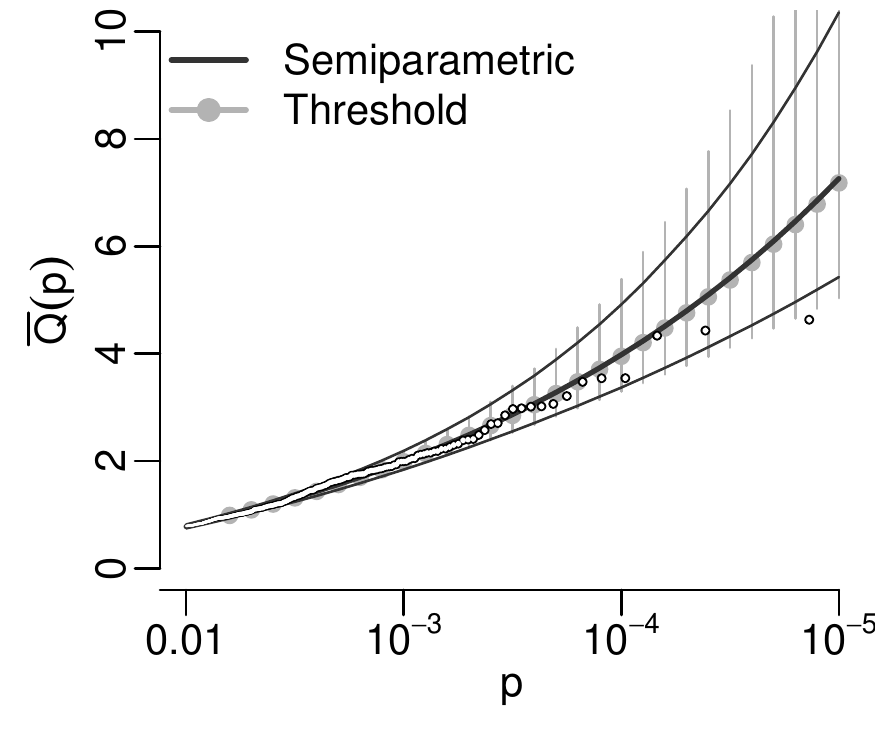}
\caption{Estimation of tail heaviness ($\xi = \alpha^{-1}$) and high tail quantiles for Fort Collins daily precipitation. Left panel shows thresholding estimates and 95\% credible intervals of $\xi$ corresponding to a grid of threshold values between 0.005 and 3.0 with an increment of 0.025; the adaptive choice of threshold = 0.93 is highlighted with a darker shade. The horizontal lines give the estimate and the 95\% credible interval for the semiparametric analysis. Right panel shows estimates of high quantiles $\bar Q(p) = \bar F^{-1}(p)$ from the semiparametric method and the thresholding method (threshold = 0.93). The exceedance probability $p$ corresponds to the original data of size $N = 36,524$, without any truncation or thresholding. A graph of the points $\{(\frac{i - 0.5}{N}, Y_{(i)}), 1 \le i \le N\}$ is included to visualize empirical quantiles, where $Y_{(i)}$ denotes the $i$-th order statistic of the original data.}
\label{fig:fcl}
\end{figure*}

The semiparametric method offers a comparable estimate of $\xi = 0.22$ with a tighter 95\% credible interval [0.12, 0.30]. Both methods point to a slightly heavier tail than what was reported by \cite{katz2002statistics}, but their estimate of $\xi = 0.18$ lies well within the 95\% credible intervals. The estimated high tail quantiles from the semiparametric method and the threshold method (threshold = 0.93) are very similar to one another and they line up well against empirical quantiles, but the 95\% credible intervals from the semiparametric method are considerably tighter (Figure \ref{fig:fcl}, right). However, the difference is much less stark than what we see in simulation studies.

The maximum daily precipitation during the observation period was 4.63 inches, recorded in the year 1997. The semiparametric method estimates the corresponding return period to be 47.6 years, with a 95\% posterior credible interval (PCI) of [23, 122.3]; the thresholding method gives similar estimates. These estimates are close to the estimated return period of 50.8 years reported by \cite{katz2002statistics}, who did not report an interval. The estimated return periods for 3 inches and 4 inches of precipitation are, respectively, 10 years (95\% PCI = [6.5, 16.5]) and 28 years (95\% PCI = [14.9, 59.9]). We note that in the 100 year observation period, there were 10 instances with 3 inches or more daily precipitation (1902, '04, '38, '49, '51, '51, '61, '77, '90, '97), of which three had more than 4 inches of rain ('02, '77, '97). More speculatively, we estimate the return period of 5 inches of rain to be 64.2 years (95\% PCI = [28.7, 178.8]). 

The estimates from the semiparametric method remain reasonably robust when analyzing further subsets of the data. When data analysis is restricted to records $\ge 0.1$ inches (or $\ge 0.4$ inches), the estimate of $\xi$ is $0.19$ with 95\% PCI = [0.06, 0.30] (or 0.16 with 95\% PCI = [0.04, 0.32]). For these further truncations, the estimated tail heaviness is slightly lower with greater uncertainty, but the upper end of the credible interval remains essentially the same. The same is reflected in high tail quantile estimates (Figure \ref{fig:fc-sens}). It appears that there is no strong evidence in the data pointing to a substantially lower tail heaviness than what was presented in \cite{katz2002statistics}. The possible lower estimates at higher threshold values discussed by \cite{scarrot2012review} are likely spurious. 

However, the semiparametric method is not completely robust to the issue of truncation. When data analysis is expanded to include all non-zero records, the posterior shifts substantially and results in a heavier tail estimate (0.3 with 95\% PCI = [0.25, 0.36]) with high tail quantiles being significantly larger than the estimates reported above (not shown). The same shift is noticed also when expanding the analysis only slightly to include records of 0.02 inches, or records of 0.01 and 0.02 inches. As indicated earlier, this discrepancy is likely an artifact of excess of zero and other tiny measurements which cannot be fully mitigated by jittering alone. See Section \ref{sec:discussion} for further discussion.

\begin{figure*}[!t]
\centering
\includegraphics[width=0.32\textwidth]{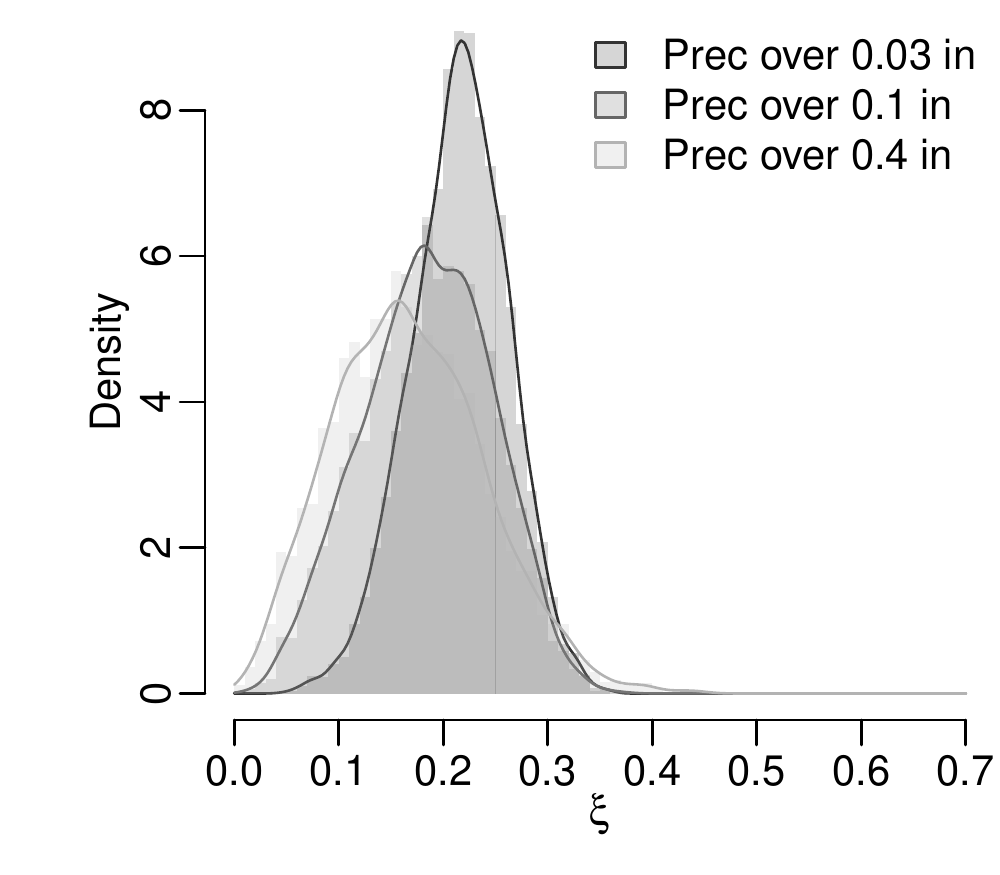}
\includegraphics[width=0.32\textwidth]{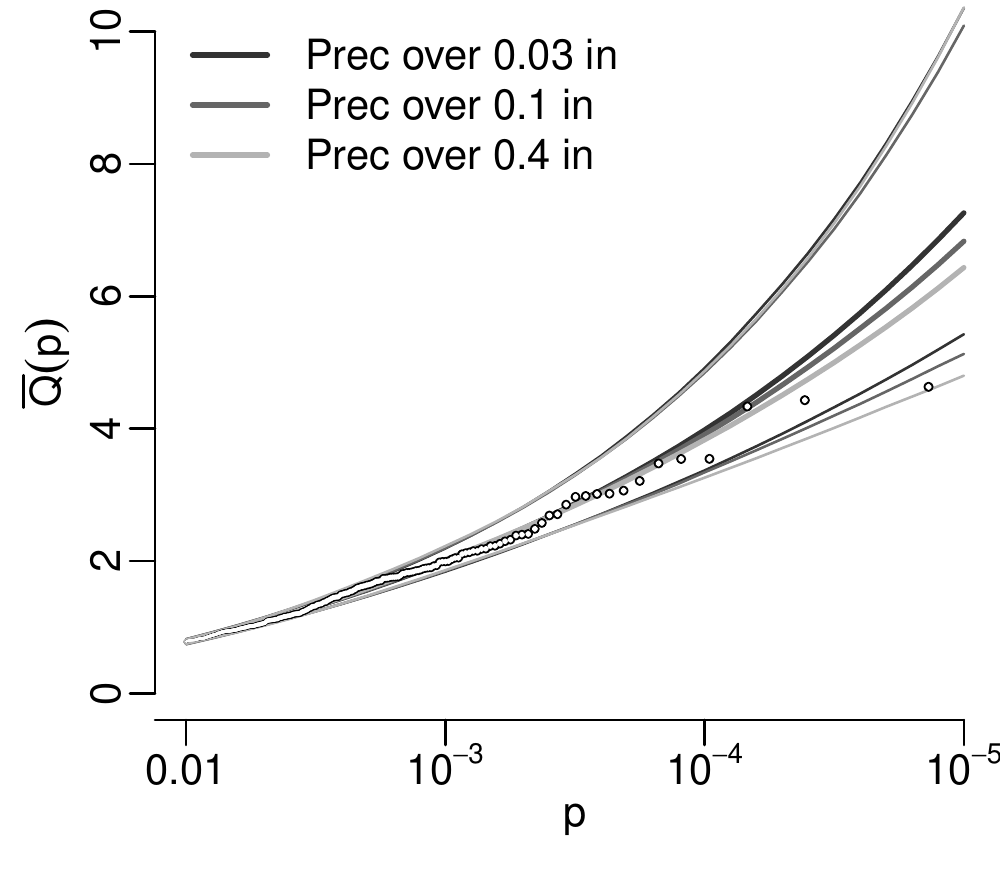}
\caption{Semiparametric estimation of tail heaviness ($\xi = \alpha^{-1}$) and high tail quantiles under further left truncation of the Fort Collins precipitation data. Posterior distribution of $\xi$ (left) widens and moves to the left slightly but maintaining overlap. High tail quantile estimates (right) remain robust. }
\label{fig:fc-sens}
\end{figure*}

\section{Concluding remarks}
\label{sec:discussion}
The semiparametric method analyzed here makes a case for likelihood based joint estimation of the bulk and the tail, with potential benefits that such joint estimation may improve estimation accuracy of high tail quantiles and provide a better uncertainty quantification of tail heaviness. Our asymptotic analysis reassures that sparse tail information does not get washed away by the bulk in such likelihood based estimation, however, suitable prior distributions are needed to strike a balance while also retaining full expressiveness of the bulk shape and tail decay rate. The transformation model \eqref{eq:model} appears to deliver the right theoretical platform especially when combined with the hierarchical LGP prior on the nonparametric density of the transformed data. A crucial element of the model is the choice of the Gaussian covariance kernel for the LGP prior. With a gamma prior on the inverse length-scale parameter of the kernel, the adaptive estimation accuracy of the LGP prior \citep{van2009adaptive} transfers seamlessly to our semiparametric setting.

{The semiparametric model \eqref{eq:model} adopts a GPD like tail and hence covers only the special Hall-Welsh class $\scD(\alpha, C_0, \xi, \epsilon, A)$ with $\xi = \min(1,\frac1{\alpha})$, albeit \cite{hall1984best, hall1985adaptive} make stringent assumptions on other quantities such as $\si(f)$ \citep{carpentier2015adaptive}. More ground might be recovered by using a more flexible parametric component, such as the three parameter extended-GPD family of \cite{beirlant2009second}. It could also be feasible to sharpen our posterior contraction rate in Theorem \ref{thm:tail-consis} by utilizing test functions that specifically exploit the idealized shape. {Theorem \ref{thm:tail-consis} intimately connects tail index estimation rate with the smoothness level of $f$. It will be interesting to examine whether this connection is intrinsic to the statistical task or simply an artifact of the proof technique adopted here.} We leave these extensions to a future study.}

{In applying the methodology developed here, an important consideration is whether one should fit the semiparametric model to the whole dataset, or only to data to the right of a low threshold. Our analysis of Fort Collins precipitation data indicates that while estimates are robust when data is truncated at or slightly over 0.02 inches, the estimates are sensitive to the presence of a massive number of excess zeros as well as a relative over-abundance of measurements at 0.01 and 0.02 which cannot be fully addressed by a simple jittering operation. Theorem \ref{thm:tail-consis} sheds light on this issue. Critical to the success of the joint estimation is the assumption of smoothness of the entire density function. In applications, it may be useful to threshold the data at a point above which the density function is believed to maintain a common level of smoothness. An alternative approach will be to apply a suitable smooth jitter to the data in the lower tail.}

A full estimation of the density function is also appealing with respect to model extension, e.g., in accounting for serial correlation or incorporating covariate information. For the latter task, we note that the transformation based density estimation model investigated here is closely related to the joint quantile regression model of \cite{yang2017joint}. Let $\zeta = \Psi^{-1}$ denote the quantile function of the transformed data $U = G_\theta(Y)$. Then the quantile function $Q(p)$ of the original data $Y$ could be expressed as $Q(p) = Q_\theta(\zeta(p)) = \int_0^{\zeta(p)} q_\theta(u) du$ where $Q_\theta = G_\theta^{-1}$ and $q_\theta = \dot Q_\theta$. To accommodate a predictor vector $x \in \bbR^d$, consider a quantile regression formulation
$Q(p|x) = \int_0^{\zeta(p)} q_\theta(u) \{1 + x^\top h(\omega(u))\}du$
where $\omega:u \mapsto (\omega_1(u), \ldots, \omega_d(u))^\top \in \bbR^d$ is unknown and $h(b)$ is a suitably chosen, fixed transformation that ensures $1 + x^\top h(b) \ge 0$ for all $b \in \bbR^d$ and all $x$ within a given bounded convex domain. This formulation is a special case of the joint linear quantile regression model proposed in \citet{yang2017joint} who jointly estimate $(\theta, \zeta, \omega)$ by adopting a hierarchical LGP prior on the quantile density $\dot\zeta$ and smooth Gaussian process priors on  $\omega_1, \ldots, \omega_d$. The theoretical analysis presented in the current paper is likely to yield a sharper understanding of asymptotic properties of the method by \cite{yang2017joint}, especially with respect to tail estimation. 

{
\section*{Appendix}
We present here proofs of the main results stated in Section \ref{sec:asymp-prop}. Several auxiliary technical results are stated whose proofs may be found in supplementary material. Several arguments build upon \cite{van2009adaptive} which we abbreviate below as VZ09.

\subsection*{A Auxiliary results for density estimation} Mixed partial derivatives of functions $\ell(\theta)$, $\theta = (\alpha, \sigma)$, are denoted by  $D^k\ell: = \frac{\partial^{|k|}\ell(\alpha,\sigma)}{\partial \alpha^{k_1} \partial \sigma^{k_2}}$ for any bi-index $k = (k_1, k_2) \in \{0,1,2, \ldots \}^2$ of order $|k| := k_1 + k_2$. Below $\Theta = [\alphalo,\alphahi]\times[\sigmalo,\sigmahi]$ with $0 < \alphalo < \alphahi < \infty$, $0 < \sigmalo < \sigmahi < \infty$ and any constants appearing in statements and proofs may implicitly depend on the boundary values of $\Theta$. Let $\Thetao$ 
denote the interior of $\Theta$. 

\begin{lem}
\label{lem:gpd}
Fix a density $\psi=\scL(\omega)$ with $\omega \in C^2[0,1]$. Let $q_\theta = p_{\theta, \psi}$, $\theta \in \Theta$. Then
$\max_{|k|=1} \sup_{\theta \in \Theta} |D^k \log q_\theta(y)| \le c_0\|\omega\|_{C^2}+ c_1\log(1+y)$, $\max_{|k| = 2} \sup_{\theta \in \Theta} |D^k \log q_\theta(y)| \le c_2\|\omega\|_{C^2}$,
for some constants $c_0,c_1,c_2$.
\end{lem}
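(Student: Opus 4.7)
The plan is to use the decomposition
\[
\log q_\theta(y) = \log g_\theta(y) + \omega(G_\theta(y)) - \log\int_0^1 e^{\omega(t)}\, dt,
\]
observing that the last term is $\theta$-independent and is annihilated by any $D^k$ with $|k|\ge 1$. The task therefore reduces to separately controlling the mixed partials of $\log g_\theta(y)$ (a pure GPD calculation) and of the composition $\omega\circ G_\theta$ (a chain-rule calculation that combines bounds on $\omega$-derivatives with bounds on $G_\theta$-derivatives).

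For $\log g_\theta(y) = -\log\sigma - (\alpha+1)\log(1+z)$ with $z = y/(\alpha\sigma)$, I would compute the first and second partials directly. The only term that can grow unboundedly in $y$ is the $-\log(1+z)$ arising from $\partial_\alpha$, and compactness of $\Theta$ gives $\log(1+z)\le C_1 + C_2\log(1+y)$, producing the $c_1\log(1+y)$ term in the first-order bound. Every other first-order summand reduces to a bounded expression in $z/(1+z)$, $1/\sigma$, etc.; differentiating once more converts the lingering $\log(1+z)$ into $z/(\alpha(1+z))$, which is bounded. Hence all second-order partials of $\log g_\theta$ are bounded uniformly in $(y,\theta)$.

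For $\omega\circ G_\theta$, chain-rule expansions of $D^k(\omega\circ G_\theta)$ for $|k|\le 2$ involve at most second-order derivatives of $\omega$ evaluated at $G_\theta(y)$, multiplied by products of first and/or second partials of $G_\theta$. Since $\|\dot\omega\|_\infty,\|\ddot\omega\|_\infty\le\|\omega\|_{C^2}$, the stated $c_0\|\omega\|_{C^2}$ and $c_2\|\omega\|_{C^2}$ contributions will follow from the key claim that $D^j G_\theta(y)$ is uniformly bounded over $(y,\theta)\in[0,\infty)\times\Theta$ for all $|j|\le 2$. To establish this, I would use $\bar G_\theta(y) = (1+z)^{-\alpha}$ and observe that each application of $\partial_\alpha$ or $\partial_\sigma$ produces an expression of the form $(1+z)^{-\alpha}\cdot R(z;\theta)$, where $R$ is polynomial in $z/(1+z)$, $\log(1+z)$, $1/\alpha$, and $1/\sigma$. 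Because $\alpha\ge\alphalo>0$, the factor $(1+z)^{-\alpha}$ dominates any logarithmic or polynomial growth of $R$, yielding uniform boundedness.

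The main obstacle, though entirely routine, is the bookkeeping required to verify uniform boundedness of the second-order partials $D^j G_\theta$: the product rule generates several terms, and each must be checked to retain the $(1+z)^{-\alpha}$ decay factor in order to dominate factors of $\log(1+z)$. Once that verification is complete, combining the two pieces yields both stated inequalities, with all implicit constants depending only on the boundary values of $\Theta$.
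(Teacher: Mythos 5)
Your proposal is correct and follows essentially the same route as the paper's proof: decompose $\log q_\theta = \log g_\theta + \phi\circ G_\theta$ (the normalizer being $\theta$-free), note that only $\partial_\alpha \log g_\theta$ grows, and like $\log(1+y)$, while all other first/second partials of $\log g_\theta$ and all partials of $G_\theta$ up to order two are uniformly bounded over $y\ge 0$ and the compact $\Theta$ because the $z^\alpha$-type decay factor dominates the logarithmic terms. The paper simply carries out the bookkeeping you defer, listing the explicit partial derivatives of $\log g_\theta$ and $G_\theta$ in terms of $z=(1+y/(\alpha\sigma))^{-1}$, so no substantive difference remains.
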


\begin{lem}
\label{lem:taylor}
Fix a density $\psi = \scL(\omega)$ with $\omega \in C^2[0,1]$
.  If $\theta_1, \theta_2 \in \Thetao$ then
\begin{enumerate}
\item $d_{\kl}(p_{\theta_1,\psi}, p_{\theta_2,\psi}) \le c_2\|\omega_1\|_{C^2}\|\theta_1 - \theta_2\|^2$. 
\setcounter{lem5counter}{\value{enumi}}
\end{enumerate}
Moreover, there exist positive numbers $c_3, t_0$ such that if $\|\theta_1 - \theta_2\| < t_0$ then
\begin{enumerate}
\setcounter{enumi}{\value{lem5counter}}
\item $\int p_{\theta_2,\psi}(y) (\frac{p_{\theta_1,\psi}(y)}{p_{\theta_2,\psi}(y)} - 1)^2 dy \le c_3e^{3t_0c_0\|\omega\|_{C^2}} \|\theta_1 - \theta_2\|^2$, and
\item $V(p_{\theta_1,\psi}, p_{\theta_2,\psi}) \le c_3e^{3t_0c_0\|\omega\|_{C^2}}\|\theta_1 - \theta_2\|^2$.
\end{enumerate} 
\end{lem}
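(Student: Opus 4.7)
The plan is to prove all three inequalities by expanding the log-density difference in $\theta$ to appropriate order and then integrating, using the derivative estimates in Lemma \ref{lem:gpd} as the main input and the polynomial tail of $p_{\theta,\psi}$ for the moment bounds.

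For part (1), I would Taylor-expand $\theta \mapsto \log p_{\theta,\psi}(y)$ to second order about $\theta_1$. Because $\theta_1 \in \Thetao$ and Lemma \ref{lem:gpd} provides an integrable domination of the first-order derivatives, the usual exchange of differentiation and integration is justified, and the first-order term integrates to zero against $p_{\theta_1,\psi}$ (the score identity $D^k\!\int p_{\theta,\psi}(y)dy = 0$ for $|k|=1$). What remains is a quadratic form $\tfrac12(\theta_2-\theta_1)^\top H(y;\theta^*)(\theta_2-\theta_1)$ for some intermediate $\theta^* \in \Theta$, with Hessian entries uniformly bounded by $c_2\|\omega\|_{C^2}$ by Lemma \ref{lem:gpd}; integrating against the probability density $p_{\theta_1,\psi}$ preserves this bound.

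For parts (2) and (3), I would start from the mean-value identity $g(y) := \log p_{\theta_1,\psi}(y) - \log p_{\theta_2,\psi}(y) = (\theta_1-\theta_2)^\top \nabla\log p_{\theta^*,\psi}(y)$ for some $\theta^* \in [\theta_1,\theta_2]$, so that $|g(y)| \le \|\theta_1-\theta_2\|\bigl(c_0\|\omega\|_{C^2}+c_1\log(1+y)\bigr)$ by Lemma \ref{lem:gpd}. For (3), the task is then to bound the log-moment $\int p_{\theta_1,\psi}(y)\log^2(1+y)\,dy$. The substitution $u = G_{\theta_1}(y)$ converts this into $\int_0^1 \psi(u)\log^2\!\bigl(1+G_{\theta_1}^{-1}(u)\bigr)\,du$; since $G_{\theta_1}^{-1}(u) \asymp (1-u)^{-1/\alpha_1}$ as $u\uparrow 1$, the integral is at most a $\Theta$-dependent constant times $\|\psi\|_\infty \le e^{2\|\omega\|_\infty} \le e^{2\|\omega\|_{C^2}}$. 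For (2), apply $|e^g-1|^2 \le g^2 e^{2|g|}$ to $p_{\theta_1,\psi}/p_{\theta_2,\psi}-1 = e^{g}-1$; when $\|\theta_1-\theta_2\|<t_0$ the MVT bound yields $e^{2|g(y)|} \le e^{2t_0 c_0\|\omega\|_{C^2}}(1+y)^{2t_0 c_1}$, and $\int p_{\theta_2,\psi}(y)g(y)^2(1+y)^{2t_0 c_1}\,dy$ reduces, via $u=G_{\theta_2}(y)$, to a convergent one-dimensional integral provided $t_0$ is small enough that $2t_0c_1 < \alphalo$.

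The main obstacle is the bookkeeping needed to funnel the various $\|\omega\|_{C^2}$-dependences into the single compact form $c_3 e^{3t_0c_0\|\omega\|_{C^2}}$ claimed in (ii)--(iii). Two independent sources of exponential growth appear: the $e^{2\|\omega\|_\infty}$ factor from $\|\psi\|_\infty$ that emerges after the $u$-substitution, and the $e^{2t_0c_0\|\omega\|_{C^2}}$ factor in (ii) coming from exponentiating the MVT remainder. After fixing $t_0$ small enough to guarantee the tail-moment convergence, using the elementary inequality $x^2 \le C e^{x}$ to absorb polynomial factors of $\|\omega\|_{C^2}$, and if necessary enlarging $c_0$ by a universal constant, all of these can be rolled into a single exponential with exponent $3t_0c_0\|\omega\|_{C^2}$. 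I do not anticipate any conceptual difficulty beyond this calculation.
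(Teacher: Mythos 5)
Your proposal is correct and follows essentially the same route as the paper's proof: a Taylor/mean-value expansion of $\log p_{\theta,\psi}$ in $\theta$, the score identity to kill the linear term in (1), the bound $|e^x-1|\le|x|e^{|x|}$ together with the derivative bounds of Lemma \ref{lem:gpd} for (2)--(3), and finiteness of the polynomial moment $\int(1+y)^{ct_0}p_{\theta,\psi}(y)\,dy$ for $t_0$ small relative to $\alphalo/c_1$. If anything, your bookkeeping of the two exponential sources (the $\|\psi\|_\infty\le e^{2\|\omega\|_\infty}$ factor and the exponentiated remainder) is more explicit than the paper, which simply folds the $\psi$-dependent moment into its constant $c_3$.
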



\begin{lem}
\label{lem:taylor2}
Fix $\theta_1\in \Thetao$, $\psi_1 = \scL(\omega_1)$ with $\omega_1 \in C^2[0,1]$ and $\epsilon \in (0,t_0)$. There exists a constant $K$ depending on $\|\omega_1\|_{C^2}$ such that 
$d_{\kl}(p_{\theta_1,\psi_1}, p_{\theta_2,\psi_2}) \le K\epsilon^2, 
V(p_{\theta_1,\psi_1},p_{\theta_2,\psi_2}) \le K\epsilon^2,$
for every $\theta_2 \in \Thetao$ with $\|\theta_1 - \theta_2\| \le \epsilon$ and every $\psi_2 = \scL(\omega_2)$ with $\|\omega_1 - \omega_2\|_\infty < \epsilon$. 
\end{lem}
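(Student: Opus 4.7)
The plan is to combine the two one-variable bounds of Lemma \ref{lem:taylor} into a joint bound via the telescoping identity
$$d_{\kl}(p_{\theta_1,\psi_1}, p_{\theta_2,\psi_2}) = d_{\kl}(p_{\theta_1,\psi_1}, p_{\theta_2,\psi_1}) + \expct_{p_{\theta_1,\psi_1}}\Big[\log\tfrac{\psi_1(G_{\theta_2}(Y))}{\psi_2(G_{\theta_2}(Y))}\Big].$$
The choice to route through $p_{\theta_2,\psi_1}$ (rather than $p_{\theta_1,\psi_2}$) is essential: it keeps the $\theta$-perturbation attached to the smooth $\omega_1$, so that Lemma \ref{lem:taylor}(1) applies and bounds the first term by $c_2\|\omega_1\|_{C^2}\epsilon^2$, while the $\omega$-perturbation in the second term can be handled without any regularity assumption on $\omega_2$.

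For the second term, substitute $\log\psi_i(u)=\omega_i(u)-\log\int e^{\omega_i}$ and apply a second-order Taylor expansion to $s\mapsto\log\int e^{(1-s)\omega_1+s\omega_2}$, whose first and second derivatives are $\int(\omega_2-\omega_1)\bar\psi_s$ and $\mathrm{Var}_{\bar\psi_s}(\omega_2-\omega_1)$ respectively, with $\bar\psi_s=\scL((1-s)\omega_1+s\omega_2)$. This yields
$$\log\tfrac{\int e^{\omega_2}}{\int e^{\omega_1}} = \int\psi_1(\omega_2-\omega_1)+\int_0^1(1-s)\,\mathrm{Var}_{\bar\psi_s}(\omega_1-\omega_2)\,ds,$$
and Popoviciu's inequality gives $\mathrm{Var}_{\bar\psi_s}(\omega_1-\omega_2)\le\|\omega_1-\omega_2\|_\infty^2\le\epsilon^2$. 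Letting $q$ denote the density of $U=G_{\theta_2}(Y)$ under $Y\sim p_{\theta_1,\psi_1}$, the second term collapses to $\int(q-\psi_1)(\omega_1-\omega_2)+O(\epsilon^2)$. The cross term is bounded by $\|\omega_1-\omega_2\|_\infty\|q-\psi_1\|_1$, and since $\omega_1\in C^2[0,1]$ makes $\psi_1$ smooth while the change-of-variable map $\phi_2=G_{\theta_2}\circ G_{\theta_1}^{-1}$ is a smooth $O(\|\theta_1-\theta_2\|)$ perturbation of the identity uniformly on $[0,1]$, standard perturbation arguments give $\|q-\psi_1\|_1\le C\|\theta_1-\theta_2\|\le C\epsilon$, producing the desired $K\epsilon^2$ bound.

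For the $V$-bound I would work pointwise, decomposing
$$\log\tfrac{p_{\theta_1,\psi_1}(y)}{p_{\theta_2,\psi_2}(y)} = [\log g_{\theta_1}(y)-\log g_{\theta_2}(y)] + [\omega_1(G_{\theta_1}(y))-\omega_1(G_{\theta_2}(y))] + [(\omega_1-\omega_2)(G_{\theta_2}(y))] - \log\tfrac{\int e^{\omega_1}}{\int e^{\omega_2}}.$$
Lemma \ref{lem:gpd} bounds the first bracket by $C(1+\log(1+y))\epsilon$; the second by $\|\dot\omega_1\|_\infty|G_{\theta_1}(y)-G_{\theta_2}(y)|\le C\|\omega_1\|_{C^2}\epsilon$ using smoothness of $G_\theta$ in $\theta$; the third and fourth each by $\epsilon$. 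Squaring and integrating against $p_{\theta_1,\psi_1}$, which has polynomial tail with index $\alpha_1>\alphalo>0$ so that $\expct[\log^2(1+Y)]<\infty$, delivers the required $V\le K\epsilon^2$.

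The main obstacle is the linear TV-type estimate $\|q-\psi_1\|_1\le C\|\theta_1-\theta_2\|$, which is what transforms what naively looks like an $O(\epsilon)$ cross term into the quadratic $O(\epsilon^2)$ order needed. This step crucially uses $\omega_1\in C^2[0,1]$ rather than merely continuity: without $C^2$ regularity of $\psi_1$, the density perturbation induced by $\phi_2-\mathrm{id}$ would only yield a trivial $O(1)$ bound, and $K$ would not depend only on $\|\omega_1\|_{C^2}$.
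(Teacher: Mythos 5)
Your proposal is correct in substance and follows essentially the same skeleton as the paper's proof: both route the comparison through the intermediate density $p_{\theta_2,\psi_1}$ so that the $\theta$-perturbation is controlled by Lemma \ref{lem:taylor}(1), and both reduce the remaining term to $d_{\kl}(\psi_1,\psi_2)$ plus the cross term $\int(q-\psi_1)\log(\psi_1/\psi_2)$ — in the paper's notation this is $P_{21}[(\tfrac{p_{11}}{p_{21}}-1)\log\tfrac{p_{21}}{p_{22}}]$, the same object written in $y$-space. The differences are in execution: for $d_{\kl}(\psi_1,\psi_2)$ the paper cites Lemma 3.1 of van der Vaart and van Zanten (2008), which you re-derive via the Taylor expansion of the log-normalizer and Popoviciu (fine, with a clean constant); for the cross term the paper uses Cauchy--Schwarz with the $\chi^2$-type bound of Lemma \ref{lem:taylor}(2), whereas you use H\"older ($L^\infty\times L^1$) with a total-variation bound on the pushforward density $q$. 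For $V$, your pointwise decomposition together with finiteness of $\expct[\log^2(1+Y)]$ under a polynomially-tailed $p_{\theta_1,\psi_1}$ is a sound substitute for the paper's use of Lemma \ref{lem:taylor}(3) and the sup-norm bound on $\log(\psi_1/\psi_2)$.

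The one soft spot is your justification of $\|q-\psi_1\|_1 \le C\|\theta_1-\theta_2\|$. The map $G_{\theta_2}\circ G_{\theta_1}^{-1}$ is indeed uniformly $O(\|\theta_1-\theta_2\|)$-close to the identity in sup norm, but its Jacobian is not uniformly close to $1$: near $u=1$ it behaves like $(1-u)^{\pm O(\|\theta_1-\theta_2\|)}$ whenever $\alpha_1\neq\alpha_2$, so the ``smooth perturbation of the identity, uniformly on $[0,1]$'' argument does not transfer verbatim to the density comparison, and the smoothness of $\psi_1$ alone does not finish it. The claim is nonetheless true and closable in one line with tools you already invoke: total variation is invariant under the bijection $y\mapsto G_{\theta_2}(y)$, so $\|q-\psi_1\|_1=\|p_{\theta_1,\psi_1}-p_{\theta_2,\psi_1}\|_1\le\{2\,d_{\kl}(p_{\theta_1,\psi_1},p_{\theta_2,\psi_1})\}^{1/2}\le\{2c_2\|\omega_1\|_{C^2}\}^{1/2}\,\|\theta_1-\theta_2\|$ by Pinsker and Lemma \ref{lem:taylor}(1); alternatively, Cauchy--Schwarz with Lemma \ref{lem:taylor}(2) — which is exactly the paper's move — gives the same order. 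With that patch your argument delivers both stated bounds with $K$ depending only on $\|\omega_1\|_{C^2}$ (and $\Theta$), as required.
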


\begin{lem}
\label{lem:h}
If $\theta_1, \theta_2 \in \Thetao$ and $\psi_1 = \scL(\omega_1)$, $\psi_2 = \scL(\omega_2)$ with $\omega_1, \omega_2 \in C^2[0,1]$ then 
$d_H(p_{\theta_1,\psi_1}, p_{\theta_2,\psi_2}) \le c_2\|\omega_1\|^{1/2}_{C^2} \|\theta_1 - \theta_2\| + \|\omega_1 - \omega_2\|_\infty e^{\|\omega_1 - \omega_2\|_\infty/2}.$

\end{lem}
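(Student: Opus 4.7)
\emph{Proof plan.} The plan is to split the Hellinger distance using the triangle inequality through the intermediate density $p_{\theta_2,\psi_1}$, separating variation in $\theta$ from variation in $\psi$:
$$d_H(p_{\theta_1,\psi_1}, p_{\theta_2,\psi_2}) \le d_H(p_{\theta_1,\psi_1}, p_{\theta_2,\psi_1}) + d_H(p_{\theta_2,\psi_1}, p_{\theta_2,\psi_2}).$$
The two summands will reproduce, respectively, the two terms on the right-hand side of the stated bound.

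For the first summand (parameter change, $\psi_1$ fixed), I would invoke the standard inequality $d_H^2 \le d_{\kl}$ together with Lemma \ref{lem:taylor}(1), which yields $d_{\kl}(p_{\theta_1,\psi_1}, p_{\theta_2,\psi_1}) \le c_2\|\omega_1\|_{C^2}\|\theta_1 - \theta_2\|^2$. Taking square roots produces a bound of the form $c_2\|\omega_1\|^{1/2}_{C^2}\|\theta_1 - \theta_2\|$ (after a harmless rescaling of the constant still denoted $c_2$), matching the first summand of the claim.

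For the second summand (density change, $\theta_2$ fixed), the decisive observation is that $p_{\theta_2,\psi_i}(y) = g_{\theta_2}(y)\psi_i(G_{\theta_2}(y))$, so the change of variables $u = G_{\theta_2}(y)$ collapses the integral to $d_H(p_{\theta_2,\psi_1}, p_{\theta_2,\psi_2}) = d_H(\psi_1, \psi_2)$. It remains to bound $d_H(\psi_1,\psi_2)$ in terms of $\|\omega_1 - \omega_2\|_\infty$ alone. Writing $\sqrt{\psi_i} = v_i/N_i$ with $v_i = e^{\omega_i/2}$ and $N_i = \|v_i\|_{L^2[0,1]}$, I would combine the projection-onto-unit-sphere inequality $\|v_1/N_1 - v_2/N_2\|_{L^2} \le 2\|v_1 - v_2\|_{L^2}/N_1$ with the pointwise Lipschitz estimate $|e^a - e^b| \le e^{\max(a,b)}|a-b|$. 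Setting $\delta = \|\omega_1 - \omega_2\|_\infty$, the Lipschitz estimate applied with $a = \omega_1(u)/2$, $b = \omega_2(u)/2$ gives $|v_1(u) - v_2(u)| \le (\delta/2)e^{\delta/2}\, v_1(u)$ pointwise, hence $\|v_1 - v_2\|_{L^2} \le (\delta/2)e^{\delta/2} N_1$, and multiplication by $2/N_1$ closes the chain at $d_H(\psi_1,\psi_2) \le \delta e^{\delta/2}$, which is exactly the second summand. No step in this argument is particularly delicate; the only subtlety worth flagging is the careful normalization when transferring the pointwise bound on $v_i$ to an $L^2$ bound on the normalized square roots. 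Overall the proof is a two-move triangle-plus-change-of-variables argument, chaining Lemma \ref{lem:taylor} for the $\theta$ piece with a classical logistic-transform Hellinger estimate for the $\psi$ piece.
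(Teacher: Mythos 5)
Your proof is correct and follows essentially the same route as the paper's: the triangle inequality through the intermediate density $p_{\theta_2,\psi_1}$, the bound $d_H \le d_{\kl}^{1/2}$ combined with Lemma \ref{lem:taylor}(1) for the $\theta$-term, and the change-of-variables identity $d_H(p_{\theta_2,\psi_1},p_{\theta_2,\psi_2})=d_H(\psi_1,\psi_2)$ for the $\psi$-term. The only difference is that you derive the logistic-transform bound $d_H(\psi_1,\psi_2)\le \|\omega_1-\omega_2\|_\infty e^{\|\omega_1-\omega_2\|_\infty/2}$ from scratch (correctly), whereas the paper simply cites Lemma 3.1 of van der Vaart and van Zanten (2008).
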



\subsection*{B Proof of Condition C1*}
For a mean-zero Gaussian process on $[0,1]$ with covariance $c_\lambda$, let $\nu^\lambda$ denote the Gaussian measure with respect to the Borel $\sigma$-algebra on $(C[0,1], \|\cdot\|_\infty)$; see Section 2 of \cite{van2008rates} for necessary technical details. Define $\bar\nu(\cdot) = \int \nu^\lambda(\cdot) \pi_\lambda(\lambda)d\lambda$ as the probability measure on $C[0,1]$ under the hierarchical Gaussian process prior specification with a prior density $\pi_\lambda$ on the inverse length-scale parameter. In light of Lemma \ref{lem:taylor2}, to prove Condition C1* it is enough to to show that for all large $n$, both $t_n=\pi_\theta(\{\theta:\|\theta - \theta^*\| \le \bar\epsilon_n\})$ and $w_n=\bar\nu(\{\omega : \|\omega - \phi^*\|_\infty \le \bar\epsilon_n\})$ are larger than $e^{-CKn\bar\epsilon_n^2}$ where $C$ is a constant that may depend on $\|\phi^*\|_{C^2}$. The bound on $t_n$ follows trivially from the assumption on $\pi_\theta$ and that on $w_n$ follows directly from Theorem 3.1 of VZ09.

\subsection*{C Proof of Condition C2*}
Suppose there exist sets $\scB_1,\scB_2,\ldots \subset C^2[0,1]$ such that for all large $n$,
\begin{align}
\bar\nu(\scB_n^c) & \le e^{-(C+4)n\bar\epsilon_n^2}\label{eq:sieve1}\\
\log N(\bar\epsilon_n, \scB_n, \|\cdot\|_{C^2}) & \le n \epsilon_n^2/2\label{eq:sieve2}\\
\sup\{\|\omega\|_{C^2}: \omega \in \scB_n\} & \le (n\bar\epsilon_n)^b,\label{eq:sieve3}
\end{align}
for some $b \ge 1$. 
Then, $\scF_1, \scF_2, \ldots$ could be simply constructed as $\scF_n = \{p_{\theta,\psi}: \theta \in \Theta, \psi =\scL(\omega), \omega \in \scB_n\}$. To see that these sets satisfy the requirements of C2*, notice that $\Pi(\scF_n^c) = \bar\nu(\scB_n^c) \le e^{-(C+4)n\bar\epsilon_n^2}$, and,  by Lemma \ref{lem:h}, $\log N(\bar\epsilon_n, \scF_n, d_H) \le \log N(\bar\epsilon_n/(c_2(n\bar\epsilon_n)^{b/2}), \Theta, \|\cdot\|) + \log N(\bar\epsilon_n, \scB_n, \|\cdot\|_{C^2}) \le \log(\diam(\Theta)\cdot n^b) + n\epsilon_n^2/2 \le n\epsilon_n^2$ for all large $n$. 

Conditions \eqref{eq:sieve1}-\eqref{eq:sieve2} mirror conditions (3.6)-(3.7) of VZ09, but with the crucial technical difference that we need entropy calculation in $\|\cdot\|_{C^2}$ as opposed to $\|\cdot\|_\infty$. Accordingly, we adapt the construction of $\scB_n$ for $(C[0,1],\|\cdot\|_\infty)$ by VZ09 to $(C^2[0,1], \|\cdot\|_{C^2})$. Our adaptation also produces a smaller exponent $b$ in \eqref{eq:sieve3} than what is possible with the original construction of VZ09. Although a smaller exponent is not critical to the current proof, it proves useful for tail index estimation. Our adaptation builds on the well known fact that a centered Gaussian process with covariance $c_\lambda$ has infinitely differentiable sample paths with probability one. Therefore the Gaussian measures $\nu^\lambda$ introduced in the preceding section could also be viewed as probability measures with respect to the refined Borel $\sigma$-algebra of $(C^2[0,1], \|\cdot\|_{C^2})$. A more formal treatment is outlined below. Hereafter, $\Re(z)$ and $\Im(z)$ denote the real and imaginary parts of a complex number $z$.

VZ09 show that the reproducing kernel Hilbert space $\bbH^\lambda$ associated with $c_\lambda$ consists of functions $h(u)=\Re(\int e^{ut\sqrt{-1}} \eta(t)\mu_\lambda(t)dt)$ with $\|h\|_{\bbH^\lambda} = \|\eta\|_{L_2(\mu_\lambda)}$, where $\mu_\lambda(t) = e^{-t^2/{4\lambda^2}}/(2\lambda\sqrt{\pi})$ is the spectral density associated with $c_\lambda$. By applying Cauchy-Schwarz inequality, with differentiations under integration as needed, it follows that 
\begin{equation}
\|h\|_\infty \le \|h\|_{\bbH^\lambda}, \|\dot h\|_\infty \le \sqrt{2}\lambda\|h\|_{\bbH^\lambda},~\mbox{and}~\|\ddot h\|_\infty \le \sqrt{12}\lambda^2\|h\|_{\bbH^\lambda}.
\label{eq:bounds}
\end{equation} 
Clearly, $\|h\|_{C^2} \le (1+\sqrt{2}\lambda+\sqrt{12}\lambda^2)\|h\|_{\bbH}$ and $\bbH^\lambda$ can be continuously and densely embedded within the Banach space $(C^2[0,1], \|\cdot\|_{C^2})$, guaranteeing a Borel measure $\nu^\lambda$ on the embedding Banach space matching the law of a centered Gaussian process with covariance $c_\lambda$. As before, define $\bar\nu(\cdot) = \int \nu^\lambda(\cdot) \pi_\lambda(\lambda)d\lambda$.

Let $\bbH^\lambda_1$ and $\scB_1$ denote the unit balls of $\bbH^\lambda$ and $C^2[0,1]$. Recall, $\bar\epsilon_n = B n^{-t}(\log n)^s$, $\epsilon_n = \bar\epsilon_n \log n$ where we are free to choose $B>0$. To start off, take $B$ large enough such that $r_n = n\bar\epsilon_n^2 > 1$ for all $n$. Let $m_n$ be the smallest integer larger than $\log_2(r_n)$. Define
\begin{equation}
\label{eq:Bn}
\scB_n = [\{\cup_{j = 1}^{m_n} (\sqrt{2}M_n\bbH^{2^j}_1)\}\cup \{M_n\delta_n^{-1/2}\bbH^1_1\} \cup \{\cup_{\lambda < \delta_n} (M_n \bbH_1^{\lambda})\} ]+ \bar\epsilon_n \bbB_1
\end{equation}
where $M_n = 16Cr^{1/2}_n\log(r_n/\bar\epsilon_n)$ with $C$ taken from Lemma \ref{lem:rkhs-entropy} below, and $\delta_n = \bar\epsilon_n/(4M_n)$. Because of \eqref{eq:bounds}, for all large $n$, $\scB_n \subset 5r^2_nM_n\bbB_1$ and hence $\scB_n$ satisfies \eqref{eq:sieve3} with $b = 2.5$. By Lemma 4.7 of VZ09, $\scB_n \supset M_n\bbH^\lambda + \bar\epsilon_n \bbB_1$ for every $0 < \lambda \le r_n$. Borell's inequality implies that $\nu^\lambda(\scB_n^c) \le 1 - \Phi(\Phi^{-1}(\nu^\lambda(\bar\epsilon_n \bbB_n)) + M_n) \le 1 - \Phi(\Phi^{-1}(\nu^{r_n}(\bar\epsilon_n \bbB_1)) + M_n)$ where the second inequality follows since $\nu^\lambda(\epsilon \bbB_1)$ is decreasing in $\lambda$ for every $\epsilon > 0$ (Lemma \ref{lem:small-ball-ordering} below). As $\nu^{r_n}(\bar\epsilon_n \bbB_1)\le \nu^1(\bar\epsilon_n \bbB_n) < 1/4$ and $M_n \ge 4 \sqrt{\log(1/\nu^{r_n}(\bar\epsilon_n \bbB_1))}$ for all large $n$ (Lemma \ref{lem:rkhs-entropy} below), it must be that $\nu^\lambda(\scB_n^c) \le 1 - \Phi(M_n/2) \le e^{-M_n^2/8} \le e^{-r_n}$ for every $\lambda \in (0, r_n)$, for all large $n$. This establishes \eqref{eq:sieve1}, with $B$ chosen suitably large, since $\pi_\lambda((r_n, \infty)) \le e^{-C_3r_n}$ for all large $n$ for some constant $C_3$.

To establish \eqref{eq:sieve2}, first note that every $h \in \cup_{\lambda < \delta_n}(M_n \bbH^\lambda_1)$ satisfies $\|h - h(0)\|_{C^2} \le \bar\epsilon_n$ by \eqref{eq:bounds}, i.e., as an element of $C^2[0,1]$, the function $h(u)$ is within $\bar\epsilon_n$ distance of a constant function whose constant value ranges within $[-M_n, M_n]$. Clearly, $\log N(2\bar\epsilon_n, \cup_{\lambda < \delta_n} (M_n \bbH^\lambda_1) + \bar\epsilon_n \bbB_1, \|\cdot\|_{C^2}) \le \log\frac{2M_n}{\bar\epsilon_n}$. Next, by Lemma \ref{lem:rkhs-entropy} below, $\log N(2\bar\epsilon_n, M_n\delta_n^{-1/2}\bbH^1_1 + \bar\epsilon_n \bbB_1, \|\cdot\|_{C^2})\le C\log^2(\frac{M_n}{\bar\epsilon_n}\delta_n^{-1/2})$ and $\log N(2\bar\epsilon_n, \sqrt{2}M_n \bbH^{2^j}_1 + \bar\epsilon_n \bbB_1, \|\cdot\|_{C^2}) \le C2^j \log^2(\frac{2^{j+1/2}M_n}{\bar\epsilon_n}) \le 2Cr_n \log^2(\frac{r_nM_n}{\bar \epsilon_n})$ for each $1 \le j \le m_n$ by the monotonicity of $\log x$. Consequently,  
\[
\log N(2\bar\epsilon_n, \cup_{j = 1}^{m_n} (\sqrt{2}M_n\bbH^{2^j}_1) + \bar\epsilon_n \bbB_1, \|\cdot\|_{C^2}) \le \log(m_n) + 2Cr_n(\log \tfrac{r_nM_n}{\bar \epsilon_n})^2,
\]
concluding the proof of Condition C2*. Two auxiliary results used in the above prove are:


\begin{lem}
\label{lem:small-ball-ordering}
For any fixed $\epsilon > 0$, the small ball probability $\nu^\lambda(\epsilon \bbB_1)$ is decreasing in $\lambda > 0$. 
\end{lem}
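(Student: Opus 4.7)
The plan is to exploit the scaling self-similarity of the covariance kernel $c_\lambda$ and reduce the monotonicity to a trivial pathwise statement under a well chosen coupling. Note that $c_\lambda(u,v)=e^{-\lambda^2(u-v)^2}=c_1(\lambda u,\lambda v)$, so a centered Gaussian process on $[0,1]$ with covariance $c_\lambda$ has the same law as $u\mapsto W(\lambda u)$, where $W$ is a centered Gaussian process on $[0,\infty)$ with covariance $c_1$. Since $c_1$ is real-analytic, standard results (e.g., Adler--Taylor) imply that $W$ has $C^\infty$ sample paths almost surely.

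The key step is to realize all the measures $\nu^\lambda$ simultaneously on one probability space by fixing a single sample path of $W$ and defining $\omega^\lambda(u):=W(\lambda u)$ for $u\in[0,1]$ and each $\lambda>0$. Then $\omega^\lambda$ has law $\nu^\lambda$, and because differentiation commutes with the deterministic rescaling,
\[
\dot\omega^\lambda(u)=\lambda W'(\lambda u),\qquad \ddot\omega^\lambda(u)=\lambda^2 W''(\lambda u).
\]
Consequently
\[
\|\omega^\lambda\|_{C^2}=M_0(\lambda)+\lambda M_1(\lambda)+\lambda^2 M_2(\lambda),\qquad M_j(\lambda):=\sup_{s\in[0,\lambda]}|W^{(j)}(s)|.
\]

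For a fixed sample path of $W$, each $M_j(\lambda)$ is the supremum of a fixed nonnegative function over an expanding interval, hence non-decreasing in $\lambda$; the prefactors $\lambda^j$ are also non-decreasing and nonnegative. Therefore $\lambda\mapsto\|\omega^\lambda\|_{C^2}$ is almost surely non-decreasing. This gives nested events
\[
\{\omega^{\lambda_2}\in\epsilon\bbB_1\}\subset\{\omega^{\lambda_1}\in\epsilon\bbB_1\}\quad\text{whenever }\lambda_1<\lambda_2,
\]
and taking probabilities yields $\nu^{\lambda_2}(\epsilon\bbB_1)\le\nu^{\lambda_1}(\epsilon\bbB_1)$, which is the claim.

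There is really no hard step here; the only thing that needs a moment of care is justifying that $W$ has $C^2$ sample paths, so that $M_1(\lambda)$ and $M_2(\lambda)$ are meaningful pathwise and that $\omega^\lambda$ can be regarded as a random element of $(C^2[0,1],\|\cdot\|_{C^2})$. That follows from the infinite differentiability (in fact analyticity) of $c_1$, which is already implicit in the very construction of $\nu^\lambda$ as a Borel measure on $(C^2[0,1],\|\cdot\|_{C^2})$ given earlier in the paper via the embedding of $\bbH^\lambda$ and the bounds \eqref{eq:bounds}.
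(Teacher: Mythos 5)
Your proof is correct and is essentially the paper's own argument: the paper also realizes $\nu^\lambda$ as the law of the rescaled process $u\mapsto W(\lambda u)$ for a single stationary Gaussian process $W$ with covariance $e^{-(t-s)^2}$, writes $\|W^\lambda\|_{C^2}=\sup_{0\le t\le\lambda}|W(t)|+\lambda\sup_{0\le t\le\lambda}|\dot W(t)|+\lambda^2\sup_{0\le t\le\lambda}|\ddot W(t)|$, and concludes from its pathwise monotonicity in $\lambda$. Your write-up just makes the coupling and the smoothness of the sample paths explicit, which the paper leaves implicit.
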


\begin{lem}
\label{lem:rkhs-entropy}
There exist $C, \epsilon_0$ such that for all $\lambda \ge 1$ and all $\epsilon<\epsilon_0$,
(a) $\log N(\epsilon, \bbH^\lambda_1, \|\cdot\|_{C^2}) \le C \lambda \log^2 (\lambda/\epsilon)$, and
(b) $-\log\nu^\lambda(\epsilon \bbB_1) \le C \lambda \log^2(\lambda/\epsilon)$.
\end{lem}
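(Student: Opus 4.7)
The plan is to adapt the Fourier-analytic argument of VZ09 for supremum-norm entropy and small-ball probability to the stronger $C^2$ topology, exploiting the spectral representation $h(u) = \Re(\int e^{iut}\eta(t)\mu_\lambda(t)dt)$ of elements $h \in \bbH^\lambda$ with $\|h\|_{\bbH^\lambda} = \|\eta\|_{L_2(\mu_\lambda)}$. Differentiating under the integral and applying Cauchy-Schwarz yields, for $k \in \{0,1,2\}$ and any $T > 0$, the bound $\|(h - h_T)^{(k)}\|_\infty \le \|\eta\|_{L_2(\mu_\lambda)}\sqrt{\int_{|t|>T} t^{2k}\mu_\lambda(t)dt}$, where $h_T(u) = \Re(\int_{|t|\le T} e^{iut}\eta(t)\mu_\lambda(t)dt)$ is the spectral truncation of $h$.

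For part (a), I would choose $T$ of order $\lambda\sqrt{\log(\lambda/\epsilon)}$; the Gaussian tail of $\mu_\lambda$ makes the dominant tail moment $\int_{|t|>T} t^4 \mu_\lambda(t)dt$ smaller than $\epsilon^2$, so that $\|h - h_T\|_{C^2} \le \epsilon/2$ uniformly over $\bbH^\lambda_1$. The truncated function $h_T$ is bandlimited to $[-T,T]$ and, by standard Jackson-type approximation on $[0,1]$, can be approximated in $\|\cdot\|_{C^2}$ to precision $\epsilon/2$ by a trigonometric polynomial of degree $N = O(T + \log(1/\epsilon))$. Covering the coefficient vectors of this $O(N)$-dimensional family in $\ell_2$ at the requisite precision yields $\log N(\epsilon, \bbH^\lambda_1, \|\cdot\|_{C^2}) \le C N \log(N/\epsilon) \le C' \lambda \log^2(\lambda/\epsilon)$. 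For part (b), I would invoke the Kuelbs--Li equivalence relating small-ball probability of a centered Gaussian measure on a Banach space to the metric entropy of the corresponding RKHS unit ball in the same norm: if $\phi(\epsilon) := \log N(\epsilon, \bbH^\lambda_1, \|\cdot\|_{C^2})$ satisfies a mild regular-variation condition $\phi(\epsilon/2) \le c_0 \phi(\epsilon)$, then $-\log \nu^\lambda(\epsilon\bbB_1) \le C\phi(c\epsilon)$, and the bound from part (a) is manifestly regularly varying, giving $-\log \nu^\lambda(\epsilon \bbB_1) \le C'' \lambda \log^2(\lambda/\epsilon)$.

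The main obstacle is the careful book-keeping of $\lambda$-dependent constants in part (a): the coefficient in front of $\log^2(\lambda/\epsilon)$ must remain uniform in $\lambda$ as $\lambda \to \infty$ (only a linear $\lambda$ prefactor is permitted), so the $t^4$ moment appearing when bounding $\ddot h$ must be absorbed purely through the Gaussian tail exponent of $\mu_\lambda$ rather than through the subsequent polynomial-approximation step. Once (a) is secured with the correct uniform constant, (b) follows essentially automatically from standard Gaussian-process machinery.
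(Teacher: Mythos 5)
Your part (b) is the same route the paper takes (entropy of $\bbH^\lambda_1$ plus the Kuelbs--Li/Li--Linde transfer, exactly as in Lemma 4.7 of VZ09), and your spectral-truncation step for (a) is sound: with $T\asymp \lambda\sqrt{\log(\lambda/\epsilon)}$ the Gaussian tail of $\mu_\lambda$ indeed gives $\|h-h_T\|_{C^2}\le \epsilon/2$ uniformly over $\bbH^\lambda_1$. The gap is in the two sentences that finish (a). First, ``Jackson-type'' approximation cannot deliver a polynomial of degree $N=O(T+\log(1/\epsilon))$ accurate to $\epsilon$ in $\|\cdot\|_{C^2}$: Jackson bounds use only finitely many derivatives, and the only derivative bounds available uniformly over $\bbH^\lambda_1$ are $\|\ddot h\|_\infty\le \sqrt{12}\lambda^2$, which give no $C^2$-rate at all; what is actually needed is a Bernstein-type estimate exploiting that $h_T$ is entire of exponential type $T$ (equivalently, the analytic-extension bound $|h_\eta^{(j)}(z)|\le 8\lambda^j e^{2|\Im(z)|^2\lambda^2}$ that the paper uses). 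Second, and more seriously, the claim $\log N(\epsilon,\cdot)\le CN\log(N/\epsilon)$ for the coefficient cover silently assumes the coefficient vectors range over a set whose diameter is polynomial in $\lambda/\epsilon$. In the natural Taylor/monomial parametrization of a function bandlimited to $[-T,T]$ the coefficients can be as large as $e^{cT}=e^{c\lambda\sqrt{\log(\lambda/\epsilon)}}$, so each coordinate of the net costs $O(T)$ rather than $O(\log(\lambda/\epsilon))$ bits, and the total entropy becomes of order $\lambda^2\log(\lambda/\epsilon)$. That destroys precisely the linear-in-$\lambda$ prefactor the lemma exists to provide (and which the sieve \eqref{eq:Bn} needs, since $\lambda$ there runs up to $r_n=n\bar\epsilon_n^2$); ironically this is the ``main obstacle'' you flag, but it sits in the covering step, not in the tail-moment step.

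The gap is repairable within your framework --- e.g.\ expand $h_T$ in Chebyshev polynomials, whose coefficients are bounded by $2\|h_T\|_\infty\le 2$ and decay geometrically by analyticity, so that truncation at degree $O(T+\log(\lambda/\epsilon))$ approximates simultaneously in $C^2$ and the coefficient box has $O(1)$ side length --- but as written the step is unjustified. The paper avoids the issue by localizing instead of globalizing: it applies Proposition C.9 of Ghosal--van der Vaart to the second derivative $\ddot h$ (via the analytic extension), obtaining a family of \emph{piecewise} polynomials, of log-cardinality $C_0\lambda\log^2(\lambda/\delta)$, that covers $\{\ddot h: h\in\bbH^\lambda_1\}$ in $\|\cdot\|_\infty$; the local pieces keep all coefficients $O(1)$, and $h$ itself is then recovered by integrating twice, with $\delta$-nets over the two Taylor data $h(0)\in[-1,1]$ and $\dot h(0)\in[-\sqrt2\lambda,\sqrt2\lambda]$ contributing only an additive $\log(\lambda/\delta)$ term. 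If you switch your finite-dimensional family to such a localized (or Chebyshev) parametrization and replace the Jackson citation by an exponential-type/analyticity argument, your proof of (a) goes through, and (b) then follows exactly as you and the paper both say.
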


\subsection*{D Auxiliary results for tail estimation}
If $f$ is heavy tailed then $\lim_{y \to \infty} |{y^{\alpha_+(f)}} \bar F(y)/{\si(f)} - 1| = 0$. For our semiparametric analysis it is useful to consider classes of heavy tailed densities for which this convergence holds uniformly. Define $\textstyle \scT(t, \delta) = \{f: \sup_{y \ge t} |{y^{\alpha_+(f)}} \bar F(y)/{\si(f)} - 1| \le \delta\}$, for any arbitrary $t > 0, \delta > 0$. For any $f \in \scF$, let $\bbP^n_f$ denote the joint probability law of $(Y_1, \ldots, Y_n)$ with $Y_i \sim f$ independently of one another and $\bbP^n_f h$ denote expectation of $h(Y_1,\ldots,Y_n)$ under $\bbP^n_f$. For the following lemma, let $f^*$ denote an arbitrary heavy tailed density with $\alpha^* = \ti(f^*) \in (\alphalo, \alphahi)$ and let $\epsilon_n \to 0$ be an arbitrary positive sequence satisfying $n\epsilon_n^2 \to \infty$. By a test function we mean any statistic that takes values in $[0,1]$. 
\begin{lem}
\label{lem:tests}
Suppose there exist positive sequences $t_n \to \infty$, $\delta_n \to 0$ such that $f^* \in \scT(t_n, \delta_n)$ and $\min\{\bar F^*(t_n), \bar F^*(t_n)^{1/2}\delta_n\} \ge 3 \epsilon_n$  for all large $n$. Then there exist test functions $T_n = T_n(Y_1,\ldots, Y_n)$ satisfying $\bbP^n_{f^*}T_n \le 4e^{-n\epsilon_n^2}$ and $\sup\{\bbP^n_f(1 - T_n): f \in \scT(t_n, \delta_n), \ti(f) < \alphahi, |\ti(f) - \alpha^*| > 2^{4+\alphahi} \delta_n\} \le 4e^{-n\epsilon_n^2}$ for all large $n$. 
\end{lem}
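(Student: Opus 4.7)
\textbf{Proof proposal for Lemma \ref{lem:tests}.}

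My plan is to take $T_n = \max(T_{1,n}, T_{2,n})$ where the components exploit empirical exceedance counts at two nested thresholds $t_n$ and $s_n := 2 t_n$. With $p^* = \bar F^*(t_n)$, $q^* = \bar F^*(s_n)$, and $r^* = q^*/p^*$, let $\hat p_n$, $\hat q_n$ be the empirical exceedance fractions and $\hat r_n = \hat q_n/\hat p_n$ (set to zero if $\hat p_n = 0$). I would take
\[
T_{1,n} = \mathbf{1}\{\hat p_n < p^*/2\}, \quad T_{2,n} = \mathbf{1}\{|\hat r_n - r^*| > 2\tau_n\}, \quad \tau_n = \epsilon_n/\sqrt{p^*}.
\]
The first test detects alternatives whose exceedance probability at $t_n$ is much smaller than $p^*$; the second detects deviations of the conditional exceedance ratio $\bar F(s_n)/\bar F(t_n)$ from $r^*$, which under $\scT(t_n,\delta_n)$ is a faithful proxy for $2^{-\ti(f)}$.

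For Type I error, the multiplicative Chernoff bound for the binomial yields $\bbP^n_{f^*}(\hat p_n < p^*/2) \le e^{-np^*/8} \le e^{-n\epsilon_n^2}$ for all large $n$, using $p^* \ge 3\epsilon_n$ and $\epsilon_n \to 0$. For $T_{2,n}$, I would condition on $N_n := n\hat p_n$: under $f^*$, $n\hat q_n \mid N_n$ is $\mathrm{Binomial}(N_n, r^*)$, so Hoeffding gives $\bbP^n_{f^*}(|\hat r_n - r^*| > 2\tau_n \mid N_n) \le 2 e^{-8 N_n \tau_n^2}$. Integrating on the event $\{N_n \ge np^*/2\}$ (whose complement is exponentially small by the same Chernoff bound) yields $\bbP^n_{f^*} T_{2,n} \le 3 e^{-n\epsilon_n^2}$, and summing gives $\bbP^n_{f^*} T_n \le 4 e^{-n\epsilon_n^2}$.

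For Type II error, I would fix an alternative $f \in \scT(t_n,\delta_n)$ with $\alpha := \ti(f) < \alphahi$ and $|\alpha - \alpha^*| > 2^{4+\alphahi}\delta_n$, set $p = \bar F(t_n)$, $r = \bar F(s_n)/\bar F(t_n)$, and split on whether $p < p^*/4$ or $p \ge p^*/4$. In the first case, $\{T_{1,n} = 0\} \subset \{\hat p_n - p \ge p^*/4\}$ and a direct Bernstein bound yields probability at most $e^{-3np^*/32} \le e^{-n\epsilon_n^2}$. In the second case, the $\scT(t_n,\delta_n)$ definition with $s_n/t_n = 2$ gives $|r - 2^{-\alpha}| \le 4 \cdot 2^{-\alpha}\delta_n$ and analogously for $r^*$, whence
\[
|r - r^*| \ge 2^{-\alphahi}(\log 2)|\alpha - \alpha^*| - 8\delta_n > (16\log 2 - 8)\delta_n > 3\delta_n,
\]
and the hypothesis $\sqrt{p^*}\delta_n \ge 3\epsilon_n$ upgrades this to $|r - r^*| > 9\tau_n$. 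Then $\{T_{2,n} = 0\} \subset \{|\hat r_n - r| > 7\tau_n\}$; conditioning on the high-probability event $\{N_n \ge np/2\}$ and applying Hoeffding bounds the latter by $2 e^{-98 N_n \tau_n^2} \le 2e^{-12 n\epsilon_n^2}$. Combining the two cases yields $\bbP^n_f(1 - T_n) \le 4 e^{-n\epsilon_n^2}$ uniformly over the alternative class.

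The main obstacle will be coordinating the two tests so that every admissible alternative is caught regardless of how the scaling $\si(f)$ positions $\bar F(t_n)$ relative to $p^*$: alternatives with $p \ll p^*$ starve $T_{2,n}$ of conditional sample size and are caught only by $T_{1,n}$, while alternatives with $p$ comparable to or larger than $p^*$ leave the Pareto ratio $r$ as the only identifiable signature of $\alpha$, which is precisely what $T_{2,n}$ exploits. The specific constant $2^{4+\alphahi}$ in the hypothesis emerges from balancing the minimum slope $2^{-\alphahi}\log 2$ of $\alpha \mapsto 2^{-\alpha}$ against the $O(\delta_n)$ distortion of the true tail from its Pareto proxy, with enough slack so the resulting separation $|r - r^*|$ comfortably exceeds the noise level $\tau_n$ permitted by the $\sqrt{p^*}\delta_n \ge 3\epsilon_n$ hypothesis.
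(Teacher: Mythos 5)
Your proposal is correct and follows essentially the same route as the paper: an exceedance-count test at $t_n$ combined (via maximum) with a conditional exceedance-ratio test at $2t_n$, with Chernoff/Hoeffding bounds conditioned on the first count and the separation $|r-r^*|$ obtained from the mean-value slope of $2^{-\alpha}$ against the $\scT(t_n,\delta_n)$ distortion, which is exactly how the constant $2^{4+\alphahi}$ enters. The only differences — a one-sided first test at level $p^*/2$, the ratio threshold $\tau_n=\epsilon_n/\sqrt{p^*}$ in place of $\delta_n$, and a case split on $p\ge p^*/4$ rather than $|p-p^*|\le 2\epsilon_n$ — are cosmetic variants of the paper's argument and all the stated bounds check out.
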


\begin{lem}
\label{lem:pareto}
Suppose $\tau_n \to 0, D_n \to \infty$ are positive sequences and $t_n = (D_n/\tau_n)^{1/\min(1, A)}$ for some $A > 0$. Then, with $B_1 >0$ chosen sufficiently large, $\{f = p_{\theta,\psi}: \theta = (\alpha,\sigma) \in \Theta, \alpha \ge A, \psi = \scL(\omega), \|\dot\omega\|_\infty \le D_n\} \subset  \scT(t_n, B_1\tau_n)$ for all large $n$.
\end{lem}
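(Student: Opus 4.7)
The plan is to leverage the representation $\bar F(y) = y^{-\alpha} L_\theta(y)\,\tilde L(y^\alpha/L_\theta(y))$ from the proof of Lemma~\ref{lem:tail}, where $L_\theta(y) = (1/y + 1/(\alpha\sigma))^{-\alpha}$ increases to $a_\theta := (\alpha\sigma)^\alpha$ and, by the mean-value theorem argument in Section~\ref{sec:lgpnonparprior}, $\tilde L(z) = \psi(t(1/z))$ for some $t(1/z) \in [1-1/z, 1]$. These identify $\si(f) = a_\theta\, \psi(1)$ and yield the decomposition
\[
\frac{y^\alpha \bar F(y)}{\si(f)} - 1 = \frac{L_\theta(y)}{a_\theta}\left(\frac{\tilde L(y^\alpha/L_\theta(y))}{\psi(1)} - 1\right) + \left(\frac{L_\theta(y)}{a_\theta} - 1\right),
\]
reducing the task to bounding the two ratios uniformly over the family.

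For the parametric ratio, $L_\theta(y)/a_\theta = (1 + \alpha\sigma/y)^{-\alpha}$ combined with $1 - e^{-\alpha x} \le \alpha x$ gives $|L_\theta(y)/a_\theta - 1| \le \alpha^2 \sigma/y \le C_1/y$ for a constant $C_1$ depending only on $\Theta$. For the slowly varying ratio, writing $\phi := \log \psi$ one has $\dot\phi = \dot\omega$ (the logistic normalizer is a constant), so $|\phi(t) - \phi(1)| \le \|\dot\omega\|_\infty(1-t) \le D_n/z$ for $t \in [1-1/z, 1]$, and the elementary bound $|e^x - 1| \le 2|x|$ for $|x|\le 1$ yields $|\tilde L(z)/\psi(1) - 1| \le 2 D_n/z$ whenever $z \ge D_n$.

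The main step, and the only substantive calculation, is verifying that $y \ge t_n$ implies $z := y^\alpha/L_\theta(y) \ge D_n$ for $n$ large, together with $D_n/t_n^\alpha \le \tau_n$ and $1/t_n \le \tau_n$ uniformly in $\alpha \ge A$. Since $L_\theta(y) \le a_\theta \le a^* := \sup_{\theta \in \Theta}a_\theta < \infty$, it suffices to control $t_n^\alpha$. A short case split does this: if $A \ge 1$, then $\min(1,A) = 1$, $t_n = D_n/\tau_n$, and $D_n/t_n^\alpha = D_n^{1-\alpha}\tau_n^\alpha \le \tau_n$ since $\alpha \ge 1$ and $D_n \ge 1$, $\tau_n \le 1$ for $n$ large; if $A < 1$, then $t_n^\alpha = (D_n/\tau_n)^{\alpha/A}$ with $\alpha/A \ge 1$, so $D_n/t_n^\alpha = D_n^{1-\alpha/A}\tau_n^{\alpha/A} \le \tau_n$. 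The analogous bound $1/t_n \le \tau_n$ follows similarly. The exponent $1/\min(1,A)$ in the definition of $t_n$ is precisely what makes both subcases work.

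Plugging the estimates into the decomposition, for $y \ge t_n$,
\[
\left|\frac{y^\alpha \bar F(y)}{\si(f)} - 1\right| \le 2 a_\theta \frac{D_n}{y^\alpha} + \frac{C_1}{y} \le (2 a^* + C_1)\,\tau_n,
\]
and the choice $B_1 := 2 a^* + C_1$ completes the proof. The only nontrivial bookkeeping is the $\alpha$-dependent case split, which is forced by the definition of $t_n$; the remaining steps are routine.
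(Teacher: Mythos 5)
Your proof is correct and takes essentially the same route as the paper's: both factor $y^{\alpha}\bar F(y)/\si(f)$ into the parametric piece $(1+\alpha\sigma/y)^{-\alpha}$ and the piece measuring how far $\psi$ is from $\psi(1)$ near $1$, and both rest on the same key estimate $\|\dot\omega\|_\infty\,\bar G_\theta(y)\le D_n\cdot O(\tau_n/D_n)=O(\tau_n)$ for $y\ge t_n$, with constants uniform over $\Theta$. The only (cosmetic) difference is that you control the second factor via the integral mean value theorem and a Lipschitz bound on $\log\psi$, whereas the paper uses a Taylor remainder involving $\dot\psi$; the resulting bounds and the choice of $B_1$ are equivalent.
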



\subsection*{E Proof of Theorem \ref{thm:tail-consis}}

Our argument is based on the proof of Theorem 8.9 in \cite{ghosal2017fundamentals}. Consider again the sets $\scF_n = \{p_{\theta,\psi}: \theta \in \Theta, \psi =\scL(\omega), \omega \in \scB_n\}$  from the proof of Condition C2* where $\scB_n$ is as in \eqref{eq:Bn} with $\bar \epsilon_n = B\{(\log n)^2 / n\}^{\frac{\beta}{2\beta + 1}}$ for some large $B$. Recall that $\Pi(\scF_n^c) \le e^{-(C+4)n\bar\epsilon^2}$ for some constant $C$. Define 
\[
\scU_n = \{p_{(\theta,\sigma),\psi}:\theta = (\alpha,\sigma) \in \Theta, |\alpha - \alpha^*| > B_1n^{-\rho}(\log n)^s, \psi=\scL(\omega), \omega \in C^2[0,1]\}.
\]
It follows from Bayes' formula for $\Pi(\scU_n \mid Y_1, \ldots, Y_n)$ that with $A_n:=\{(y_1,\ldots,y_n): \int_\scF \prod_{i=1}^n \frac{f(y_i)}{f^*(y_i)}\Pi(df) \ge e^{-(2 + C)n\bar\epsilon_n^2}\}$ and for any test function $T_n:\bbR^n \to [0,1]$,
\[
\bbP^n_{f^*}\Pi(\scU_n \mid Y_1,\ldots, Y_n) \le \bbP^n_{f^*}T_n + \bbP^n_{f^*}(A_n^c) + e^{(2 + C)n\bar\epsilon_n^2}\left[\sup_{f \in \scF_n \cap \,\scU_n} \bbP^n_f(1 - T_n) + \Pi(\scF_n^c)\right]
\]
Now, $\lim_{n\to \infty} e^{(2 + C)n \bar \epsilon_n^2} \Pi(\scF_n^c) = 0$ by construction and $\lim_{n \to \infty}\bbP^n_{f^*}(A_n^c) = 0$ by Lemma 8.10 of \cite{ghosal2017fundamentals}. Therefore the proof of the theorem is complete once we have shown the existence of test functions $(T_n: n \ge 1)$ satisfying
\begin{equation}
\label{eq:tests}
\lim_{n \to \infty} \bbP^n_{f^*} T_n = 0, \quad \sup_{f \in \scF_n \cap \,\scU_n} \bbP^n_f (1 - T_n) \le e^{-(4 + C)n\bar\epsilon_n^2}~\mbox{for all large}~n.
\end{equation}
We shall construct such a test function based on Lemmas \ref{lem:tests} and \ref{lem:pareto}.

Take $\epsilon_n = (4 + C)^{1/2}\bar\epsilon_n$.
For any $f = p_{\theta, \psi} \in \scF_n$ it follows from \eqref{eq:bounds} that if $\phi = \log \psi$ then $\|\dot \phi\|_\infty \le D_n := C_1 r_n^{3/2}\log n= C_1 n^{\frac32(1 - 2\gamma)}(\log n)^{6\gamma + 1}$ for some constant $C_1$. Set $\alpha_1 = \xi \alpha^*$ and note that  $\alphalo < \alpha_1 < \min(1,\alpha^*)$ and partition $\scU_n = \scU_{1n} \cup \scU_{2n}$ where $\scU_{1n} = \scU_n \cap \{f: \alphalo \le \ti(f) < \alpha_1\}$ and $\scU_{2n} = \scU_n \cap \{f: \alpha_1 \le \ti(f) \le \alphahi\}$. By Lemma \ref{lem:pareto} (with $A = \alphalo$), for any $\rho_1, s_1 > 0$, 
\begin{equation}
\label{eq:U1}
\scF_n \cap \scU_{1n} \subset \scT(t_{1n}, \delta_{1n}) \cap \{f:\ti(f) \le \alphahi, |\ti(f) - \alpha^*| > 2^{4 +  \alphahi}\delta_{1n}\}~\mbox{for all large}~n, 
\end{equation} 
where $\delta_{1n} = B_{12}\taulo_{n}$, $\taulo_{n} = C_{12}n^{-\rho_1}(\log n)^{s_1}$, $t_{1n} = (D_n/\taulo_{n})^{1/\alphalo}$ and $B_{12}, C_{12}$ are large constants to be adjusted. We next show that $\rho_1, s_1 > 0$ could be chosen so that 
\begin{equation}
\label{eq:Fbar1}
\min\{\bar F^*(t_{1n}), \delta_{1n} \bar F^*(t_{1n})^{1/2}\} \ge 3\epsilon_n~\mbox{for all large}~n.
\end{equation} 
Indeed, $\bar F^*(t_{1n}) \ge \tfrac12\zeta(f^*)t_{1n}^{-\alpha^*} = \tfrac12\zeta(f^*)(\tfrac{C_{12}}{C_1})^{1/\xilo}\times n^{-\{\rho_1 + \frac32(1 - 2\gamma)\}/\xilo}(\log n)^{(s_1 - 6\gamma - 1)/\xilo}$ for all large $n$, 
where $\xilo = \alphalo / \alpha^* \in (0,1)$. Therefore, with a suitably large choice of ${C}_{12}$ we can make $\bar F^*(t_{1n}) \ge 3\epsilon_n$ for all large $n$ provided $\rho_1 \le \hat\rho(\xilo)$, and in case of an equality, $s_1 = 
2\rho_1 + 4$. 
On the other hand, in order to have $\delta_{1n} \bar F^*(t_n)^{1/2} \ge 3 \epsilon_n$, we need to choose ${B}_{12}$ suitably large and $\rho_1 \le \bar\rho(\xilo)$, and in case of an equality, $s_1 = 
2\rho_1 + \frac{4}{\alpha^*(2\xilo+1)}$. 
With \eqref{eq:U1}-\eqref{eq:Fbar1} established with $\rho_1 > 0$ chosen as the minimum of the above two bounds and $s_1 > 0$ set accordingly, apply Lemma \ref{lem:tests} to conclude that there exist test functions $T_{1n} = T_{1n}(Y_1,\ldots,Y_n)$ such that $\bbP^n_{f^*}T_{1n} \le e^{-n\epsilon_n^2}$ and $\sup\{\bbP^n_f(1 - T_{1n}): f \in \scF_n \cap \scU_{1n}\} \le e^{-n\epsilon_n^2}$ for all large $n$. 

Next we repeat the same arguments for testing $f = f^*$ versus $f \in \scF_n \cap \scU_{2n}$. Rewrite the target rate as $B_1 n^{-\rho} (\log n)^s=2^{4 + \alphahi}\delta_n$ where $\delta_n = B_{22} \tau_n$, $\tau_n = C_{22}n^{-\rho}(\log n)^s$, and $t_n = (D_n/\tau_n)^{1/\alpha_1}$, with $B_{22}, C_{22}$ to be adjusted as needed. As argued in the preceding paragraph, the choices of $\rho, s$ imply that $\min(\bar F^*(t_n), \delta_n \bar F^*(t_n)^{1/2}) \ge 3\epsilon_n$ and Lemma \ref{lem:pareto} (with $A = \alpha_1$) implies that $\scF_n \cap \scU_{2n} \subset \scT(t_{n}, \delta_{n}) \cap \{f : \ti(f) \le \alphahi, |\ti(f) - \alpha^*| > \delta_n\}$. Therefore, by Lemma \ref{lem:tests}, there are test functions $T_{2n} = T_{2n}(Y_1,\ldots,Y_n)$ such that $\bbP^n_{f^*}T_{2n} \le e^{-n\epsilon_n^2}$ and $\sup\{\bbP^n_f(1 - T_{2n}): f \in \scF_n \cap \scU_{2n}\} \le e^{-n\epsilon_n^2}$ for all large $n$. The proof is now complete by taking $T_n = \max(T_{1n}, T_{2n})$. }

\newpage
\begin{center}
{\large\bf SUPPLEMENTARY MATERIAL}
\end{center}

{
\section*{Proofs of auxiliary results}

\begin{proof}[Proof of Lemma 5]
Clearly $\phi = \log \psi \in C^2[0,1]$ with $\dot\phi = \dot\omega$, $\ddot\phi = \ddot\omega$.
Let $\nab\theta$ and $\nab\theta^2$ denote the first and second order vector differential operators with respect to $\theta = (\alpha, \sigma)$. Then, 
\begin{align*}
\nab\theta \log q_\theta(y) &= \nab\theta\log g_\theta(y) + \dot\phi(G_\theta(y)) \nab\theta G_\theta(y)\\
\nab\theta^2 \log q_\theta(y) & = \nab\theta^2 \log g_\theta(y)  + \dot \phi(G_\theta(y)) \nab\theta^2 G_\theta(y) + \ddot \phi(G_\theta(y)) \nab\theta G_\theta(y) \nab\theta G_\theta(y)^\top
\end{align*}
which immediately proves the result because $\del\alpha \log g_\theta(y)$ is bounded by a shifted and scaled version of $\log(1 + y)$, and $\del\sigma \log g_\theta(y)$ as well as every term in $\nab\theta^2 \log g_\theta(y)$, $\nab\theta G_\theta(y)$ and $\nab\theta^2 G_\theta(y)$ is uniformly bounded over $y \ge 0$ and $\theta \in \Theta$. For completeness we list below the first and second order partial derivatives of $\log g_\theta(y)$ and $G_\theta(y)$; expressed in terms of $z = (1 + \frac{y}{\alpha\sigma})^{-1} \in (0,1]$,
\begin{align*}
&\textstyle \del{\alpha}\log g_\theta(y) = \log z + \frac{1-z}{\alpha}, \del\sigma \log g_\theta(y) = \frac{\alpha-(\alpha+1)z}{\sigma},\\
&\textstyle \deltwo{\alpha} \log g_\theta(y) = \textstyle\frac{(1-z)\{\alpha - 1 - (\alpha + 1)z\}}{\alpha^2},
\textstyle \deltwo{\sigma} \log g_\theta(y) = \textstyle \frac{(\alpha+1)z^2 - \alpha}{\sigma^2}, 
\textstyle \deltwom{\alpha}{\sigma} \log g_\theta(y) = \textstyle \frac{\{\alpha - (\alpha+1)z\}(1-z)}{\alpha\sigma}\\
&\textstyle \frac{\partial G_\theta(y)}{\partial\alpha} = \textstyle  (\log z + 1 - z)z^\alpha,
\textstyle \frac{\partial^2 G_\theta(y)}{\partial\alpha^2} = \textstyle \{(\log z + 1 - z)^2 + \frac{(1-z)^2}{\alpha}\}z^\alpha\\
&\textstyle\frac{\partial G_\theta(y)}{\partial\sigma} = \frac{\alpha(1-z)}{\sigma}z^\alpha,
\frac{\partial^2G_\theta(y)}{\partial\sigma^2} = \frac{\alpha(1-z)\{\alpha -1-z(\alpha+1)\}}{\sigma^2}z^\alpha, 
\frac{\partial^2G_\theta(y)}{\partial\alpha\partial\sigma} = \frac{(1-z)\{\alpha(\log z + 1 - z) + 1 - z\}}{\sigma} z^\alpha.
\end{align*}

\end{proof}

\begin{proof}[Proof of Lemma 6]
Denote $q_\theta = p_{\theta,\psi}$, $\theta \in \Theta$. By Taylor's theorem, for $\theta, \theta + u$ in the interior of $\Theta$,
\[
\log \frac{q_{\theta + u}(y)}{q_{\theta}(y)} = R_1(\theta,u,y) = u^\top \nabla_\theta \log q_\theta(y) + R_2(\theta,u,y)
\]
with $|R_j(\theta,u,y)| \le \|u\|^j\max_{|k| = j} \sup_{\theta \in \Theta} |D^k \log q_\theta(y)|$, $j = 1,2$. The first claim now follows because $d_\kl(q_\theta, q_{\theta+u})=\int q_\theta(y) \log \frac{q_\theta(y)}{q_{\theta + u}(y)} dy = 0 + \int R_2(\theta,u,y) q_\theta(y) dy \le c_2\|\omega\|_{C^2} \|u\|^2$ by Lemma 5
. Next, use the inequality $|e^x - 1|\le |x|e^{|x|}$ to conclude 
\[
\textstyle \left|\frac{q_{\theta+u}(y)}{q_{\theta}(y)} - 1\right| 
\le |R_1(\theta,u,y)|e^{|R_1(\theta,u,y)|} \le \|u\|\{c_0\|\omega\|_{C^2} + c_1\log(1+y)\}e^{ \|u\|\{c_0\|\omega\|_{C^2} + c_1\log(1+y)\}}
\]
by Lemma 5
. Therefore, $\int q_\theta(y) (\frac{q_{\theta + u}(y)}{q_{\theta}(y)} - 1)^2 dy \le c_4 \|u\|^2$ where
\[
c_4 =\sup_{\theta \in \Theta}\int \{c_0\|\omega\|_{C^2} + c_1\log(1+y)\}^2e^{2t_0\{c_0\|\omega\|_{C^2} + c_1\log(1+y)\}} q_\theta(y)dy \le c_3e^{3t_0c_0\|\omega\|_{C^2}}
\]
with $c_3 := t_0^{-2}\sup_{\theta \in \Theta}\int(1 + y)^{3t_0c_1} q_\theta(y) dy$ a finite number if $t_0 < \alphalo/(3c_0)$. This proves the second claim as well as the third claim since $V(q_\theta, q_{\theta+u}) = \int R_1(\theta,u,y)^2 q_\theta(y)dy \le c_4\|u\|^2$.
\end{proof}


\begin{proof}[Proof of Lemma 7]
Denote $p_{ij} = p_{\theta_i, \psi_j}$, $P_{ij}[g] := \int g(y) p_{ij}(y) dy$, for $i,j \in \{1,2\}$. Note that $d_\kl(p_{11},p_{22}) = d_{\kl}(p_{11},p_{21}) + P_{11} [\log \frac{p_{21}}{p_{22}}] \le c_2\|\omega_1\|_{C^2}\|\theta_1 - \theta_2\|^2 + P_{11} [\log \frac{p_{21}}{p_{22}}] $ by Lemma 6
. Use the fact that every $p_{ij}$ has full support on $[0,\infty)$ to write 
\[
\textstyle P_{11}[\log \frac{p_{21}}{p_{22}}] = P_{21}[\frac{p_{11}}{p_{21}} \log \frac{p_{21}}{p_{22}}] = P_{21}[(\frac{p_{11}}{p_{21}} -1)\log \frac{p_{21}}{p_{22}}] + d_\kl(p_{21},p_{22}).
\]
Notice, $d_\kl(p_{21},p_{22}) = d_{\kl}(\psi_1,\psi_2) \le K_0\epsilon^2$ for some constant $K_0$ that depends only on $t_0$; see Lemma 3.1 of \cite{van2008rates}. An application of Cauchy-Schwarz inequality gives
\[
\textstyle P_{21}[(\frac{p_{11}}{p_{21}} -1)\log \frac{p_{21}}{p_{22}}] \le \{P_{21}[(\frac{p_{11}}{p_{21}} -1)^2]\}^{1/2}\{P_{21}[(\log \frac{p_{21}}{p_{22}})^2]\}^{1/2}.
\]
Clearly $P_{21}[(\log \frac{p_{21}}{p_{22}})^2] = V(\psi_1,\psi_2) \le \|\log\frac{\psi_1}{\psi_2}\|^2_\infty \le 4\|\omega_1 - \omega_2\|^2$, and, by Lemma 6
, $P_{21}[(\frac{p_{11}}{p_{21}}-1)^2] \le c_3e^{3t_0c_0\|\omega_1\|_{C^2}} \|\theta_1-\theta_2\|^2$. Additionally, $V(p_{11},p_{22}) \le 2V(p_{11},p_{21}) + 2\|\log \frac{\psi_1}{\psi_2}\|_\infty^2 \le c_3e^{3t_0c_0\|\omega_1\|_{C^2}} \|\theta_1-\theta_2\|^2 + 4 \|\omega_1 - \omega_2\|^2_\infty$ by Lemma 6
. This concludes the proof of the lemma with $K =\max(4,K_0, c_2\|\omega_1\|_{C^2},c_3e^{3t_0c_0\|\omega_1\|_{C^2}})$. 
\end{proof}

%

\begin{proof}[Proof of Lemma 8]
Denote $p_{ij} = p_{\theta_i, \psi_j}$, $i,j \in \{1,2\}$. By triangle inequality, $d_H(p_{11}, p_{22})  \le d_H(p_{11}, p_{21}) + d_H(p_{21}, p_{22})$. The second term on the right equals $d_H(\psi_1,\psi_2)$ which is bounded by $\|\omega_1 - \omega_2\|_\infty \exp\{\|\omega_1 - \omega_2\|_\infty/2\}$ by Lemma 3.1 of \citet{van2008rates}. The desired bound on the first term follows by the inequality $d_H(p_{11}, p_{21}) \le d_{\kl}(p_{11},p_{21})^{1/2}$ and Lemma 6
.
\end{proof}

\begin{proof}[Proof of Lemma 9]
Let $W(t)$ be a centered Gaussian process on $\bbR$ with $\cov(W(s),W(t)) = e^{-(t - s)^2}$, $t,s\in\bbR$. Then $\nu^\lambda$ is the probability law of the rescaled process $W^\lambda=(W^\lambda(t):=W(\lambda t): 0 \le t \le 1)$. The proof is complete by noting that 
\[
\|W^\lambda\|_{C^2}=\sup_{0\le t\le\lambda}|W(t)| +\lambda \sup_{0\le t\le\lambda}|\dot W(t)| + \lambda^2\sup_{0\le t\le\lambda}|\ddot W(t)|
\]
where, with probability one, the right hand side is non-decreasing in $\lambda$. 
\end{proof}


\begin{proof}[Proof of Lemma 10]
Fix $\lambda \ge 1$ and $\delta < 1/12$. Recall that $\bbH^\lambda_1$ consists of functions $\Re(h_\eta)$ where $h_\eta(u) = \int e^{ut\sqrt{-1}}\eta(t)\mu_\lambda(t)$ with $\|\eta\|_{L_2(\mu_\lambda)} \le 1$. By applying Cauchy-Schwarz inequality, with differentiations under integration as needed, it follows that 
\begin{equation}
\|h\|_\infty \le 1, \|\dot h\|_\infty \le \sqrt{2}\lambda,~\mbox{and}~\|\ddot h\|_\infty \le \sqrt{12}\lambda^2.
\label{eq:bounds}
\end{equation}
Any such $h_\eta$ could be extended to an analytic function $h_\eta$ on the complex plane $\bbC$ such that $|\frac{d^j}{dz^j}h_\eta(z)| \le 8 \lambda^j e^{2|\Im(z)|^2\lambda^2}$, $z \in \bbC$ and $j \in \{0,1,2\}$. By Proposition C.9 of \cite{ghosal2017fundamentals}, there is a collection $\scP = \{P_1, \ldots, P_N\}$ of piecewise polynomials on $[0,1]$ with $\log N \le C_0\lambda (\log \frac\lambda\delta)^2$ such that every $h \in \bbH^\lambda_1$ satisfies $\|\ddot h - P_n\|_\infty < \delta$ for some $1 \le n \le N$; here $C_0$ is a universal constant. Consider an expanded collection $\tilde \scP$ of functions $\tilde P(u) = a + b u + \int_0^1 (u - t)_+ P(t)dt$ where $a$ belongs to a $\delta$-net of $[-1,1]$, $b$ belongs to a $\delta$-net of $[-\sqrt{2}\lambda,\sqrt{2}\lambda]$ and $P\in \scP$. Use \eqref{eq:bounds} 
and Taylor's Theorem (second order, with residual in the integral form) to conclude every $h \in \bbH^\lambda_1$ satisfies $\|h - \tilde P\|_{C^2} < 6\delta$ for some $\tilde P \in \tilde \scP$. This establishes the first claim because the cardinality $\tilde N$ of $\tilde \scP$ satisfies $\log \tilde N \le \log N + \log(2/\delta) + \log(2\sqrt{2}\lambda/\delta) \le C \lambda (\log \frac{\lambda}{6\delta})^2$ for all $\epsilon < 1/2$ and a new universal constant $C$. As shown in the proof of Lemma 4.7 of \cite{van2009adaptive}, the second claim follows as a corollary to the first claim and Theorem 2 of \cite{li1999approximation}.
\end{proof}

\begin{proof}[Proof of Lemma 11]
Let $S_n(t) = \sum_{i = 1}^n I(Y_i > t)$ denote the sample exceedance count over a threshold $t$. Define the test functions 
\[
T_{1n} = I(|\tfrac{S_n(t_n)}{n} - \bar F^*(t_n)| > \epsilon_n),\quad T_{2n} = I(|\tfrac{S_n(2t_n)}{\max\{S_n(t_n), 1\}} - \tfrac{\bar F^*(2t_n)}{\bar F^*(t_n)}| > \delta_n),
\]
and take $T_n = \max(T_{1n}, T_{2n})$. 
Since $T_n \le T_{1n} + T_{2n}$, we have $\bbP^n_{f^*}T_n \le \bbP^n_{f^*}T_{1n} + \bbP^n_{f^*}T_{2n} \le 2e^{-2n\epsilon_n^2} + \bbP^n_{f^*}[2e^{-2S_n(t_n)\delta_n^2}]$ by applications of Hoeffding's inequality where the second term is handled by the law of iterated expectation with an intermediate conditioning on $S_n(t_n)$. Now, for all large $n$, $\bbP^n_{f^*}[e^{-2S_n(t_n)\delta_n^2}] = [1 - \bar F^*(t_n)(1 - e^{-2\delta_n^2})]^n \le [1 - \bar F^*(t_n)\delta_n^2]^n \le e^{-n\bar F^*(t_n)\delta_n^2} \le e^{-9n\epsilon_n^2}$; the last two inequalities hold because $1 - e^{-2x} \ge x$ for all small $x > 0$ and $1 + x \le e^{x}$ for all $x$. 

To bound the maximum type II error probability, first note that if $f \in \scF_{1n} := \{f: |\bar F(t_n) - \bar F^*(t_n)| > 2\epsilon_n\}$ then $\bbP^n_f(1 - T_n) \le \bbP^n_f(1 - T_{1n}) \le 2 e^{-2n\epsilon_n^2}$ by another application of Hoeffding's inequality. Next consider an $f \in \scT(t_n, \delta_n) \setminus \scF_{1n}$ with $\ti(f) < \alphahi$ and $|\ti(f) - \alpha^*| > 2^{4+\alphahi} \delta_n$. Let $n$ be large enough so that $\delta_n < 1/2$. It follows from the definition of $\scT(t, \delta)$ that $|\frac{\bar F(2t_n) }{ \bar F(t_n)} - 2^{-\ti(f)}| < 2^{1- \ti(f)}\delta_n < 2\delta_n$ and hence 
\[
\textstyle |\frac{\bar F(2t_n) }{\bar F(t_n)} - \frac{\bar F^*(2t_n) }{ \bar F^*(t_n)}| \ge 2^{-\max(\ti(f),\alpha^*)} \log(2)|\ti(f) - \alpha^*| - 4 \delta_n > 2\delta_n.
\] 
Consequently, $\bbP_f^n(1 - T_{2n}) \le 2\bbP^n_f[2e^{-2\bar S_n(t_n)\delta_n^2}] \le 2e^{-n\bar F_n(t_n) \delta_n^2}$. Since $f \not\in \scF_{1n}$, it follows that $\bar F(t_n) \ge \bar F^*(t_n) - 2\epsilon_n \ge \frac13\bar F^*(t_n)$ and hence $\bbP^n_f(1 - T_{2n}) \le 2e^{-n\epsilon_n^2}$. \end{proof}


\begin{proof}[Proof of Lemma 12]
Suppose $f = p_{\theta, \psi}$ with $\theta = (\alpha, \sigma) \in \Theta$, $\alpha \ge \alpha_1$, and $\psi = \scL(\omega)$, $\|\dot\omega\|_\infty \le D_n$. Denote $\phi = \log \psi$ and use Taylor's theorem to write $\bar F(y) = \psi(1)\bar G_\theta(y)\{1 - R_{\theta,\psi}(y)\}$ where $R_{\theta,\psi}(y) = \frac{\dot \psi(1 - u)}{2\psi(1)} \bar G_\theta(y) = \frac12 e^{-u\dot \phi(1 - v)}\dot \phi(1 - u) \bar G_\theta(y)$ for some $0 < v < u < \bar G_\theta(y)$. Notice that $\bar G_\theta(y) = (\alpha\sigma/y)^\alpha\{1 + r_\theta(y)\}$ with $|r_\theta(y)| < \alphahi^2\sigmahi/y \le \alphahi^2\sigmahi/t_n$ for all $y \ge t_n$ and consequently, $\bar G_\theta(y) \le c_1 t_n^{-\alpha} \le c_1 \tau_n / D_n$ for all $y \ge t_n$, for some fixed constant $c_1$. Since $\|\dot \phi\|_\infty = \|\dot \omega\|_\infty \le D_n$, it follows that for all large $n$, $|R_{\theta, \psi}(y)| \le \frac12 e^{c_1\tau_n}c_1 \tau_n \le 2c_1\tau_n$ for all $y \ge t_n$ and consequently, 
\[
\frac{y^\alpha\bar F(y)}{\zeta(f)} = \{1 + r_\theta(y)\}\{1 - R_{\theta,\psi}(y)\} = 1 + \tilde R_{\theta,\psi}(y)
\] 
with $|\tilde R_{\theta, \psi}(y)| \le 3 \max(|R_{\theta,\psi}(y)|, |r_\theta(y)|) \le B_1 \tau_n$ for all $y \ge t_n$, for some constant $B_1$. This concludes the proof since the choice of $B_1$ does not depend on $f$.  
\end{proof}

\newpage
\section*{Additional summary of numerical experiments}
\begin{figure*}[!h]
\centering
\includegraphics[width=\textwidth]{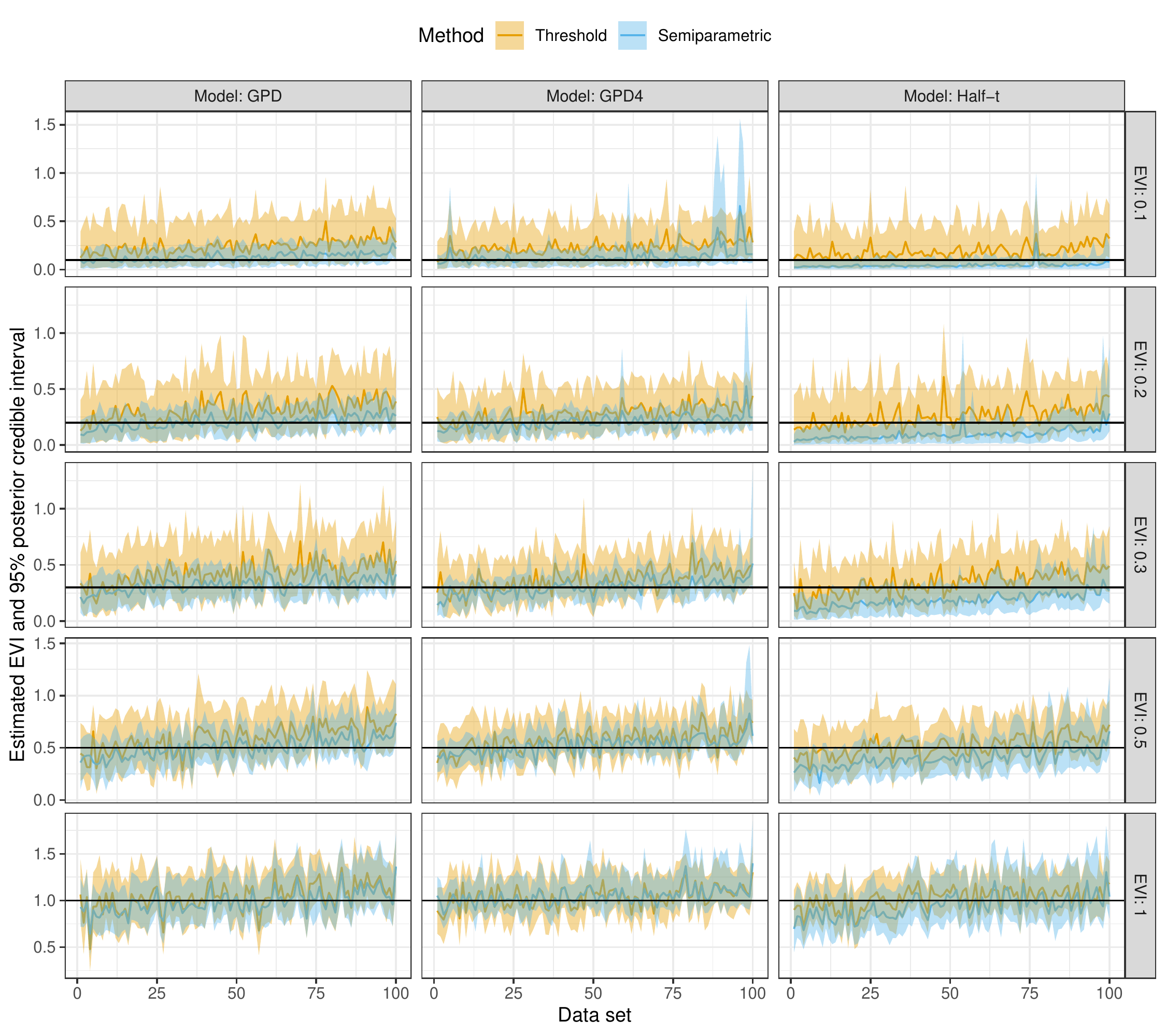}
\caption{A comparison of the 95\% posterior credible intervals for $\xi = \alpha^{-1}$ from the semiparametric and the thresholding methods. For each group, the 100 data sets are arranged in the ascending order of the maximum observation.}
\label{fig:intervals}
\end{figure*}
%
%
%
%
%
%
%
%
%

\section*{Codes}
R package `\texttt{sbde}' can be downloaded from \url{https://CRAN.R-project.org/package=sbde}. Follow the link \url{https://www2.stat.duke.edu/~st118/Codes-FortCollins/} to access R codes along with dataset and auxiliary codes required to reproduce Fort Collins precipitation analysis presented in this article. 
%
%
\newpage

\bibliographystyle{chicago}
\bibliography{MetaBib}

\begin{thebibliography}{}

\bibitem[\protect\citeauthoryear{Adler and Taylor}{Adler and
  Taylor}{2009}]{adler2009random}
Adler, R.~J. and J.~E. Taylor (2009).
\newblock {\em Random fields and geometry}.
\newblock Springer Science \& Business Media.

\bibitem[\protect\citeauthoryear{Alves}{Alves}{2001}]{alves2001location}
Alves, M.~F. (2001).
\newblock A location invariant hill-type estimator.
\newblock {\em Extremes\/}~{\em 4\/}(3), 199--217.

\bibitem[\protect\citeauthoryear{Andrieu and Thoms}{Andrieu and
  Thoms}{2008}]{andrieu2008tutorial}
Andrieu, C. and J.~Thoms (2008).
\newblock A tutorial on adaptive {MCMC}.
\newblock {\em Statistics and Computing\/}~{\em 18\/}(4), 343--373.

\bibitem[\protect\citeauthoryear{Balkema and de~Haan}{Balkema and
  de~Haan}{1974}]{balkema1974residual}
Balkema, A. and L.~de~Haan (1974).
\newblock Residual life time at great age.
\newblock {\em Annals of Probability\/}~{\em 2\/}(5), 792--804.

\bibitem[\protect\citeauthoryear{Banerjee, Gelfand, Finley, and Sang}{Banerjee
  et~al.}{2008}]{banerjee2008gaussian}
Banerjee, S., A.~E. Gelfand, A.~O. Finley, and H.~Sang (2008).
\newblock {G}aussian predictive process models for large spatial data sets.
\newblock {\em Journal of the Royal Statistical Society: Series B (Statistical
  Methodology)\/}~{\em 70\/}(4), 825--848.

\bibitem[\protect\citeauthoryear{Beirlant, Joossens, and Segers}{Beirlant
  et~al.}{2009}]{beirlant2009second}
Beirlant, J., E.~Joossens, and J.~Segers (2009).
\newblock Second-order refined peaks-over-threshold modelling for heavy-tailed
  distributions.
\newblock {\em Journal of Statistical Planning and Inference\/}~{\em 139\/}(8),
  2800--2815.

\bibitem[\protect\citeauthoryear{Carpentier and Kim}{Carpentier and
  Kim}{2015}]{carpentier2015adaptive}
Carpentier, A. and A.~K. Kim (2015).
\newblock Adaptive and minimax optimal estimation of the tail coefficient.
\newblock {\em Statistica Sinica\/}~{\em 25}, 1133--1144.

\bibitem[\protect\citeauthoryear{Castillo}{Castillo}{2012}]{castillo2012extreme}
Castillo, E. (2012).
\newblock {\em Extreme value theory in engineering}.
\newblock Elsevier.

\bibitem[\protect\citeauthoryear{de~Zea~Bermudez and Kotz}{de~Zea~Bermudez and
  Kotz}{2010}]{de2010parameter}
de~Zea~Bermudez, P. and S.~Kotz (2010).
\newblock Parameter estimation of the generalized pareto distribution?part ii.
\newblock {\em Journal of Statistical Planning and Inference\/}~{\em 140\/}(6),
  1374--1388.

\bibitem[\protect\citeauthoryear{Dekkers, Einmahl, De~Haan, et~al.}{Dekkers
  et~al.}{1989}]{dekkers1989moment}
Dekkers, A., J.~Einmahl, L.~De~Haan, et~al. (1989).
\newblock A moment estimator for the index of an extreme-value distribution.
\newblock {\em The Annals of Statistics\/}~{\em 17\/}(4), 1833--1855.

\bibitem[\protect\citeauthoryear{Diaconis and Freedman}{Diaconis and
  Freedman}{1986}]{diaconis1986consistency}
Diaconis, P. and D.~Freedman (1986).
\newblock On the consistency of bayes estimates.
\newblock {\em The Annals of Statistics\/}~{\em 14\/}(1), 1--26.

\bibitem[\protect\citeauthoryear{do~Nascimento, Gamerman, and
  Lopes}{do~Nascimento et~al.}{2012}]{do2012semiparametric}
do~Nascimento, F.~F., D.~Gamerman, and H.~F. Lopes (2012).
\newblock A semiparametric bayesian approach to extreme value estimation.
\newblock {\em Statistics and Computing\/}~{\em 22\/}(2), 661--675.

\bibitem[\protect\citeauthoryear{Durrieu, Grama, Pham, and Tricot}{Durrieu
  et~al.}{2015}]{durrieu2015nonparametric}
Durrieu, G., I.~Grama, Q.-K. Pham, and J.-M. Tricot (2015).
\newblock Nonparametric adaptive estimation of conditional probabilities of
  rare events and extreme quantiles.
\newblock {\em Extremes\/}~{\em 18\/}(3), 437--478.

\bibitem[\protect\citeauthoryear{Embrechts, Kl{\"u}ppelberg, and
  Mikosch}{Embrechts et~al.}{2013}]{embrechts2013modelling}
Embrechts, P., C.~Kl{\"u}ppelberg, and T.~Mikosch (2013).
\newblock {\em Modelling extremal events: for insurance and finance},
  Volume~33.
\newblock Springer Science \& Business Media.

\bibitem[\protect\citeauthoryear{Ghosal, Ghosh, and Ramamoorthi}{Ghosal
  et~al.}{1999}]{ghosal1999posterior}
Ghosal, S., J.~K. Ghosh, and R.~V. Ramamoorthi (1999).
\newblock Posterior consistency of {D}irichlet mixtures in density estimation.
\newblock {\em The Annals of Statistics\/}~{\em 27\/}(1), 143--158.

\bibitem[\protect\citeauthoryear{Ghosal and van~der Vaart}{Ghosal and van~der
  Vaart}{2017}]{ghosal2017fundamentals}
Ghosal, S. and A.~van~der Vaart (2017).
\newblock {\em Fundamentals of nonparametric Bayesian inference}, Volume~44.
\newblock Cambridge University Press.

\bibitem[\protect\citeauthoryear{Gilleland and Katz}{Gilleland and
  Katz}{2011}]{gilleland2011new}
Gilleland, E. and R.~W. Katz (2011).
\newblock New software to analyze how extremes change over time.
\newblock {\em Eos, Transactions American Geophysical Union\/}~{\em 92\/}(2),
  13--14.

\bibitem[\protect\citeauthoryear{Gu, Wang, and Berger}{Gu
  et~al.}{2018}]{gu2018robust}
Gu, M., X.~Wang, and J.~O. Berger (2018).
\newblock Robust gaussian stochastic process emulation.
\newblock {\em The Annals of Statistics\/}~{\em 46\/}(6A), 3038--3066.

\bibitem[\protect\citeauthoryear{Hall and Welsh}{Hall and
  Welsh}{1984}]{hall1984best}
Hall, P. and A.~Welsh (1984).
\newblock Best attainable rates of convergence for estimates of parameters of
  regular variation.
\newblock {\em Annals of Statistics\/}~{\em 12\/}(3), 1079--1084.

\bibitem[\protect\citeauthoryear{Hall and Welsh}{Hall and
  Welsh}{1985}]{hall1985adaptive}
Hall, P. and A.~H. Welsh (1985).
\newblock Adaptive estimates of parameters of regular variation.
\newblock {\em The Annals of Statistics\/}~{\em 13\/}(1), 331--341.

\bibitem[\protect\citeauthoryear{Hill}{Hill}{1975}]{hill1975simple}
Hill, B.~M. (1975).
\newblock A simple general approach to inference about the tail of a
  distribution.
\newblock {\em Annals of Statistics\/}~{\em 3\/}(5), 1163--1174.

\bibitem[\protect\citeauthoryear{Katz, Parlange, and Naveau}{Katz
  et~al.}{2002}]{katz2002statistics}
Katz, R.~W., M.~B. Parlange, and P.~Naveau (2002).
\newblock Statistics of extremes in hydrology.
\newblock {\em Advances in water resources\/}~{\em 25\/}(8-12), 1287--1304.

\bibitem[\protect\citeauthoryear{Kleijn}{Kleijn}{2021}]{kleijn2021frequentist}
Kleijn, B. (2021).
\newblock Frequentist validity of bayesian limits.
\newblock {\em The Annals of Statistics\/}~{\em 49\/}(1), 182--202.

\bibitem[\protect\citeauthoryear{Lenk}{Lenk}{1988}]{lenk1988logistic}
Lenk, P.~J. (1988).
\newblock The logistic normal distribution for bayesian, nonparametric,
  predictive densities.
\newblock {\em Journal of the American Statistical Association\/}~{\em
  83\/}(402), 509--516.

\bibitem[\protect\citeauthoryear{Lenk}{Lenk}{1991}]{lenk1991towards}
Lenk, P.~J. (1991).
\newblock Towards a practicable bayesian nonparametric density estimator.
\newblock {\em Biometrika\/}~{\em 78\/}(3), 531--543.

\bibitem[\protect\citeauthoryear{Leonard}{Leonard}{1978}]{leonard1978density}
Leonard, T. (1978).
\newblock Density estimation, stochastic processes and prior information.
\newblock {\em Journal of the Royal Statistical Society. Series B
  (Methodological)\/}~{\em 40}, 113--146.

\bibitem[\protect\citeauthoryear{Li, Lin, and Dunson}{Li
  et~al.}{2019}]{li2019posterior}
Li, C., L.~Lin, and D.~B. Dunson (2019).
\newblock On posterior consistency of tail index for bayesian kernel mixture
  models.
\newblock {\em Bernoulli\/}~{\em 25\/}(3), 1999--2028.

\bibitem[\protect\citeauthoryear{Li and Linde}{Li and
  Linde}{1999}]{li1999approximation}
Li, W.~V. and W.~Linde (1999).
\newblock Approximation, metric entropy and small ball estimates for gaussian
  measures.
\newblock {\em The Annals of Probability\/}~{\em 27\/}(3), 1556--1578.

\bibitem[\protect\citeauthoryear{MacDonald, Scarrott, Lee, Darlow, Reale, and
  Russell}{MacDonald et~al.}{2011}]{macdonald2011flexible}
MacDonald, A., C.~J. Scarrott, D.~Lee, B.~Darlow, M.~Reale, and G.~Russell
  (2011).
\newblock A flexible extreme value mixture model.
\newblock {\em Computational Statistics \& Data Analysis\/}~{\em 55\/}(6),
  2137--2157.

\bibitem[\protect\citeauthoryear{Markovich}{Markovich}{2007}]{markovich2007nonparametric}
Markovich, N. (2007).
\newblock {\em Nonparametric Analysis of Univariate Heavy-Tailed Data: Research
  and Practice}.
\newblock John Wiley \& Sons, Ltd.

\bibitem[\protect\citeauthoryear{Paulo}{Paulo}{2005}]{paulo2005default}
Paulo, R. (2005).
\newblock Default priors for gaussian processes.
\newblock {\em The Annals of Statistics\/}~{\em 33\/}(2), 556--582.

\bibitem[\protect\citeauthoryear{Pickands}{Pickands}{1975}]{pickands1975statistical}
Pickands, J. (1975).
\newblock Statistical inference using extreme order statistics.
\newblock {\em Annals of Statistics\/}~{\em 3}, 119--131.

\bibitem[\protect\citeauthoryear{Rice}{Rice}{1944}]{rice1944mathematical}
Rice, S.~O. (1944).
\newblock Mathematical analysis of random noise.
\newblock {\em The Bell System Technical Journal\/}~{\em 23\/}(3), 282--332.

\bibitem[\protect\citeauthoryear{Scarrot and MacDonnald}{Scarrot and
  MacDonnald}{2012}]{scarrot2012review}
Scarrot, C. and A.~MacDonnald (2012).
\newblock A review of extreme value threshold estimation and uncertainty
  quantification.
\newblock {\em Statistical Journal\/}~{\em 103}, 33--60.

\bibitem[\protect\citeauthoryear{Schwartz}{Schwartz}{1965}]{schwartz1965bayes}
Schwartz, L. (1965).
\newblock On bayes procedures.
\newblock {\em Zeitschrift f{\"u}r Wahrscheinlichkeitstheorie und verwandte
  Gebiete\/}~{\em 4\/}(1), 10--26.

\bibitem[\protect\citeauthoryear{Snelson and Ghahramani}{Snelson and
  Ghahramani}{2006}]{snelson2006sparse}
Snelson, E. and Z.~Ghahramani (2006).
\newblock Sparse {G}aussian processes using pseudo-inputs.
\newblock In {\em Advances in neural information processing systems}, pp.\
  1257--1264.

\bibitem[\protect\citeauthoryear{Stone}{Stone}{1982}]{stone1982optimal}
Stone, C.~J. (1982).
\newblock Optimal global rates of convergence for nonparametric regression.
\newblock {\em The Annals of Statistics\/}~{\em 10}, 1040--1053.

\bibitem[\protect\citeauthoryear{Tancredi, Anderson, and O'Hagan}{Tancredi
  et~al.}{2006}]{tancredi2006accounting}
Tancredi, A., C.~Anderson, and A.~O'Hagan (2006).
\newblock Accounting for threshold uncertainty in extreme value estimation.
\newblock {\em Extremes\/}~{\em 9\/}(2), 87--106.

\bibitem[\protect\citeauthoryear{Tokdar}{Tokdar}{2007}]{tokdar2007towards}
Tokdar, S.~T. (2007).
\newblock Towards a faster implementation of density estimation with logistic
  {G}aussian process priors.
\newblock {\em Journal of Computational and Graphical Statistics\/}~{\em
  16\/}(3), 633--655.

\bibitem[\protect\citeauthoryear{Tokdar and Ghosh}{Tokdar and
  Ghosh}{2007}]{tokdar2007posterior}
Tokdar, S.~T. and J.~K. Ghosh (2007).
\newblock Posterior consistency of logistic {G}aussian process priors in
  density estimation.
\newblock {\em Journal of Statistical Planning and Inference\/}~{\em 137\/}(1),
  34--42.

\bibitem[\protect\citeauthoryear{van~der Vaart and van Zanten}{van~der Vaart
  and van Zanten}{2008}]{van2008rates}
van~der Vaart, A. and J.~van Zanten (2008).
\newblock Rates of contraction of posterior distributions based on {G}aussian
  process priors.
\newblock {\em The Annals of Statistics\/}~{\em 36\/}(3), 1435--1463.

\bibitem[\protect\citeauthoryear{van~der Vaart and van Zanten}{van~der Vaart
  and van Zanten}{2009}]{van2009adaptive}
van~der Vaart, A.~W. and J.~H. van Zanten (2009).
\newblock Adaptive {B}ayesian estimation using a {G}aussian random field with
  inverse gamma bandwidth.
\newblock {\em The Annals of Statistics\/}~{\em 37\/}(5B), 2655--2675.

\bibitem[\protect\citeauthoryear{Yang and Tokdar}{Yang and
  Tokdar}{2017}]{yang2017joint}
Yang, Y. and S.~T. Tokdar (2017).
\newblock Joint estimation of quantile planes over arbitrary predictor spaces.
\newblock {\em Journal of the American Statistical Association\/}~{\em
  112\/}(519), 1107--1120.

\end{thebibliography}
\end{document}